\newtheorem{theorem}{Theorem}
\newtheorem{lemma}[theorem]{Lemma}
\newcommand{\OPT}{\mathrm{OPT}}
\newcommand{\polylog}{\text{polylog}(n)}
\renewcommand{\epsilon}{\varepsilon}
\title{Lift-and-Project Integrality Gaps for Santa Claus}
\newcommand{\ifanonymous}[2]{%
#1%
}
\date{}
\begin{document}
\maketitle
\thispagestyle{empty}

\begin{abstract}This paper is devoted to the study of the MaxMinDegree Arborescence (MMDA) problem in layered directed graphs of depth $\ell\le O(\log n/\log \log n)$, which is an important special case of the Santa Claus problem. Obtaining a polylogarithmic approximation for MMDA in polynomial time is of high interest as it is a necessary condition to improve upon the well-known 2-approximation for makespan scheduling on unrelated machines by Lenstra, Shmoys, and Tardos~[FOCS'87].

The only way we have to solve the MMDA problem within a polylogarithmic factor is via an elegant recursive rounding of the $(\ell-1)^{th}$ level of the Sherali-Adams hierarchy, which needs time $n^{O(\ell)}$ to solve. However, it remains plausible that one could obtain a polylogarithmic approximation in polynomial time by using the same rounding with only $1$ round of the Sherali-Adams hierarchy.

As a main result, we rule out this possibility by constructing an MMDA instance of depth $3$ for which an integrality gap of $n^{\Omega(1)}$ survives $1$ round of the Sherali-Adams hierarchy. This result is tight since it is known that after only $2$ rounds the gap is at most polylogarithmic on depth-3 graphs. Second, we show that our instance can be ``lifted'' via a simple trick to MMDA instances of any depth $\ell\in \Omega(1)\cap o(\log n/\log \log n)$ (the whole range of interest), for which we conjecture that an integrality gap of $n^{\Omega(1/\ell)}$ survives $\Omega(\ell)$ rounds of Sherali-Adams. We show a number of intermediate results towards this conjecture, which also suggest that our construction is a significant challenge to the techniques used so far for Santa Claus.

The main inspiration of this work stems from a beautiful construction by Li and Laekhanukit [SODA'22] used in the context of the Directed Steiner Tree problem. Inspired by their construction, we build an MMDA instance of depth $3$ which has interesting properties, and we show how to use the labeling scheme underlying the construction in a novel way to quantify non-trivial correlations between different edges of the graph. Our techniques also seem relevant in the world of Directed Steiner Trees, so we are hopeful they will transfer.

\end{abstract}
\newpage
\section{Introduction}

This paper is devoted to the study of the Santa Claus problem (also known as max-min fair allocation). In this problem, there are gifts (or resources) that need to be assigned
to children (or players). Each gift $j$ has unrelated values $v_{ij}$ for each child $i$. The goal is to assign each gift $j$ to a child $\sigma(j)$ such that we maximize the utility of the least happy child, that is, $\min_i \sum_{j : \sigma(j) = i} v_{ij}$.
The dual of the problem, where one has to minimize the maximum
instead of maximizing the minimum is the problem of makespan
minimization on unrelated parallel machines. Both variants 
form well-known open problems in approximation algorithms~\cite{bansal2017scheduling, schuurman1999polynomial, williamson2011design,Woeginger02}. For the makespan problem, there is a well-known polynomial-time $2$-approximation by Lenstra, Shmoys, and Tardos \cite{lenstra1990approximation}, which has not been improved since, and it is only known that the problem is NP-hard to approximate within a factor better than $3/2$ \cite{lenstra1990approximation}. For the Santa Claus problem, the gap is rather unsatisfactory: polynomial-time algorithms can only guarantee polynomial-factor approximations, and it is only known that the problem is NP-hard to approximate within a factor better than $2$ \cite{bezakova2005allocating,lenstra1990approximation}. Closing the gap between upper and lower-bounds for either one of them is considered an important open question. In fact, it was recently proven in \cite{bamas2024santa} that obtaining a $(2-1/\alpha)$-approximation for the makespan problem implies the existence of an $(\alpha+\epsilon)$-approximation for the Santa Claus problem (for any fixed $\epsilon>0$), at the cost of a polynomial blow-up in the running time. The authors of \cite{bamas2024santa} also show that the converse is true in some significant special case. 

After several attempts at the Santa Claus problem (see e.g. \cite{asadpour2010approximation, golovin2005max,bezakova2005allocating}), the state-of-the-art techniques culminated in the remarkable algorithm by Chakrabarty, Chuzhoy, and Khanna \cite{chakrabarty2009allocating}, which gives an $n^{\epsilon}\cdot \text{polylog}(n)$-approximation in time $n^{O(1/\epsilon)}$, for any $\epsilon=\Omega(1/\log n)$. In particular, for any fixed $\epsilon>0$, this guarantees a $n^{\epsilon}$-approximation in polynomial time, and a $\polylog$-approximation in time $n^{O(\log n/\log \log n)}$. It is not known how to obtain the $\polylog$ guarantee in polynomial time, and the fact is that we do not really understand why.

To give more context, we elaborate on the algorithm in \cite{chakrabarty2009allocating}. It has two main conceptual steps: (i) an intricate reduction to some arborescence problem in layered graphs of depth $O(1/\epsilon)$, and (ii) solving the arborescence problem using lift-and-project hierarchies. Interestingly, step (i) runs in polynomial time and looses a factor $n^{\epsilon}\cdot \polylog$ which is unavoidable, and step (ii) runs in time $n^{O(1/\epsilon)}$, but looses only a $\polylog$ factor.

If we fix $\epsilon= \Theta(\log \log n/\log n)$ in the algorithm, then at the cost of loosing $\polylog$ factors, step (i) reduces general instances of Santa Claus to a slight generalization of the \textit{MaxMinDegree Arborescence problem} (MMDA problem later on). In this problem, we are given a \textit{layered} directed graph $G=(V=L_0\dot \cup L_1\dot \cup \ldots \dot \cup L_\ell,E)$ of size $n$, in which edges can only exist between two consecutive layers $L_i$ and $L_{i+1}$, and oriented from $L_i$ to $L_{i+1}$. There are three types of vertices, one special vertex $s$ called the \textit{source}, some \textit{sinks}, and the rest of the vertices. Further, the depth $\ell$ of the graph has to be at most $ O(\log n/\log \log n)$ (note that it can very well be constant, so the range of interest is essentially $1\le \ell\le O(\log n/\log \log n)$). The goal is then to find an arborescence rooted at the source (i.e. a tree in which all edges are oriented away from the source), such that at each non-sink vertex $u$ selected in the arborescence, the out-degree is at least $k_u/\alpha$, where $\alpha$ is the approximation rate to be minimized and $k_u$ is an integer defined for each vertex $u$. See Figure \ref{fig:MMDA_example} for an example. Finally, we assume that $k_u\ge n^{\Omega(1/\ell)}$ for all $u$, a condition that holds in the instances arising from the reduction of Chakarbarty et al.

\begin{figure}
    \centering
    \includegraphics{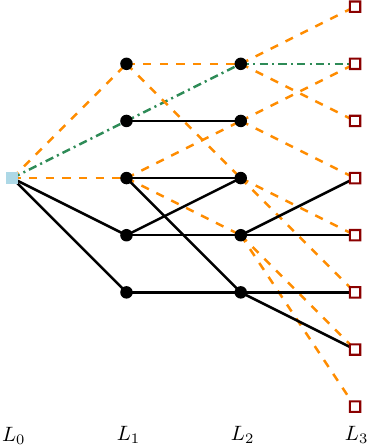}
    \caption{An MMDA instance of depth $3$ with $k_v=2$ for all non-sink vertices. The source is the blue vertex in $L_0$, and the set of sinks is $L_3$. All edges are oriented from left to right. The set of orange edges (dashed) forms an optimum feasible integral solution. The set of green edges (dashed+dotted) is feasible and gives out-degree $1$ to all its non-sink vertices, so it is a $2$-approximate solution.}
    \label{fig:MMDA_example}
\end{figure}

Note that in those instances there is always a trivial polynomial-time $(\max_u k_u)$-approximation by simply selecting a single directed path from the source to a sink, which gives out-degree $1$ to all its vertices. But in our instances, this automatically looses a factor $n^{\Omega(1/\ell)}$ which is $\polylog^{\omega(1)}$ as soon as $\ell=o(\log n/\log \log n)$.

Once we reached this very special case of Santa Claus\footnote{Referring to MMDA as a ``special case'' of Santa Claus is a slight abuse of language, motivated by the reduction above and the fact that there exists an approximation-preserving reduction from MMDA to Santa Claus (see e.g. \cite{bamas2023better,bateni2009maxmin} for details). Formally, it is easy to prove that if there exists an $\alpha$-approximation running in time $T(n)$ for the Santa Claus problem, then there exists an $\alpha$-approximation running in time $T(O(n))+O(n)$ for the MMDA problem; where $n$ refers to the instance size in both problems.}, no further simplification is known. Indeed, the reduction in \cite{chakrabarty2009allocating} would essentially try two things on these layered instances: (i) For all vertices $u$ such that $k_u\le \polylog$, replace $k_u$ by $k'_u=1$ to simplify the instance, and (ii) make $\polylog$ copies of the non-source vertices and arrange them in a layered graph. Here, it is easy to see that (i) does not do anything on our instances, and (ii) does not help either since the graph already has a layered structure (formally the extra copies of each vertex placed in the wrong layer will be unreachable from the source, hence useless). For more details, we refer the reader to the arxiv version of \cite{chakrabarty2009allocating}.

Because of this reduction, the MMDA problem already attracted attention as a prominent special case of the Santa Claus problem, and also as a problem of its own interest (\cite{bateni2009maxmin,bamas2023better}). State-of-the-art algorithms for the problem guarantee a $\polylog$-approximation in time $n^{O(\ell)}$ (see \cite{bateni2009maxmin,chakrabarty2009allocating}), and it is only known that the problem is APX-hard already when $\ell=2$ \cite{bamas2023better}. More recently, the $\polylog$-approximation has even been improved to $\text{polyloglog}(n)$ in the case when $k_u=k$ for all $u$ \cite{bamas2023better}. As explained above, when $k_u\le \polylog$ for all $u$, it is trivial to obtain a $\polylog$ approximation in polynomial time. Otherwise, the most successful algorithms rely on using $\ell-1$ rounds of a certain relaxed version of the Sherali-Adams hierarchy, which we dub the \textit{path hierarchy}. On a high level, these algorithms always proceed layer by layer, starting at the source until reaching the sinks. At layer $i$, for each vertex $v$ which was selected, the algorithm consider the path $p_v$ that was selected from the source to $v$. Then, the path is continued by one more edge using the distribution of edges obtained after conditioning by the event that the path $p_v$ was selected. This justifies the term \textit{path hierarchy}, as only a particular type of conditioning is required: one can write a 
hierarchy of relaxations which only contain a relevant subset of the Sherali-Adams constraints, and which has size equivalent to the number of directed paths of length $\ell$ in the graph. In the worst-case, this is $n^{O(\ell)}$, hence the running time of these algorithms. One can mention that there is also an alternative algorithm (purely combinatorial), \ifanonymous{which is an adaptation of the recursive greedy algorithm in \cite{charikar1999approximation} for Directed Steiner Tree, and guarantees an $O(\ell)$-approximation (\cite{lars_private}) in the case where $k_u=k$ for all $u$, which can be as high as a polylogarithmic approximation.}{which is an easy adaptation of the recursive greedy algorithm in \cite{charikar1999approximation} for Directed Steiner Tree and also gives a polylogarithmic approximation.} However, this algorithm also runs in quasi-polynomial time because it recurses on all possible children of a vertex, and the recursion depth is the depth of the graph $\ell$. This also gives a running time essentially equal to the number of directed paths in the instance.

However, obtaining these guarantees in polynomial time has remained an elusive goal, and it is the main reason why obtaining a $\polylog$-approximation for Santa Claus in polynomial time has been notoriously challenging. There remains a huge gap in our understanding of these techniques. For instance, we do not know how to answer the following basic question:

\begin{center}
    \textit{Is $1$ round of Sherali-Adams enough to solve the MMDA problem within a $O(\polylog)$-factor, regardless of the depth $\ell$?}
\end{center}
Indeed, one could imagine using the round-and-condition algorithm, but only conditioning by the last edge used to reach $v$. It is quite remarkable that the rounding algorithm itself runs in polynomial time, and the super-polynomial running time is only due to the fact that one needs to compute a feasible solution to the hierarchy first. Hence the algorithm which conditions only on one edge runs in polynomial time, and previous works do not rule out the possibility that this could give a poly-logarithmic guarantee. The missing answer to this question is arguably at the heart of our more general misunderstanding of the problem. In fact, we argue later that a positive answer is even plausible if one looks at past works (especially the popular restricted assignment case of Santa Claus). More generally, the issue is that we do not know what a ``difficult'' instance could look like. Hence we also consider a more informal question:
\begin{center}
    \textit{Can the current lift-and-project techniques be used to obtain a $\polylog$-approximation in polynomial-time for the MMDA problem?}
\end{center}

We note that even if one shows that there is an integrality gap surviving after $t=\omega(1)$ rounds of the above hierarchy, this does not necessarily imply that the algorithms above would need to run in time $n^{\Omega(t)}$. Indeed, unlike the general Sherali-Adams hierarchy, the relaxed hierarchy only has a size equivalent to the number of paths of length $t$ in the graph, which could be $n^{O(1)}$ on such an instance. Hence, if one wants to build a meaningful lower bound for these algorithms, the instance must demonstrate an integrality gap that resists many rounds of the relaxed hierarchy, while at the same time having a complex structure to ensure a superpolynomial number of paths. This last condition seems to add some extra difficulty, since all interesting instances exhibited in the literature have a polynomial number of paths (see for instance Section 7 of the arxiv version of \cite{bamas2023better} or some works on the related Directed Steiner Tree problem \cite{halperin2007integrality}).


\subsection{Our results}

We give new constructions which answer the questions above. In all the results about the Sherali-Adams hierarchy, we implicitly refer to the Sherali-Adams hierarchy applied on the naive relaxation of the problem called the \textit{assignment LP} (which will be formally defined in Section \ref{sec:prelim}, along with the path hierarchy). 

\subsubsection{Lower bounds for MMDA via hierarchies}
This part constitutes our main contributions. Here, we focus on MMDA instances in layered graphs. The \textit{depth} of a layered graph is the length of the longest directed path in the graph. We prove the following in Section \ref{sec:sherali1round}.
\begin{theorem}[Main result]
    \label{thm:sherali_1round_gap}
    For any $n$ big enough, there exists a layered graph $G$ of depth $3$ and size $\Theta(n)$ such that $k_u=n^{\Omega(1)}$ for all $u$, and such that the integrality gap of $1$ round of the Sherali-Adams hierarchy is at least $n^{\Omega(1)}$.
\end{theorem}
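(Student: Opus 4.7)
The plan is to exhibit a randomized depth-$3$ MMDA instance in the spirit of the Li--Laekhanukit construction for Directed Steiner Tree, and then prove separately (i) that every integral arborescence has min out-degree bounded by $k/n^{\Omega(1)}$, where $k$ is the target, while (ii) the natural symmetric fractional solution remains feasible in the assignment LP even after conditioning on a single variable being set to $1$.

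\textbf{Construction.} I would take a layered graph $G$ with $L_0 = \{s\}$, $|L_1| \approx n$, $|L_2| \approx n$, and a sink layer $L_3$ of polynomial size. The target degree is $k = n^{\Omega(1)}$. Vertices in $L_1$ and $L_2$ are assigned \emph{labels} from an alphabet of size $n^{\Theta(1)}$, the edges from $s$ to $L_1$ are complete, edges from $L_1$ to $L_2$ are random $k$-regular (in the $L_1$-side), and the edges from $L_2$ to $L_3$ are determined by the labeling scheme: a sink $t$ is adjacent to a set of $L_2$-vertices only if their labels are jointly ``compatible'' in the sense dictated by the labeling. The crucial design goal is that two $L_2$-vertices can only share many sinks when their labels collide in a controlled, rare way.

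\textbf{Integral upper bound on $\OPT$.} Fix any candidate arborescence $T$. The source selects $k$ children in $L_1$, each of which must select $k$ vertices in $L_2$, giving roughly $k^2$ selected $L_2$-vertices which must collectively cover $k^3$ sinks almost disjointly. I would combine a birthday-style counting with a structural analysis of the labeling: sinks are shared only between label-compatible $L_2$-vertices, so enforcing near-disjoint coverage forces the selected $L_2$-vertices to have ``spread'' labels, and this in turn clashes with the random sparse edges between $L_1$ and $L_2$. Tuning the parameters so that this clash is inevitable yields a min-degree loss of $n^{\Omega(1)}$ somewhere along $T$.

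\textbf{Fractional feasibility and one round of Sherali--Adams.} For the LP lower bound I would first verify that the uniform fractional solution, which puts equal mass on each $(s,v)$ edge and then propagates mass symmetrically along the random $L_1 \to L_2$ and label-induced $L_2 \to L_3$ edges, satisfies the assignment LP with out-degree exactly $k$ everywhere. For $1$ round of Sherali--Adams, by symmetry one only needs to analyze the conditional LP obtained after fixing a single edge variable to $1$, at each of the three possible layer positions. The claim is that revealing one edge reveals at most one or two labels, which by the spreading property of the labeling scheme perturbs the marginals of edges far from the conditioned one by only a multiplicative $1 \pm n^{-\Omega(1)}$ factor; a uniform extension on the remainder of the instance therefore remains LP-feasible with min-degree $k/\mathrm{polylog}(n)$.

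\textbf{Main obstacle.} The delicate step is the one-edge-conditioning analysis: one must quantify precisely how conditioning on, say, an $(s,v)$ edge alters the conditional distribution of edges two layers below, and show that the label-driven correlations injected by the construction are strong enough to kill the integral solution but weak enough to leave the LP feasible. This is exactly the threshold where the prior $2$-round upper bound would trigger, and the entire tension of the proof lives in pushing the construction up to, but not past, that boundary. I expect this step to require an explicit computation of the conditional marginals in terms of label-collision probabilities, with concentration or union-bound arguments over the random construction to guarantee that the gap survives globally.
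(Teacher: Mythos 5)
Your high-level plan (label-based construction inspired by Li--Laekhanukit, a counting argument against integral solutions, a symmetric fractional solution that survives one conditioning) matches the paper's, and you correctly identify the one-edge-conditioning analysis as the crux. However, the key claim in your SA(1) sketch --- that conditioning on a single edge perturbs the remaining marginals only by a multiplicative $1 \pm n^{-\Omega(1)}$ factor, so a ``uniform extension'' remains feasible --- is exactly wrong for the edges adjacent to the conditioned edge, and this is precisely the paper's recorded ``failed attempt'' with the product distribution. If $e_1 = (s,v)$ is the conditioned edge, the covering constraint at $v$ after conditioning on $\{e_1\in A\}$ requires $\sum_{e\in\delta^+(v)}\mathbb{P}[e\in A\mid e_1\in A]\ge k_v$, whereas the unconditional sum is only $\sum_{e\in\delta^+(v)}x_e = k_v\cdot x_{e_1} = k_v\cdot n^{-\Omega(1)}$; so the conditional marginals on $\delta^+(v)$ must blow up by a polynomial factor $1/x_{e_1}$, not drift by $1\pm n^{-\Omega(1)}$. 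Any near-independent distribution cannot deliver this. Conversely, simply rounding layer by layer (the round-and-condition process) overshoots and violates the packing constraint at the sinks --- the paper's second failed attempt. The missing idea is the \emph{shadow distribution}: each edge $e'$ is independently a shadow with probability $x_{e'}$, and each shadow independently triggers a ``subtree'' sampled from a carefully built subtree LP solution $x^{(e')}$ rooted at $e'$; the active set $A$ is the union of the triggered subtrees. This introduces exactly one layer of correlation: conditioning on $\{e_1\in A\}$ makes $e_1$ a shadow with constant probability, whose triggered subtree lifts the marginals on $\delta^+(v)$ to order $k_v$, while the packing constraints at the sinks survive via the property $x_e\le s_e\le O(x_e)$ and a binomial-coefficient estimate. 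This distribution, together with the abstraction of SA via distributions (Theorem~\ref{thm:sherali_distributions}), is the heart of the proof and is absent from your outline.

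Two further remarks. First, the paper's construction is fully explicit and deterministic: $L_1$ and $L_3$ vertices are labeled by $\rho m$-subsets of $[m]$, $L_2$ vertices by $2\rho m$-subsets, and edges are given by set inclusion. Your proposal of a random $k$-regular bipartite graph between $L_1$ and $L_2$ would destroy the strong within-layer symmetry (transitivity under permutations of $[m]$) that the proof exploits repeatedly to compute the shadow-distribution marginals in closed form and to argue about $|T_1^{(v)}|$, $|T_2^{(u)}|$ in the integral lower bound; it is an unverified substitution that would at minimum require new concentration arguments. Second, SA(1) demands a single consistent distribution over integral assignments, not merely a separate conditional LP point for each conditioning; verifying conditional feasibility for each fixed edge in isolation, as you describe, does not by itself yield a valid SA(1) solution.
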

Note that in the above theorem, we have a lower bound against the general Sherali-Adams hierarchy which is stronger than the path hierarchy. The number of rounds is best possible, since it is known by previous works that $2$ rounds already bring the gap down to $O(\polylog)$ in graphs of depth $3$. The integrality gap is also essentially best possible, since it can never be more than $\max_{u\in V}\{k_u\}$ (remember the trivial algorithm that selects a single directed path).

As a secondary result, we show how the instance of Theorem \ref{thm:sherali_1round_gap} can be ``lifted'' in an elementary way to obtain the following theorem.
\begin{theorem}
    \label{thm:sherali_many_round}
    There exists some absolute constant $c>0$ such that for any $n>c$, and any $\ell\in  \Omega(1)\cap o(\log\log n/\log n)$, there exists a layered graph $G_\ell$ of size $\Theta(n)$ and depth $\ell$ such that the integrality gap of the path hierarchy is still at least $n^{\Omega(1/\ell)}$ after $\ell/c$ rounds.
\end{theorem}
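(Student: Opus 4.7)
The plan is to stack $m := \Theta(\ell)$ copies of the depth-$3$ hard instance $H$ from Theorem~\ref{thm:sherali_1round_gap}, each built on $N := \Theta(n^{3/\ell})$ vertices, so that each copy has $k_u \geq n^{\Omega(1/\ell)}$ and integrality gap $n^{\Omega(1/\ell)}$ surviving $1$ round of Sherali-Adams. The construction is recursive: set $C_0 := H$; for each sink $s$ of each copy at level $i-1$, attach a fresh copy of $H$ whose source is identified with $s$, yielding the copies at level $i$. After $m-1$ iterations the resulting layered graph $G_\ell$ has depth $3m = \Theta(\ell)$ and its sinks are the sinks of the bottom-level copies. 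Choosing $N$ and the sink fan-out appropriately makes the total size $\Theta(n)$.

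For the fractional side, tensor the $H$-LP solutions across copies (interpreting each as a distribution over source-to-sink paths in $H$ and multiplying along descending chains) to obtain a feasible fractional solution of $G_\ell$'s assignment LP with out-mass $k_u$ at every non-sink vertex. For the integral side, any arborescence in $G_\ell$ must select a sub-arborescence inside each copy of $H$ it enters, so by the gap of $H$ some vertex in each copy falls short of its demand by a factor $n^{\Omega(1/\ell)}$. To upgrade this to survival under $\ell/c$ rounds of the path hierarchy for a suitable constant $c$, consider conditioning on any directed path $P$ from the source of length $t \leq \ell/c$. Because each copy has depth $3$ and the copies sit along descending chains, $P$ intersects at most $\lceil t/3 \rceil + 1 \leq \ell/(3c) + 1$ copies. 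Choosing $c$ to be a sufficiently large constant leaves at least one copy of $H$ untouched below the endpoint of $P$. By the tensor structure the residual fractional solution on this untouched copy is exactly the original $H$-LP solution (with value $k_u$), and any integral completion must traverse that copy and lose a factor $n^{\Omega(1/\ell)}$ there by Theorem~\ref{thm:sherali_1round_gap}.

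The main obstacle is the formal verification that the tensored object is a feasible point of the $\ell/c$-round path hierarchy for $G_\ell$. The relevant consistency and out-degree constraints factor across copies and, within each copy, reduce to constraints on in-copy paths of length up to $3$. This requires showing that the $H$-LP solution of Theorem~\ref{thm:sherali_1round_gap} extends to a valid solution of $H$'s own path hierarchy at those in-copy levels; the $1$-round SA guarantee is the key input that handles the partial conditioning inside the copy containing the endpoint of $P$, while untouched copies automatically satisfy the full set of hierarchy constraints via the base LP. Once this extension is carried out, the size and parameter bookkeeping reduces to a direct computation.
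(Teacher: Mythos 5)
Your construction is fundamentally different from the paper's, and it has a genuine gap that I believe cannot be patched within the framework you describe.

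The paper does \emph{not} stack copies of the depth-$3$ instance. Instead, it refines the set-system labeling itself: the lifted instance $G_{n,\ell}^{(\rho)}$ has a single continuous expanding/collapsing phase where labels grow by $\epsilon\rho m$ elements per layer (with $\ell=3/\epsilon$). This gradual transition is exactly what makes the per-layer ratios $\gamma_i$ small enough that the path-counting bound (Lemma~\ref{lem:helper_lemma_path_hierarchy}) holds for all pairs of layers at distance up to $\xi\ell$, which in turn is what certifies the lifted packing constraint~\eqref{eqLP:flowLPcapacity2}.

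Your stacking proposal breaks at precisely this point. Constraint~\eqref{eqLP:flowLPcapacity2} at the second level of the path hierarchy already includes the in-copy case where $p=(e_1)$ with endpoint $w\in L_1$ of the root copy and $v\in L_3$ of the same copy with $S_v=S_w$. The number of length-$3$ paths from the source through $w$ to $v$ is the number of $L_2$ labels containing $S_w$, namely $\binom{(1-\rho)m}{\rho m}$, while the tensored/product value gives $y(q)/y(p)=\gamma_1\gamma_2=\binom{2\rho m}{\rho m}^{-1}$. Hence
\begin{equation*}
\frac{\sum_{q\in I(v)\cap D(p)} y(q)}{y(p)}\ \geq\ \binom{(1-\rho)m}{\rho m}\binom{2\rho m}{\rho m}^{-1}\ =\ n^{\Omega(1)}\ \gg\ 1\ ,
\end{equation*}
so the constraint is violated by a polynomial factor. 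This is not incidental: it is exactly the reason why the depth-$3$ instance does not survive two rounds, and stacking fresh copies re-introduces the same forbidden configuration inside every copy. Your statement that ``the $H$-LP solution extends to a valid solution of $H$'s own path hierarchy at those in-copy levels'' is therefore false for levels $\geq 2$, and the $1$-round SA guarantee of Theorem~\ref{thm:sherali_1round_gap} cannot supply what is missing because a $1$-round SA solution is a strictly weaker object than a feasible point at level $2$ of the path hierarchy. The paper avoids this issue entirely because in $G_{n,\ell}^{(\rho)}$ the label sizes move by only $\epsilon\rho m$ per step, so the analogous path count between layers $i$ and $j$ is a multinomial that is dominated by $\prod_{k=i}^{j-1}1/\gamma_k$ whenever $j-i\leq\xi\ell$ (Appendix~\ref{app:number_paths}, Case~1), a cancellation your construction cannot reproduce because it ``resets'' the ground set at every copy boundary.

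A secondary, more easily repaired issue is that the untouched-copy argument you sketch conflates integrality-gap reasoning with the task of verifying path-hierarchy feasibility. The hierarchy constraints are local algebraic identities about the chosen $(y(p))$ values; feasibility is not implied by the existence of a copy ``below'' the conditioning path, and your proposal never exhibits a concrete $(y(p))$ that satisfies~\eqref{eqLP:flowLPdemand2}--\eqref{eqLP:flowLPdemandroot2}. But the violation of~\eqref{eqLP:flowLPcapacity2} described above is the substantive obstacle: no choice of $(y(p))$ satisfying the covering constraint~\eqref{eqLP:flowLPdemand2} within a depth-$3$ copy can simultaneously satisfy~\eqref{eqLP:flowLPcapacity2}, since the two constraints together would certify a $\polylog$-approximate integral solution in that copy, contradicting Lemma~\ref{lem:no_integral_sol}.
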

To shed some light on this second result, we note that the path hierarchy already enforces some strong constraints, which makes it non-trivial to obtain this statement. For instance, past works imply that if an edge $e=(u,v)$ remains in the support of the solution at level $t$ of the path hierarchy, then there must exist an integral arborescence rooted at $v$ and satisfying the degree requirements (up to $(\log n)^{O(1)}$ factors) for all vertices at distance at most $t-1$ from $v$ (i.e. there exists a feasible integral solution up to depth $t-1$).

It will be easy to verify that \textit{all} edges remain in the support after $\ell/c$ rounds in the proof of Theorem \ref{thm:sherali_many_round}. This implies that our instance already has the non-trivial property that it contains an integral solution of depth $\Omega(\ell)$ rooted at $v$ for every vertex $v$. It is not obvious to see this from the construction, and it shows that only looking at local parts of the instance is not enough to rule out the existence of an integral solution.

\subsubsection{Relevance of Theorem \ref{thm:sherali_1round_gap}}
In this part, we introduce secondary results which further motivate the significance of our main result, Theorem \ref{thm:sherali_1round_gap}. \ifanonymous{The results in this part are due to Lars Rohwedder \cite{lars_private2}. }{}The proofs and techniques of these are fairly standard and we defer them to Appendix \ref{sec:restricted}. We consider in this part the restricted assignment case of the Santa Claus problem, which is a well-known case where $v_{ij}\in \{0,v_j\}$ for all players $i$ and resources $j$ (see further related works section for references on that special case). In Appendix \ref{sec:restricted}, we show the following.
\begin{theorem}\ifanonymous{\cite{lars_private2}}{}
\label{thm:sherali_restricted_lowerbound}
    For any $n\in  \mathbb N$ and any $\Omega(1/n)\le \epsilon\le 1/3$, there exists an instance of size $n$ of the restricted assignment case such that the integrality gap after $\lfloor\epsilon n\rfloor$ rounds of the Sherali-Adams hierarchy is at least $\Omega(1/\epsilon)$.
\end{theorem}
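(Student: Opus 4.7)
The plan is to exhibit a simple restricted-assignment instance whose integer optimum is at most $1$, yet for which the Sherali-Adams hierarchy applied to the assignment LP certifies a fractional target of $\Omega(1/\epsilon)$ even at level $\lfloor \epsilon n \rfloor$. The SA solution will be written as the moment sequence of an honest probability distribution over integer $\{0,1\}$-assignments, which makes all SA consistency, non-negativity, and product constraints hold automatically; it then suffices to verify that the LP constraints hold in conditional expectation under every conditioning of size at most $\lfloor\epsilon n\rfloor$.

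The instance has $n$ players $p_1,\ldots,p_n$, together with $n-1$ \emph{big} resources of value $V=1/\epsilon$ each, assignable to every player, and one \emph{private} resource of value $1$ for each $p_i$, assignable only to $p_i$. Since there are only $n-1$ big resources for $n$ players, pigeonhole forces at least one player to receive no big resource in any integer assignment, so $\OPT_{\mathrm{int}}\le 1$. To build the SA solution, sample $\pi:[n-1]\to[n]$ i.i.d.\ uniform, let $\mathcal D$ be the distribution that assigns big resource $j$ to player $\pi(j)$ and each private resource to its owner, and set $y_{U,V}:=\Pr_{\mathcal D}[x_{ij}=1\ \forall (i,j)\in U,\ x_{ij}=0\ \forall (i,j)\in V]$ for every pair of disjoint sets of (player, resource) pairs. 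The resource constraints $\sum_i x_{ij}\le 1$ hold pointwise under $\mathcal D$, so they lift to all SA levels trivially.

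The crux is checking the player constraint in each conditioning. For every SA conditioning $(U,V)$ with $|U|+|V|\le \lfloor \epsilon n \rfloor$ and each non-fixed player $p_i$, the constraint reads $\sum_j v_j\,y_{U\cup\{(i,j)\},V}\ge T\cdot y_{U,V}$, which, after dividing by $y_{U,V}$, reduces to the statement that the conditional expected value of $p_i$ under $\mathcal D$ is at least $T$. By independence of the $\pi(j)$'s, the worst case for $p_i$ is that every event in $U\cup V$ sits on a distinct resource and kills her contribution from it: a positive event $\pi(j)=i^*$ with $i^*\ne i$ or a negative event $\pi(j)\ne i$ both force $\mathbb 1[\pi(j)=i]=0$. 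The remaining resources still contribute $1/n$ each, so the conditional expected number of big resources received by $p_i$ is at least $(n-1-r)/n$ with $r=|U|+|V|$, giving conditional value at least $(n-1-r)/(n\epsilon)+1$. Choosing $T=1/(2\epsilon)$ (or any other $\Omega(1/\epsilon)$ target) and using $\epsilon\le 1/3$ together with $\epsilon\ge c/n$ for a suitable constant $c$, this quantity exceeds $T$, so the SA constraint holds. Combined with $\OPT_{\mathrm{int}}\le 1$, we obtain an integrality gap of at least $T=\Omega(1/\epsilon)$.

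The only delicate point is the handling of the negative part $V$ of the conditioning, which breaks the symmetry of $\mathcal D$ and could in principle concentrate lost probability mass on one unlucky player. Working with an i.i.d.\ $\pi$ rather than a uniformly random injection is what makes the analysis clean: positive events on a resource $j$ merely fix $\pi(j)$ and leave all other $\pi(j')$ untouched, while negative events only enlarge the set of allowed images of a single $\pi(j)$, so the conditional loss to any fixed player accumulates additively and grows linearly in the number of events. A constant-factor slack between the chosen target and the unconditional expected player value then absorbs this loss without changing the order of magnitude of the final gap.
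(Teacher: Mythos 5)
Your proposal is correct and is essentially the same construction as the paper's: pigeonhole on a near-perfect big-to-player assignment forces $\OPT_{\mathrm{int}}$ down, while a random spread of big resources survives conditioning because each conditioning event can only steer $O(1)$ big resources away from a given player. The one genuine (if minor) variation is that you sample the big-resource assignment i.i.d.\ rather than as a uniform random matching: this makes the $\pi(j)$'s independent, so each of the $\le \lfloor\epsilon n\rfloor$ conditioning events can be charged to at most one resource directly, and you can treat negative events head-on; the paper instead uses a matching and first argues (by an averaging/convexity step) that negative conditionings can be replaced by positive ones before running the counting argument. Both give the same $\Omega(1/\epsilon)$ bound after $\lfloor \epsilon n\rfloor$ rounds.
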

This implies that a polynomial integrality gap survives a polynomial number of rounds in the Sherali-Adams hierarchy (take for instance $\epsilon=1/\sqrt{n}$ in the above). However, the instances that we use to prove the above theorem do not survive the reduction of Chakrabarty et al., i.e. the reduction will find a way to simplify the instance. Using standard ideas from the restricted assignment, we can even show the following in Appendix \ref{sec:restricted}.

\begin{theorem}\ifanonymous{\cite{lars_private2}}{}
\label{thm:sherali_restricted_upperbound}
    There exists a polynomial-time algorithm which transforms instances of the restricted assignment to instances for which the integrality gap of $1$ round of Sherali-Adams is at most $O(1)$. Furthermore, the transformation looses only a constant-factor in the approximation.
\end{theorem}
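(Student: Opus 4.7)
The natural approach is to apply a standard preprocessing pipeline for restricted assignment (in the spirit of Bansal--Sviridenko) and then directly analyse $1$-round Sherali-Adams on the resulting canonical instance. Given a guess $T$ of $\OPT$ obtained by binary search, the transformation classifies each job $j$ as \emph{big} if $v_j \ge T/2$ and \emph{small} otherwise, rounds the value of every big job up to exactly $T$ (losing a factor $2$), and discards all jobs of value below $T/n^2$ (losing an additive $1/n$ fraction of $T$). After these steps, every feasible solution serves each machine either by a single big job or by a bundle of small jobs summing to at least a constant fraction of $T$, and the instance is canonical in the sense that big jobs are ``all-or-nothing''.

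The integrality gap analysis then decomposes any $1$-round Sherali-Adams solution into a big-job component and a small-job component, each of which can be rounded losing only a constant factor. On the big-job side, the pair variables $y_{(i,j),(i,j')}$ restricted to pairs of big jobs on the same machine bound the fractional multiplicity of big jobs per machine and thus certify a fractional bipartite matching of big jobs to ``big-served'' machines; this matching is integral by the classical Birkhoff--von Neumann theorem. On the small-job side, once the ``small-served'' machines are identified, the single-variable marginals $x_{ij}$ for small jobs already give a fractional assignment of expected load $\Omega(T)$; since each small job contributes at most $T/2$, a Shmoys--Tardos-style independent rounding combined with a Chernoff concentration bound yields an integral assignment with load $\Omega(T)$ with constant probability. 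Composing these two roundings on complementary sets of machines gives an integral solution of value $\Omega(T)$, proving the $O(1)$ gap.

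\textbf{Main obstacle.} The delicate point is that the decomposition between ``big-served'' and ``small-served'' machines must be read off from the $1$-round SA solution without further conditioning: this requires showing that the pair variables encode a fractional indicator for this partition that is (nearly) integral, so that both roundings can be carried out in parallel from a single round of SA. Handling machines that receive contributions from both big and small jobs is the other subtle piece and should be absorbed into the constant loss of the transformation (for instance by doubling $T$ on such machines, or by charging their residual requirement to the big-job matching). Beyond this, the argument reduces to standard bipartite matching LP integrality and standard concentration bounds on small-job randomized rounding, both of which are well understood and do not require any hierarchy lifting past the first round.
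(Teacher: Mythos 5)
Your approach diverges substantially from the paper's, and the gap you yourself flag as the ``main obstacle'' is precisely the part that is not filled and would fail.

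The paper's transformation does something sharper than your big/small classification and truncation: for every original player $i$ it introduces \emph{two} new players and a private coupling resource of value $T$, so that in the transformed instance every player has access to either only big resources, or to \emph{at most one} big resource plus small resources. This is not a cosmetic normalization --- it is the structural hypothesis that makes a single round of Sherali--Adams useful. Once each player $i$ sees at most one big resource $b$, the lifted pair variable $y_{\{is, ib\}}$ encodes exactly the correlation needed to produce $x_{is} := y_{is} - y_{\{is,ib\}} \le 1 - y_{ib}$, which is the extra constraint of the LP of Davies, Rothvoss, and Zhang. The paper then simply cites their result that this LP has integrality gap at most $4$. Your preprocessing (threshold at $T/2$, round big values up to $T$, discard tiny jobs) does not enforce ``at most one big resource per player,'' so a player may see many big resources; a single round of SA then gives you pair variables $y_{\{ij,ij'\}}$ but no way to aggregate them into a valid bound on $\sum_{j\in B\cap R(i)} x_{ij}$ that is simultaneously consistent with the small-job marginals. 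In other words, the ``fractional indicator for the big-served versus small-served partition'' that you want to read off from one round of SA is exactly what is not available without the player-splitting trick.

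There is a second, independent problem with the rounding half of your plan. For the small-job side you propose Shmoys--Tardos-style independent rounding with a Chernoff bound, but this does not respect the packing constraint that each small job is assigned at most once, and it is also not where the difficulty of restricted assignment lies: the constant integrality gap of the Davies et al.\ LP is established via a nontrivial combinatorial (hypergraph matching / local search) argument, not randomized rounding. The paper sidesteps all of this by reducing to that LP rather than rounding directly. So even granting your decomposition, the ``compose a BvN matching with Chernoff rounding'' plan would not recover the constant-factor guarantee; you would need to reprove something at the strength of Davies et al., which your sketch does not do.
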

The above two theorems show that to obtain a meaningful lower bound for current lift-and-project techniques, it is crucial to focus on instances which cannot be simplified using any known technique. This is precisely why we focus on the MaxMinDegree Arborescence problem for layered graphs. Given Theorem \ref{thm:sherali_restricted_upperbound}, it actually seemed plausible that using only 1 round of Sherali-Adams could be helpful for the general Santa Claus problem. Indeed, the restricted assignment case is already a very challenging case, which has been heavily studied in the past (see further related works). These results give some justification that one round of Sherali-Adams is already quite powerful as it recovers past results on the restricted assignment case. In fact, in Appendix \ref{app:previous_construction} we even give a polynomial gap instance for the configuration LP (the most popular relaxation for Santa Claus, which has an $O(1)$ gap in the restricted assignment case, see e.g. \cite{asadpour2012santa}) which does not survive one round of Sherali-Adams. Together with Theorem \ref{thm:sherali_restricted_upperbound}, this shows that one round of Sherali-Adams is essentially strictly stronger than the configuration LP for the MMDA problem. This explains why our proof of Theorem \ref{thm:sherali_1round_gap} is quite technical. 

\subsubsection{Further consequences of Theorem \ref{thm:sherali_many_round}}

In this part, we discuss additional properties of the instance that we construct to prove Theorem \ref{thm:sherali_many_round}. These properties pose significant challenges to other known techniques for Santa Claus, especially those heavily used in the restricted assignment case. To formally define our result, it is useful to informally introduce a few concepts. A useful trick that appears in previous works on the MMDA problem is to relax the constraint that each vertex can appear at most once in the solution. Instead, one can allow some \textit{congestion} and take a vertex multiple times (i.e. the in-degree of the vertex is more than 1 in the arborescence). With this in mind, we can informally introduce the concept of \textit{locally good solutions}. For any $t>0$, a \textit{$t$-locally good solution} is an integral arborescence which can have congestion as high as $n^{\Theta(1)}$, but such that two different occurrences of the same vertex $v$ appear ``far away'' from each other in the sense that the two directed paths ending at $v$ were disjoint for at least $t$ edges before meeting in $v$ (the formal definition appears in Section \ref{sec:prelim}).

This concept is particularly useful, since it is shown in \cite{bamas2023better} how to ``sparsify'' $\Theta(\ell)$-locally good solutions (where $\ell$ is the depth of the instance) to obtain an integral feasible solution to the MMDA problem which looses only a factor $n^{\Theta(1/\ell)}$ in the approximation rate.

We show the following properties of our construction.
\begin{theorem}
\label{thm:locality_gap}
    Let $G_\ell$ be the MMDA instance of depth $\ell$ used in the proof of Theorem \ref{thm:sherali_many_round}. Then, the following properties hold, where $c$ is an absolute constant:
  \begin{enumerate}
       \item   $k_u=n^{\Theta(1/\ell)}$ for all $u$,
        \item  every vertex $v$ belongs to at least $n^{\Omega(\log (\ell))}$ different directed paths, 
        \item  any feasible integral solution in $G_\ell$ must loose an approximation rate of at least $n^{\Omega(1/\ell)}$, and
        \item  the instance $G_{\ell}$ contains an $(\ell/c)$-locally good solution.
  \end{enumerate}
\end{theorem}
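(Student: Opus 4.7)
The plan is to verify each of the four claims by examining the lifted construction $G_\ell$ from the proof of Theorem~\ref{thm:sherali_many_round}. Properties (1)--(3) should follow essentially by bookkeeping on how the parameters of the base depth-$3$ instance from Theorem~\ref{thm:sherali_1round_gap} propagate through the lifting trick, whereas property (4) requires an explicit construction of an integral arborescence with a controlled collision pattern, which I expect to be the main technical content.

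For (1), once $G_\ell$ is written as a composition of $\Theta(\ell)$ copies of a base block of size $n^{\Theta(1/\ell)}$, the degree requirement at any non-sink vertex only depends on its position inside a single block, where it equals $n^{\Omega(1/\ell)}$; matching the $\Theta$ bounds then gives the claim. For (3), the $n^{\Omega(1)}$ integrality gap of the base block of size $n^{\Theta(1/\ell)}$ forces any integral solution to lose a factor $n^{\Omega(1/\ell)}$ on every block it must traverse to reach a sink, hence on the whole instance; the argument is a direct reduction to Theorem~\ref{thm:sherali_1round_gap} applied block by block. For (2), I would bound the number of directed paths through a fixed vertex $v$ by multiplying the number of continuations of $v$ across the $\Theta(\log \ell)$ ``levels of choice'' created by the lifting; the key point to verify is that the lifting splices independent continuations together, so that path counts multiply freely across levels, yielding $n^{\Omega(\log \ell)}$.

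The main obstacle is property (4). My plan is to construct the $(\ell/c)$-locally good solution inductively on the structure of the lifting. The depth-$3$ base case is trivial because the required distance $\ell/c$ is at most the depth of a single block. For the inductive step, I would exploit the labeling scheme underlying the base depth-$3$ construction (the same scheme mentioned in the abstract), assigning distinct identifiers to distinct occurrences of each vertex at every recursion level. Distinct identifiers would then be used to route the subtrees hanging below those occurrences through vertex-disjoint continuations in the block immediately below, so that any future collision between their descendants must happen inside an entirely fresh block and thus only after $\Omega(\ell)$ further layers, yielding LCA distance $\ge \ell/c$. The delicate part will be to simultaneously keep the congestion bounded by $n^{\Theta(1)}$ while enforcing this pairwise separation; I expect this to follow from a greedy argument exploiting the symmetry of the base construction and the abundance of vertex-disjoint branches provided by property (2).
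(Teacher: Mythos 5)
Your proposal rests on a mental model of $G_\ell$ that does not match the construction, and this causes problems in all four parts. $G_\ell$ is \emph{not} a composition of $\Theta(\ell)$ copies of a depth-$3$ block of size $n^{\Theta(1/\ell)}$. The lifting trick in the paper takes the single depth-$3$ set system and \emph{refines} it: layer $i$ (in the expanding phase) has one vertex per subset of size $i\epsilon\rho m$, so the same ground set $[m]$ and the same labels are used throughout, with sets growing from $\emptyset$ to $2\rho m$ and then shrinking back to $\rho m$. There are no disjoint blocks, and no recursive composition. Consequently, the ``block-by-block'' reduction you propose for (1) and (3) is not available: for (3) in particular, if the loss compounded across $\Theta(\ell)$ independent blocks of size $n^{\Theta(1/\ell)}$ you would deduce a gap of $n^{\Omega(1)}$ for the whole instance, which is stronger than what is true (the paper proves exactly $n^{\Omega(1/\ell)}$, and this is tight since the path hierarchy eventually closes the gap). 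The paper's Lemma~\ref{lem:no_integral_sol} is instead a single global counting argument: it fixes a vertex $v\in L_{1/\epsilon}$, partitions the sinks by the size of their label's intersection with $S_v$, and derives a contradiction using Lemma~\ref{lem:binomial_coeff}. Likewise (1) is a direct entropy calculation on the $k_i$'s (Lemma~\ref{lem:large_required_degrees}).

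For (2), your ``$\Theta(\log\ell)$ levels of choice'' is also not the right picture. The $\log\ell$ does not count a number of levels; it is the per-layer entropy. The paper (Lemma~\ref{lem:number_paths}) bounds the minimum degree by $\binom{\rho m}{\epsilon\rho m}\approx n^{\Omega(\epsilon\log(1/\epsilon))}$ and multiplies over $\Theta(1/\epsilon)=\Theta(\ell)$ layers, yielding $n^{\Omega(\log(1/\epsilon))}=n^{\Omega(\log\ell)}$. This is a multiplication over all $\Theta(\ell)$ layers, not $\log\ell$ of them.

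The main gap is (4). You propose a deterministic routing construction that uses labels to keep subtree occurrences ``vertex-disjoint inside the block immediately below,'' arguing that collisions are then deferred to ``a fresh block.'' Besides relying on the nonexistent block decomposition, this plan is unsubstantiated: in the actual layered set system the intermediate vertices are massively shared across subtrees (this sharing is the entire source of hardness of MMDA), and there is no greedy argument in the paper that produces such a disjoint routing. The paper proves (4) entirely differently (Lemma~\ref{locally_good_lemma}): it runs the layer-by-layer randomized rounding on the assignment-LP solution, shows via Chernoff bounds that the out-degree constraints hold with high probability, and bounds the congestion inductively across layers using the path-hierarchy feasibility (Lemma~\ref{lem:path_hierarchy_solution}), whose key ingredient is the path-count bound of Lemma~\ref{lem:helper_lemma_path_hierarchy}. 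In other words, (4) is obtained \emph{as a consequence} of the path-hierarchy feasibility that the same section establishes, not by a separate combinatorial construction. To repair your approach you would have to either make the random rounding argument (as the paper does) or prove a deterministic disjoint-routing lemma in the refined set system, which is a substantially harder and unaddressed task.
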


We believe the combination of all these properties to be quite significant. Namely, it seems unlikely that the current lift-and-project techniques combined with the intricate reduction by Chakrabarty et al. \cite{chakrabarty2009allocating} could obtain a $\polylog$-approximation for Santa Claus in time $n^{o(\log \log n)}$. Indeed, by the first property of Theorem \ref{thm:locality_gap}, the reduction technique of Chakrabarty et al. will not modify in any way the instance. Also note that for any $\ell=o(\log n/\log \log n)$, the integrality gap is at least $\polylog^{\omega(1)}$ after $\Omega(\ell)$ rounds of the path hierarchy by Theorem \ref{thm:sherali_many_round}. By the second property of Theorem \ref{thm:locality_gap}, the size of the path hierarchy will still be $n^{\Omega(\log(\ell))}$ after $\Omega(\ell)$ rounds, which is $n^{\Omega(\log \log n)}$ for $\ell$ close to $\log n$. The number of paths also shows that the recursive greedy algorithm \ifanonymous{(\cite{lars_private})}{} does not run in polynomial time on these instances.

Lastly, the third and fourth property of Theorem \ref{thm:locality_gap} show that the analysis of the sparsification method in \cite{bamas2023better} is tight. More generally, these locality-based methods (which are also the intuition behind LP relaxations of the restricted assignment case\footnote{For instance, the well-known configuration LP of \cite{bansal2006santa} essentially strengthens the naive LP by adding the constraints that for any edge $e=(u,v)$ taken in the support, $v$ must have at least $k_v$ outgoing edges in the graph, i.e. there exists a depth-1 solution rooted at $v$. The $t^{th}$ level of the path hierarchy essentially does the same strengthening, but enforces the existence of an integral solution at distance $t$ of $v$ instead of only distance 1.}) also suffer from a gap of $n^{\Omega(1/\ell)}$ on our instances, hence are unlikely to help to obtain a $\polylog$-approximation in polynomial time.


\subsection{Our techniques}
We emphasize the techniques used to prove Theorem \ref{thm:sherali_1round_gap} and Theorem \ref{thm:sherali_many_round} which constitute our main contribution. We note that proving Theorem \ref{thm:sherali_1round_gap} is already quite non-trivial, and that all past MMDA constructions used to obtain an integrality gap for other LP relaxations (see e.g. \cite{bansal2006santa}) do not work (see Appendix \ref{app:previous_construction} for details). The reason for this is simply that the argument that allowed to rule out the existence of an integral solution in those constructions is easily captured by one round of Sherali-Adams.

We highlight here the main ideas of our construction and proof.

\paragraph*{The construction.} An important inspiration to our work is a construction of Li and Laekhanukit \cite{li2022polynomial} who show a polynomial integrality gap of the standard relaxation of the Directed Steiner Tree problem. We take inspiration from their construction to build a layered instance of the MMDA problem of depth $3$ having the properties of Theorem \ref{thm:sherali_1round_gap}. The construction is quite clean, we have $4$ layers of vertices $L_0,L_1,L_2,L_3$, where $L_0$ contains only the source, and the set of sinks is equal to $L_3$. Then, each vertex $v$ in the construction is labeled by a subset $S_v$ of a ground set $\mathcal U=[m]$. The construction is parametrized by a small constant $\rho$ (the reader might think of $\rho =1/1000$ in this overview). There is exactly one vertex in $L_1$ and one in $L_3$ for each subset of $\mathcal U$ of size $\rho m$, and there is exactly one vertex in $L_2$ for each subset of $\mathcal U$ of size $2\rho m$. These labels define the edge set in an easy way, two vertices $u,v$ in consecutive layers are connected if and only if the corresponding labels satisfy that either $S_u\subseteq S_v$, or that $S_v\subseteq S_u$. See Figure \ref{fig:MMDA_gap} for an illustration of the construction. Then we use well-chosen values for the required out-degrees $k_u$, and we show how to use these labels to rule out the existence of a good integral solution.  We note that the authors in \cite{li2022polynomial} did not try to prove integrality gaps for LP hierarchies, so the inspiration from this work stops here. To the best of our knowledge, all the techniques presented beyond this point are completely new.  

\begin{figure}
    \centering
    \includegraphics{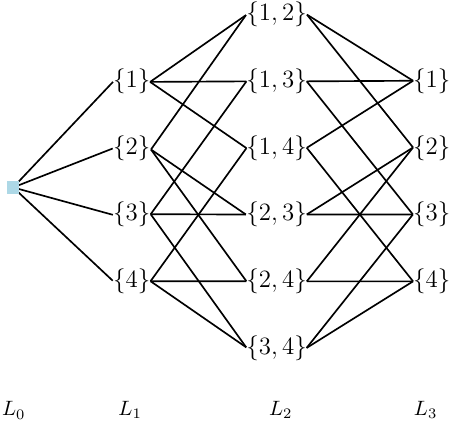}
    \caption{The construction with $\mathcal U=\{1,2,3,4\}$ and $\rho=1/4$.}
    \label{fig:MMDA_gap}
\end{figure}

\paragraph*{Subtree solutions.} Next, we need to go further by using labels to quantify the ``correlations'' between different edges which are needed to obtain a feasible solution to the Sherali-Adams hierarchy. To this end, we introduce the concept of \textit{subtree solutions}: a subtree solution at edge $e=(u,v)$ is a feasible fractional solution $x^{(e)}$ to the naive relaxation of the problem in the same instance, except that we consider the vertex $v$ to be the source vertex (instead of $s$). Intuitively, if $x$ is the fractional solution of the naive relaxation, $x_e$ will be the probability that the edge $e$ appears in the solution, and $x_e^{(e')}$ will be the probability that the edge $e$ appears in the subtree rooted at $e'$ after conditioning by the fact that the edge $e'$ appears. We will also have the convention that $x_e^{(e)}=1$, since we include the edge $e=(u,v)$ in the subtree rooted at $v$. Using this interpretation together with the labels, we are able to quantify very precisely the correlations.

\paragraph*{The shadow distribution.} Then, to show that the instance is feasible for 1 round of the Sherali-Adams hierarchy, we use a standard abstraction of those hierarchies in the form of distributions of edge sets. Intuitively, if we denote by $s_e$ the probability that an edge appears in the set, we are looking for a distribution over edge sets such that $(s_e)_e$ is a feasible solution to the naive LP. Moreover, if we condition by any event of the form ``$e'$ appears in the set'' or ``$e'$ does not appear'' (for any edge $e'$), then we obtain a different vector $(s'_{e})_e$, which also has to be a feasible solution to the naive LP. 

There are two intuitive ways to design a distribution for this purpose. The simplest way is by using the product distribution where each edge $e$ appears independently with probability $x_e$. The second one is to use the distribution given by the round-and-condition algorithm. Both approaches fail, each for different reasons. Informally, the first distribution does not have enough correlation between edges (it does not have any in fact), while the second one has too much. To solve this issue, we design a new distribution, which we call the \textit{shadow distribution}:

\begin{enumerate}
    \item Each edge $e'$ in the graph is selected as a \textit{shadow} edge independently with probability $x_{e'}$. 
    \item Second, for any edge $e'$ which was selected as a shadow edge, we sample a ``subtree''\footnote{The set $S_{e'}$ is sampled using the subtree solution $x^{(e')}$, hence the name. But it is not necessarily a tree.} $S_{e'}$ containing each edge $e$ independently with probability $x_{e}^{(e')}$. 
    \item Then, we return the union of those ``subtrees''.
\end{enumerate}

The intuition behind this distribution is that it constitutes a sweet spot between too much correlation, and too little: If we skip the middle step (step 2), the distribution becomes the product distribution (not enough correlation), and if we repeat step 2 a second time before returning the edge set we essentially obtain the distribution given by the round-and-condition algorithm (too much correlation). The name of ``shadow'' intuitively comes from the fact that the edges in step 1 are hidden and create correlations, but we can only observe the ``subtrees'' which are triggered by these hidden edges.

\paragraph*{Why does it work?} It is non-trivial to see why this works on our instance. However, one key property that we can highlight is that if we denote by $s_e$ the probability that an edge appears in the above distribution, then our instance and subtree solutions are such that 
$x_e\le s_e\le O(x_e)$ for any edge $e$ (i.e. the probability of each edge appearing does not change much compared to the product distribution). Now, it becomes clear that the distribution satisfies the naive LP if we do not condition by any event. Now we can highlight one of the crucial calculation we do in the proof, which relies on our key property above.

The difficulty faced by the independent product distribution is the following: if we fix an edge $e_1\in \delta^-(v)$, and we condition by the event $\mathcal E=\{e_1 \text{ is in the edge set}\}$, then it is necessary that the following property holds:
\begin{equation}
\label{eq:intro}
    \sum_{e\in \delta^+(v)}\mathbb P[\{e \text{ is in the edge set}\}\mid \mathcal E]\ge \Omega(k_v)\ .
\end{equation}
Indeed, if a vertex is selected in the MMDA solution (i.e. has in-degree more than $1$) then it must have out-degree $k_v$ in the solution. It will not be difficult to see in our construction that the above inequality will be violated by a \emph{polynomial} factor in the independent product distribution simply because edges are independent, therefore conditioning by $\mathcal E$ does not increase the probability of other edges showing up.

However, with our shadow distribution, we can reverse the conditioning to obtain the following easy calculation:
\begin{equation*}
    \mathbb P[\{e_1\text{ is a shadow edge}\}\mid \mathcal E]=\frac{ \mathbb P[\mathcal E\mid \{e_1\text{ is a shadow edge}\}]\cdot \mathbb P[e_1\text{ is a shadow edge}]}{\mathbb P[\mathcal E]}\ge \Omega(1)\ ,
\end{equation*}
where the inequality uses the facts that each edge $e$ is a shadow with probability $x_e$, that $x_e^{(e)}=1$ hence $\mathbb P[\mathcal E\mid \{e_1\text{ is a shadow edge}\}]=1$, and finally our key property that $s_e\le O(x_e)$ which imply $\frac{\mathbb P[e_1\text{ is a shadow edge}]}{\mathbb P[\mathcal E]}=\frac{x_{e_1}}{O(x_{e_1})}=\Omega(1)$.

Hence, in the shadow distribution after conditioning by $\mathcal E$, we obtain that with constant probability the edge $e_1$ was a shadow edge. But when $e_1$ is a shadow edge it triggers a subtree rooted at vertex $v$ which has outdegree $k_v$ in expectation, which will satisfy Equation \eqref{eq:intro}.

We believe that the property that $x_e\le s_e\le O(x_e)$ is something very special about this construction. For instance, one can show that the fact that there exists a subtree solution $x^{(e')}$ for every $e'$ is not enough to guarantee this property, and also not enough to fool 1 round of the Sherali-Adams hierarchy (see Appendix \ref{app:subtree} for an easy example). 

\paragraph*{The lifted instance.} For Theorem \ref{thm:sherali_many_round}, we need to construct an instance with many layers, as it is known that only $2$ rounds of the path hierarchy is already strong enough to not be fooled by our construction of depth $3$. For this, we use the labels of the previous instance as inspiration. We note that the instance has intuitively $2$ phases. An \textit{expanding} phase, where the labels correspond to bigger and bigger subsets from size $0$ to $2\rho m$, and a \textit{collapsing} phase where the size of the labels decreases from $2\rho m$ to $\rho m$. Then, we use a simple trick to refine the set system that gives the labels. Essentially, we fix some small $\epsilon\ge \Omega(1/m)$ (note that $m=\Theta(\log n)$ where $n$ is the instance size). Then, we build an instance with $\Theta(1/\epsilon)$ layers, with one expanding and one collapsing phase as before, where the $i$-th layer of the expanding phase contains a vertex for each set of size $i\epsilon m$, and the $i$-th layer of the collapsing phase contains a vertex for each set of size $2\rho m-i\epsilon m$. This allows us to obtain in some sense a ``continuous'' version of our instance of depth $3$. Interestingly, a lot of properties of the instance of depth $3$ carry over to this new instance. For instance, the proof to rule out integral solutions in basically the same. Then, one can define the same concept of subtree solutions, and a generalization of our probability distribution for $t$ rounds of Sherali-Adams is essentially as follows: 
\begin{enumerate}
    \item Sample a shadow set $S_1$ containing each edge $e$ independently with probability $x_e$.
    \item For $i=2$ to $t$, sample a shadow set $S_{i}$ as follows: each edge $e'\in S_{i-1}$ creates a set $S_{i}^{(e')}$ which contains each edge independently with probability $x_{e}^{(e')}$, and we set $$S_i=\bigcup_{e'\in S_{i-1}}S_{i-1}^{(e')}\ .$$
    \item Return $S_t$.
\end{enumerate}

We conjecture that this fools the $t$-th round of the Sherali-Adams hierarchy on our instances of depth $\Omega(t)$, but have not been able to prove it. The main issue is that computing the probabilities of events that several edges appear becomes very intricate.

However, we were able to use the intuition that this distribution gives to fool the path hierarchy, which seems to allow more approximations. We proceed in a similar way using the concept of subtree solution. The path hierarchy only has a variable for each set of edges which form a directed path $p=e_1,e_2,\ldots, e_k$ (with $k\le t$). The question is, what is the probability of the event that $p\subseteq S_t$? It appears difficult to compute, but we believe that this probability is well-approximated (within $\polylog$ factors) by the probability of the event 
\begin{equation*}
    \mathcal E = \{e_1\in S_1\}\cap \{e_2\in S_{2}^{(e_1)}\}\cap \ldots \cap \{e_k\in S_{k}^{(e_{k-1})}\}\ .
\end{equation*}
It is easy to see that 
\begin{equation*}
  \mathbb P[\mathcal E] = x_{e_1}\cdot \prod_{i=2}^{k}x_{e_i}^{(e_{i-1})}\ .
\end{equation*}
We show that setting the variable $y(p)$ exactly equal to the above probability yields a feasible solution to the path hierarchy.

\subsection{A closely related problem: the Directed Steiner Tree problem}

Another well-known and not really understood problem is the Directed Steiner Tree problem, in which one has to find the cheapest directed tree connecting a root to all terminals. We elaborate more on the state-of-the-art results for this problem, as the parallel with Santa Claus is quite striking. The best algorithms give a $n^{\epsilon}\cdot \text{polylog}(n)$-approximation in time $n^{O(1/\epsilon)}$, for any $\epsilon=\Omega(1/\log n)$ \cite{charikar1999approximation,grandoni2019log2,rothvoss2011directed,friggstad2014linear}. The main difference with Santa Claus is that the DST problem cannot be approximated within a factor better than $\log^{2-\epsilon}(n)$ for any fixed $\epsilon>0$ unless $\textrm{NP}\subseteq \textrm{DTIME}(n^{\polylog})$ \cite{halperin2003polylogarithmic}. What is even more remarkable is the similarity of the techniques used to solve the problem. A well-known ``height reduction theorem'' (\cite{zelikovsky1997series,calinescu2005polymatroid}) reduces general instances of the problem to instances on layered graphs of depth at most $\ell$ by loosing a factor $O(\ell\cdot n^{1/\ell})$. For $\ell=O(\log n)$, this looses a factor $O(\log n)$. Then, the state-of-the-art algorithm by \cite{grandoni2019log2} essentially proceeds layer by layer from the root to the terminals, rounding and conditioning each time on a proper subset variables. Specifically, when reaching vertex $v$, the algorithm select as the conditioning set to be the path which was selected from the source to $v$. Hence, the concept of path hierarchy also makes sense in that context. Moreover, we note that it is possible to adapt the recursive greedy algorithm of \cite{charikar1999approximation} to the MMDA problem \ifanonymous{(\cite{lars_private})}{}. Lastly, as explained our results were inspired by a very clean construction of \cite{li2022polynomial} (itself taking inspiration from \cite{zosin2002directed}). Our trick of refining the set system can be seen as a lifting trick for our $3$-layered instance, and in principle it can also be applied in the context of DST.

\subsection{Further related works}
As mentioned in introduction, and important special case of Santa Claus is the restricted assignment where each gift $j$ has a fixed value $v_j$, and each child $i$ can only access a subset $\mathcal R(i)$ of the gifts. Equivalently, this is the case where $v_{ij}\in\{0,v_j\}$ for all $i,j$. This case is fairly well-understood, with a long line of work (see e.g. \cite{davies2020tale,bansal2006santa,feige2008allocations,annamalai2017combinatorial,DBLP:conf/icalp/ChengM18,DBLP:conf/icalp/ChengM19,haxell2022improved,Polacek}) culminating in a $(4+\epsilon)$-approximation in polynomial time. There are also works on the restricted assignment case with non-linear utility functions, such as a $O(\log \log n)$-approximation in polynomial time for the case of submodular utilities \cite{bamas2021submodular}. Those techniques were then transferred to the makespan scheduling problem to obtain a better-than-$2$ approximation in quasi-polynomial time, and a better-than-$2$ estimation algorithm in polynomial time (see e.g. \cite{svensson2012santa,JansenR19,annamalai2019lazy,bamas2024santa,chakrabarty20141}), again in the restricted assignment case.

As other related works, one can cite the Densest-$k$-Subgraph problem which admits a polynomial-approximation in polynomial time \cite{bhaskara2010detecting} and share some common points with our arborescence problem. Some lift-and-project lower bounds were already showed \cite{bhaskara2012polynomial}, however the constructions and proofs are very different to ours. Finally, one can mention some routing problems (e.g. orienteering with time windows) for which the state-of-the-art techniques are reminiscent of those for the MMDA problem (see e.g. \cite{chekuri2005recursive}). 

\subsection{Overview of the paper}
In Section \ref{sec:prelim}, we give some formal definitions of all the concepts that were used in introduction, and more. In Section \ref{sec:sherali1round} we prove Theorem \ref{thm:sherali_1round_gap}. In Section \ref{sec:construction}, we give our general construction for any depth $\ell$, and we prove a few properties of the obtained instances. In Section \ref{sec:lower_bound_many_rounds}, we prove Theorems \ref{thm:sherali_many_round} and \ref{thm:locality_gap}. In Appendix \ref{sec:restricted} we prove Theorems \ref{thm:sherali_restricted_lowerbound} and \ref{thm:sherali_restricted_upperbound}.

To get better intuition of the constructions, we would advise the reader to read Section \ref{sec:sherali1round} before reading Sections \ref{sec:lower_bound_many_rounds} and \ref{sec:construction}.

\section{Preliminaries}
\label{sec:prelim}
\subsection{Problem definition}
In the rest of this paper (with the exception of Appendix \ref{sec:restricted}), we will work on the MaxMinDegree Arborescence (MMDA) problem on a layered directed graph $G=(V,E)$ of size $n$ and of depth $\ell\le O(\log n/\log \log n)$. By \textit{layered}, we mean that the graph $G$ contains some \textit{layers} of vertices $L_0\cup L_1\cup \ldots \cup L_\ell=V$, and edges can only exist between two consecutive layers $L_i$ and $L_{i+1}$. Further, the edge has to be oriented from $L_i$ to $L_{i+1}$. In our constructions, the sinks will all belong to the last layer $L_\ell$, and the layer $L_{0}$ contains only the source vertex $s$. Hence, the set of sinks is equal to $L_{\ell}$. Further, our constructions will have the property that $k_u\ge n^{\Omega(1/\ell)}$ for all $u\in V\setminus L_{\ell}$.

One can describe the input/output for the MMDA problem as follows.

\paragraph*{Input.} The input of the problem is the graph $G$, the description of every vertex as either a sink, a source, or a normal vertex, and the required out-degrees $k_v$. 

\paragraph*{Output.} An $\alpha$-approximate solution to the problem is a subgraph $T\subseteq E$ such that:
\begin{enumerate}
    \item The source $s$ has out-degree (in $T$) $|\delta^+_T(s)|\ge \alpha k_s$,
    \item every vertex $v\in V$ has in-degree (in $T$) at most $1$, and
    \item for every vertex $v$ with $|\delta^-_T(v)|=1$, we have $|\delta^+_T(v)|\ge \alpha k_v$.
\end{enumerate}
One seeks to find the biggest $\alpha^*$ possible so that there exists an $\alpha^*$-approximate solution. 

\subsection{Basic notations}

We will say that a vertex $u$ is \textit{reachable} from $v$ in the graph $G$ if there exists a directed path from $v$ to $u$. We will denote by $A(v)$ the set of \textit{ancestors} of vertex $v$ which contains all the vertices from which $v$ is reachable. Similarly, $D(v)$ is the set of \textit{descendants} of $v$ which are all the vertices reachable from $v$.

By a slight abuse of notation, we will also denote by $A(v)$ the set of edges $e=(u,u')$ such that $u'\in A(v)$, and by $D(v)$ the set of edges $e=(u,u')$ such that $u\in D(v)$. We will denote by $D(e)$ the set of vertices $x$ such that $x\in D(u')$. We will also denote by $D(e)$ the set of edges $e'=(x,y)$ such that $x\in D(u')$. It will always be clear whether the considered object is an edge or a vertex, and this information is sufficient to clear up the ambiguity.

We also define the intuitive notations $L_{\ge i}=\cup_{j\ge i}L_j$, $L_{\le i}=\cup_{j\le i}L_j$, $L_{< i}=V\setminus \cup_{j\ge i}L_j$, $L_{> i}=V\setminus \cup_{j\le i}L_j$. For an edge $e=(u,v)$, we say that $e\in L_i$ if $v\in L_i$ (i.e. if the endpoint of $e$ belongs to layer $L_i$). By a similar overloading of the notation, we have the same definitions of $L_{\ge i}, L_{\le i}, L_{> i}, L_{< i}$ for edges.

For any vertex $v$, we use the standard notation of $\delta^+(v)$ for the edges going out of $v$, $\delta^-(v)$ for the edges going in, and $\delta(v)=\delta^+(v)\cup \delta^-(v)$.

Finally, in many of our construction, each vertex $v$ will be labeled by a set (over a ground set of elements), which we will denote by $S_v$. Similarly, for any edge $e=(u,v)$, we denote by $S_e$ the set labeling $v$, i.e. $S_e:=S_v$.

\subsection{Relaxations of the problem}

\paragraph{Naive relaxation.} The naive relaxation of the arborescence problem is called the \textit{assignment LP}, and reads as follows on our layered instances (recall that $L_\ell$ is the set of sinks).

\begin{align*}
    x(\delta^+(s)) &\ge k_s \\
    x(\delta^+(v)) &\ge  k_v\cdot x(\delta^-(v)) &\forall v\in V\setminus (\{s\}\cup L_{\ell})\\
    x(\delta^-(v)) &\le 1 &\forall v\in V\\
    0\le x_e&\le 1 &\forall e\in E
\end{align*}

It is not difficult to see that this relaxation has a polynomial integrality gap already if $\ell=2$. We call the first two rows the \textit{covering constraints}, and the third one the \textit{packing constraints}.

We note that in several proofs, we might obtain solutions that satisfy the constraints up to a multiplicative factor $\alpha\ge 1$, we say that the solution is \textit{$\alpha$-approximate}. The factor by which the packing constraint is violated will often be called the \textit{congestion}. By standard arguments, this can be easily transformed into a feasible solution to the assignment LP with value $k'_u\ge k_u/(\alpha)^2$ at every vertex. To see this, first we scale down all $k_u$s by some factor $\alpha$. The new covering constraints are now satisfied. Let us now scale down a second time the demand at the source $k'_s$ and all fractional values $x_e$ by some factor $\alpha$. Now, it is clear to see that all the constraints are satisfied, and that $k'_u\ge k_u/(\alpha)^2$ for all $u\in V$.

A similar argument holds for \textit{integral} solutions of the assignment LP. If we have an integral solution which is $\alpha$-approximate w.r.t. the assignment LP, then by standard flow arguments, one can transform it in polynomial time into a feasible integral solution where every vertex $u$ in the solution receives out-degree at least $\lfloor k_u/(\alpha)^2\rfloor$ (see e.g. \cite{chakrabarty2009allocating}). In all our solutions, we will always have that $\alpha=o(k_u)$, so the floor function will not have any significant impact on the approximation factor. Given these remarks, we will only verify the constraints up to some multiplicative factor in the rest of the paper.

\paragraph{Sherali-Adams lift-and-project.} For clarity, we will not write down the result of the Sherali-Adams hierarchy on the assignment LP. However, we state and prove here the result that we use to prove the feasibility of one round of Sherali-Adams.
\begin{theorem}
\label{thm:sherali_distributions}
Consider the relaxation of a binary integer linear program.
Then there is a feasible solution for $r$
levels of SA if
there exists a probability distribution $D$ over $0/1$ assignments of the variables
such that for all variable sets $|V_0\dot\cup V_1|\le r$ 
and every constraint $a^T x \le b$ of the linear program we have that
    \begin{equation*}
      \mathbb E_{x\sim D}[a^T x \mid E] \le b \ ,
    \end{equation*}
    where $E$ be the event that $x_i = 0$ for all $i\in V_0$ and $x_i = 1$ for all $i\in V_1$
    and $\mathbb P_{x\sim D}[E] > 0$.
\end{theorem}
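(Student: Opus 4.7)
The plan is to extract a feasible point of the $r$-th level of SA directly from $D$: for every pair of disjoint variable subsets $(V_0,V_1)$ with $|V_0|+|V_1|\le r+1$, I would set
\begin{equation*}
y_{V_0,V_1} \;:=\; \mathbb{P}_{x\sim D}\bigl[x_i=0\ \forall i\in V_0,\ x_j=1\ \forall j\in V_1\bigr],
\end{equation*}
and then check each family of SA constraints in turn. This is the only natural guess, since any honest distribution over $\{0,1\}^n$ induces a pseudo-distribution at every level via its marginals on small sets of variables.

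The combinatorial SA identities---non-negativity, the normalization $y_{\emptyset,\emptyset}=1$, and the consistency relations $y_{V_0,V_1}=y_{V_0\cup\{i\},V_1}+y_{V_0,V_1\cup\{i\}}$ for $i\notin V_0\cup V_1$---are immediate properties of any distribution over $\{0,1\}^n$ and do not rely on the hypothesis. So the only real content is in the lifted LP inequalities.

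For an original constraint $a^Tx\le b$ and a pair $(V_0,V_1)$ with $|V_0|+|V_1|\le r$, SA multiplies by the monomial $\prod_{i\in V_1} x_i\cdot \prod_{j\in V_0}(1-x_j)$, expands, and linearizes using $x_i^2=x_i$. A short expansion (with the conventions $y_{V_0,V_1\cup\{i\}}=0$ if $i\in V_0$ and $=y_{V_0,V_1}$ if $i\in V_1$) rewrites the lifted constraint as
\begin{equation*}
\sum_i a_i\, y_{V_0,V_1\cup\{i\}} \;\le\; b\, y_{V_0,V_1}.
\end{equation*}
Under our choice of $y$, the left-hand side equals $y_{V_0,V_1}\cdot\mathbb{E}_D[a^Tx\mid E]$, where $E$ is precisely the event of the theorem statement. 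When $\mathbb{P}_D[E]=y_{V_0,V_1}>0$, dividing through recovers exactly the hypothesis $\mathbb{E}_D[a^Tx\mid E]\le b$; when $y_{V_0,V_1}=0$, monotonicity of probability forces $y_{V_0,V_1\cup\{i\}}=0$ for every $i$, so both sides vanish.

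I do not expect a real obstacle: the statement is essentially the standard fact that an honest distribution over integral solutions certifies SA feasibility, here in its slightly weaker form where only conditional expectations on events of support at most $r$ need to be controlled. The only point requiring care is lining up the SA indexing correctly: multiplication by a degree-$r$ monomial produces lifted constraints on $y$'s whose support has size at most $r+1$, and the conditional expectation in the hypothesis correspondingly involves events of support at most $r$, matching the theorem's range.
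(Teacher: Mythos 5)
Your proposal is correct and follows essentially the same route as the paper: define $y_{V_0,V_1}$ as the marginal probability of the conditioning event, use inclusion--exclusion (the paper runs the equivalent induction on $|V_0|$) to show the lifted constraint reduces to $\sum_i a_i\, y_{V_0,V_1\cup\{i\}} \le b\, y_{V_0,V_1}$, and then identify the left side with $\mathbb{P}[E]\cdot\mathbb{E}_D[a^Tx\mid E]$. Your explicit handling of the $y_{V_0,V_1}=0$ case is a small extra care that the paper leaves implicit, but otherwise the two arguments coincide.
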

Abstractions of Sherali-Adams in the form of distributions
are very standard. Similar theorems are proven e.g.\ in~\cite{georgiou2008limitations}.
\begin{proof}
    For all variable sets $V_0 \dot\cup V_1$ define 
    $s_{V_0, V_1} = \mathbb P_{x \sim D}[x_i = 0\forall i\in V_0 \text{ and } x_i = 1\forall i\in V_1]$.
    Then $s_{V_1} := s_{\emptyset, V_1}$ for $|V_1|\le r+1$ will be our lifted SA variables and in particular $s_i := s_{\{i\}}$ are the variables of the original LP that survive $r$ rounds of SA.
    
    We claim that for all constraints $a^T x \le b$, all $|V_0\dot\cup V_1| \le r$, and
    $z = \prod_{i\in V_1} x_i \prod_{i\in V_0} (1 - x_i)$,
    we have
    \begin{equation*}
        z * a^T s = \sum_{i} a_i \cdot s_{V_0, V_1\cup\{i\}} \text{ and } z * b = s_{V_0, V_1} \cdot b \ ,
    \end{equation*}
    where $*$ is the SA multiplication operator.
    We argue inductively over $|V_0|$. For $V_0 = \emptyset$ it is obviously true.
    For $|V_0| > 1$ let $z = (1 - x_j) z'$ for some $j\in V_0$. Then
    \begin{align*}
        z * a^T s &= z' * a^T s - x_j * z' * a^T s \\
        &= \sum_{i} a_i \cdot s_{V_0\setminus\{j\}, V_1\cup \{i\}}
        - \sum_{i} a_i \cdot s_{V_0 \setminus \{j\}, V_1\cup \{i, j\}} \\
        &= \sum_{i} a_i (s_{V_0\setminus\{j\}, V_1\cup \{i\}} - s_{V_0 \setminus \{j\}, V_1\cup \{i, j\}})
        = \sum_{i} a_i \cdot s_{V_0, V_1\cup\{i\}}
    \end{align*}
    Here, the second equation comes from the induction hypothesis and the last
    equation follows from the definition of $s_{U, W}$.
    Similarly,
    \begin{equation*}
        z * b = z' * b - x_j * z' * b = b (s_{V_0\setminus\{j\}, V_1} - s_{V_0\setminus\{j\}, V_1\cup\{j\}}) = b \cdot s_{V_0, V_1}
    \end{equation*}
    We will now verify the lifted constraint
    $z * a^T x \le z * b$ and some
    $z = \prod_{i\in V_1} x_i \prod_{i\in V_0} (1 - x_i)$,
    where $|V_0\dot\cup V_1| \le r$.
    We have that
    \begin{align*}
        z * a^T s &= \sum_{i} a_i \cdot s_{V_0, V_1\cup\{i\}} \\
        &= \sum_{i} a_i \cdot \mathbb P_{x \sim D}[x_j = 1\forall j\in V_1\cup\{i\} \text{ and } x_j = 0 \forall j\in V_0] \\
        &= \sum_{x: x_j = 1\forall j\in V_1 \text{ and } x_j = 0 \forall j\in V_0} \mathbb P_{x \sim D}[x] \cdot \sum_{i: x_i = 1} a_i \\
        &= \mathbb P_{x \sim D}[x_j = 1\forall j\in V_1 \text{ and } x_j = 0 \forall j\in V_0] \cdot \mathbb E_{x \sim D}[a^T x \mid x_j = 1\forall j\in V_1 \text{ and } x_j = 0 \forall j\in V_0] \\
        &= s_{V_0,V_1} \cdot b \\
        &= z * b \ . 
    \end{align*}
\end{proof}
Note that Theorem \ref{thm:sherali_distributions} has stronger requirements than usually needed. In general, one does not require the existence of a single distribution for all constraint and conditioning, but rather the existence of one distribution for each constraint and conditioning, with the added condition that different distributions have to be consistent with each other.

\paragraph{The path hierarchy.} Here we define a \textit{path hierarchy}, which is can be seen a weakened variant of the Sherali-Adams hierarchy and has been used in previous works \cite{bamas2023better,bateni2009maxmin,chakrabarty2009allocating} (see \cite{bateni2009maxmin} specifically to see why the Sherali-Adams hierarchy imply the path hierarchy). 

We need a few definitions first. For a directed path $p$, we denote by $C(p)$ the set of children paths of $p$, which is the set of paths of length $|p|+1$ and which contain $p$ as a prefix (i.e. continue $p$ by one more edge). We denote by $D(p)$ the set of \textit{descendant paths} of $p$, which are simply the paths which contain $p$ as a prefix. For any vertex $v$, $I(v)$ is the set of paths which end at $v$. We denote by $P$ the set of directed paths in the instance. For any path $p$ ending at some vertex $v$, we write $k_p:=k_v$ to be the degree requirement at the endpoint of $p$. 

Now we can state the path LP hierarchy. For clarity, we assume that there is a dummy edge $e_0$ incoming at the source $s$. We have a variable $y(p)$ for each directed path $p$.
\begin{align}
    \sum_{q\in C(p)} y(q) &= k_p \cdot y(p) &\forall p\in P \label{eqLP:flowLPdemand}\\
     \sum_{q\in I(v)\cap D(p)} y(q) &\leq y(p) &\forall p\in P,v\in V \label{eqLP:flowLPcapacity}\\
    \sum_{e\in \delta^-(v)}y(\{e\}) &\le 1 & \forall v\in V \label{eqLP:flowLPpacking_unlifted}\\
     \sum_{e\in \delta^+(v)}y(\{e\}) &\ge k_v\cdot \sum_{e\in \delta^-(v)}y(\{e\}) & \forall v\in V \label{eqLP:flowLPcovering_unlifted}\\
    y(q)&\le y(p) & \forall p,q\in P,\  p\subseteq q \label{eqLP:consistency}\\
    y(\{e_0\}) &= 1 & \label{eqLP:flowLPdemandroot}\\
    0\le y&\le 1 
\end{align}

To understand those constraints, we think of the variables $y(p)$ as binary variables which indicate whether the path $p$ is selected in the solution. Constraint \eqref{eqLP:flowLPdemand} is simply a covering constraint which implies that if a path $p=(...,v)$ is selected, there must be $k_v$ paths continuing $p$ by one more edge. Constraint \eqref{eqLP:flowLPcapacity} is a packing constraint which states that if a path $p$ is selected, no vertex can have more than one incoming path inside the subtree rooted at $p$. Constraint \eqref{eqLP:flowLPpacking_unlifted} and Constraint \eqref{eqLP:flowLPcovering_unlifted} are simply the standard assignment LP constraint. Constraint \eqref{eqLP:consistency} are consistency constraints important to be able to build consistent distributions. Indeed, if a path $p$ is not selected in the integral solution, a path $q$ containing $p$ cannot be selected either. Finally, Constraint \eqref{eqLP:flowLPdemandroot} ensures that the source is covered.

The $t^{th}$ level of the path hierarchy is the relaxation above where we removed all constraints which contain variables for paths of length strictly more than $t+1$. It is known by previous works that $\ell-1$ levels suffice to obtain a $\polylog$-approximation on graphs of depth $\ell$ \cite{bateni2009maxmin,chakrabarty2009allocating,bamas2023better}. In fact, previous works show that the $t^{th}$ level of the path hierarchy enforces the following remarkable constraint: for any edge $e=(u,v)$ such that $y(\{e\})>0$, there must exists an integral arborescence rooted at $v$ of depth at least $t$ which has congestion $O(\polylog)$. This is in stark constrast with the configuration LP \cite{bansal2006santa} which only enforces such constraints at depth $1$.

\subsection{Locally good solutions}

It will be useful in some parts to think of a feasible solution as a set of directed paths as follows: a solution $T$ is transformed into a set of directed paths $P_T$ which contains all directed paths in $T$ starting at the source $s$. For some $t\ge 1$, we say that a solution $T$ is $t$-\textit{locally good} if the set of path $P_T$ obtained from $T$ satisfies the following constraints:
\begin{enumerate}
    \item For any $p\in P_T$ where $p$ ends at a vertex $v$, there are at least $k_v$ paths of length $|p|+1$ containing $p$ as a prefix, and
    \item For any path $p\in P_T$ ending at vertex $u$, and any vertex $v$ at distance at most $t$ from $u$, we have that the number of paths $q\in P_T$ containing $p$ as a prefix and ending at $v$ is at most $O(\polylog)$. 
\end{enumerate}
\cite{bamas2023better} show that such a $t$-locally good solution can be transformed into a feasible integral solution by loosing an approximation rate of at most $n^{\Theta(1/t)}$.

\subsection{Subtree solutions}

An important concept which appears in many of our proofs is the notion of \textit{subtree solution.} A subtree solution $x^{(v)}$ for some vertex $v$ is a feasible solution to the assignment LP, except that we move the source at vertex $v$. Hence, for any edge $e$, we will build subtree solutions such that $x_e^{(v)}>0$ if and only if $e\in D(v)$. This is very intuitive indeed, one can see that it does not help the solution to have $x_e^{(v)}>0$ if there is not a single path from $v$ to $e$.

One can see that the standard assignment LP solution $x$ is a subtree solution for the source $s$ (i.e. we can, and will, set $x^{(s)}=x$).

By a slight abuse of notation, for any edge $e=(u,v)$ we will denote by $x^{(e)}$ the subtree solution $x^{(v)}$. One can intuitively think of this subtree solution as setting $x_e^{(e)}=1$, and then extending it with the $x^{(v)}$ values. This is equivalent to setting the source at vertex $v$, and having one dummy edge with fractional value $1$ ending at vertex $v$.

\subsection{A few technical lemmas}

In several of our constructions, we will use a certain labeling scheme, where each label correspond to some set of elements of a ground set $[m]$. Our instances will have size $2^{\Theta(m)}$. The following lemmas will be useful to bound certain quantities.

In the rest of the paper, for any $0\le x\le 1$, we define the entropy function of a Bernoulli random variable of parameter $x$ as $h(x):=-x\log_2(x)-(1-x)\log_2(1-x)$, where $\log_2$ is the logarithm in base $2$. The following lemmas can be obtained from standard techniques, and some version of them already appear in \cite{li2022polynomial}.

\begin{lemma}[Stirling's approximation]
    For any integer $n>0$, $$\log_2(n!)=n\log_2(n)-n\log_2(e)+O(\log n)\ .$$
\end{lemma}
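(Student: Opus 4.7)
The plan is to establish the standard Stirling-type estimate $\ln(n!) = n\ln n - n + O(\ln n)$ by a simple integral comparison, and then convert from the natural logarithm to $\log_2$ by dividing by $\ln 2$ and using $1/\ln 2 = \log_2 e$.

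First I would write $\ln(n!) = \sum_{k=1}^{n} \ln(k)$ and compare this sum with $\int_1^{n} \ln(x)\,dx$. Since $\ln$ is nondecreasing on $[1,\infty)$, we have $\ln(k) \ge \int_{k-1}^{k} \ln(x)\,dx$ for $k \ge 2$ and $\ln(k) \le \int_{k}^{k+1} \ln(x)\,dx$ for $k \ge 1$. Summing these two inequalities yields
\begin{equation*}
    \int_1^{n} \ln(x)\,dx \;\le\; \sum_{k=1}^{n}\ln(k) \;\le\; \int_1^{n+1}\ln(x)\,dx.
\end{equation*}
Evaluating $\int \ln(x)\,dx = x\ln x - x$ at the endpoints gives the lower bound $n\ln n - n + 1$ and the upper bound $(n+1)\ln(n+1) - n$. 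Since $(n+1)\ln(n+1) - n\ln n = \ln(n+1) + n\ln(1 + 1/n) = O(\ln n)$, both bounds agree up to an additive $O(\ln n)$ term, so $\ln(n!) = n\ln n - n + O(\ln n)$.

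Finally I would divide this identity by $\ln 2$ to obtain
\begin{equation*}
    \log_2(n!) \;=\; n\log_2 n \;-\; \frac{n}{\ln 2} \;+\; O(\ln n) \;=\; n\log_2 n \;-\; n\log_2 e \;+\; O(\log n),
\end{equation*}
using the elementary identity $1/\ln 2 = \log_2 e$ and the fact that $\ln n$ and $\log n$ differ by a constant factor, which is absorbed in the big-$O$. No step here is really an obstacle; the only subtlety worth double-checking is the telescoping step that shows $(n+1)\ln(n+1) - n\ln n = O(\ln n)$, but this is immediate from $\ln(1+1/n) \le 1/n$. A sharper constant in the error term (giving the classical $\tfrac{1}{2}\log_2 n$) could be obtained from Euler--Maclaurin, but this is unnecessary since the lemma only claims an $O(\log n)$ correction.
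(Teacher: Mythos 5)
The paper states this Stirling-type estimate without proof, as a standard fact. Your integral-comparison argument is a correct and self-contained way to establish it, and all the steps (the monotonicity sandwich, evaluating $\int \ln x\,dx = x\ln x - x$, the telescoping estimate $(n+1)\ln(n+1) - n\ln n = O(\ln n)$, and the base change via $1/\ln 2 = \log_2 e$) check out. The only cosmetic point is that the $O(\log n)$ error term is, as usual, to be read asymptotically (or as $O(1+\log n)$), since at $n=1$ the literal reading would force the error to vanish; this is an abuse of notation shared with the paper's statement and not a flaw in your argument.
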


\begin{lemma} \label{lem:binomial_coeff}
For any $\alpha,\beta \in (0,1)$ with $\alpha\ge \beta$,
    \begin{align*}
    \log_2 {\alpha m \choose \beta m} = (\alpha m)\cdot h\left( \beta/\alpha\right)\pm O(\log m)\ .
\end{align*}
\end{lemma}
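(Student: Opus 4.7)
The plan is to reduce this directly to Stirling's approximation applied to each of the three factorials in
\begin{equation*}
\binom{\alpha m}{\beta m} \;=\; \frac{(\alpha m)!}{(\beta m)!\,((\alpha-\beta) m)!},
\end{equation*}
and then verify algebraically that the leading-order terms collapse to $(\alpha m)\, h(\beta/\alpha)$. The edge cases $\beta = 0$ and $\beta = \alpha$ can be handled separately: in both, the left-hand side is $0$ and $h(\beta/\alpha) \in \{h(0), h(1)\} = \{0\}$, so the claimed equality holds trivially. Hence I may assume $0 < \beta < \alpha$, in which case all three factorials involve positive integers of size $\Theta(m)$ and Stirling applies to each with an additive error of $O(\log m)$.

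First I would apply the preceding lemma termwise. The three $n \log_2(e)$ contributions have coefficients $-\alpha m$, $+\beta m$, $+(\alpha-\beta) m$, which sum to zero, so they cancel exactly. The three $O(\log m)$ error terms combine into a single $O(\log m)$. What remains is
\begin{equation*}
\log_2 \binom{\alpha m}{\beta m} \;=\; \alpha m \log_2(\alpha m) \;-\; \beta m \log_2(\beta m) \;-\; (\alpha-\beta) m \log_2((\alpha-\beta) m) \;\pm\; O(\log m).
\end{equation*}
Splitting each $\log_2(\cdot m)$ as $\log_2(\cdot) + \log_2 m$, the $\log_2 m$ terms also collect with coefficient $\alpha m - \beta m - (\alpha - \beta) m = 0$, so they cancel, leaving
\begin{equation*}
\log_2 \binom{\alpha m}{\beta m} \;=\; \alpha m \log_2 \alpha \;-\; \beta m \log_2 \beta \;-\; (\alpha-\beta) m \log_2(\alpha-\beta) \;\pm\; O(\log m).
\end{equation*}

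Finally, I would expand the right-hand side of the target identity using the definition of $h$:
\begin{equation*}
(\alpha m)\, h(\beta/\alpha) \;=\; -\beta m \log_2(\beta/\alpha) \;-\; (\alpha-\beta) m \log_2((\alpha-\beta)/\alpha),
\end{equation*}
and regroup the $\log_2 \alpha$ contributions to obtain $\alpha m \log_2 \alpha - \beta m \log_2 \beta - (\alpha-\beta) m \log_2(\alpha-\beta)$, matching the expression derived above. Since the proof is pure bookkeeping, there is no real obstacle; the only place to be careful is in checking that the hidden constants in the $O(\log m)$ error are uniform in $\alpha, \beta$ bounded away from $0$ and $1$, which is exactly the regime in which the paper invokes the lemma (with parameters like $\rho$ being small but constant).
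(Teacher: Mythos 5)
Your proposal is correct and takes exactly the same route as the paper, which simply states ``We simply use Stirling's formula''; you have merely spelled out the bookkeeping (cancellation of the $\log_2 e$ and $\log_2 m$ terms, and matching the remainder against the expansion of $(\alpha m)\,h(\beta/\alpha)$) that the paper leaves implicit.
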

\begin{proof}
    We simply use Stirling's formula.
\end{proof}

Lemma \ref{lem:binomial_coeff} allows us for a slight abuse of notation. For the ease of exposition in the rest of this paper, we might have some binomial coefficients ${p_m\choose q_m}$ without checking that $p_m,q_m$ are integers. In all our the rest of the paper, one can think that we take the size of the ground set $m$ big enough that all quantities are integers, or that when we refer to this binomial coefficient, we can replace it by the value $2^{p_m\times h(q_m/p_m)}$.

\begin{lemma} \label{lem:binomial_function}
Consider some fixed constants $\alpha,\beta,\gamma \in (0,1)$. Let $f$ be the function
\begin{equation*}
    f:j\mapsto  {\beta m \choose j}{\alpha m \choose \gamma m-j}
\end{equation*} over the integers $j\in [\max\{0,(\gamma-\alpha)m\},\min\{\gamma,\beta\} m]$. Denote by $M$ the maximum of the function over its domain. Then, we have
\begin{enumerate}
    \item $$M=m^{O(1)}\cdot {\beta m\choose \frac{\gamma \beta}{\alpha+\beta}m}{\alpha m\choose \frac{\gamma \alpha}{\alpha+\beta}m}\ ,$$
    \item For any $j$ in the domain and any $\delta=O(1)$, we have that 
    \begin{equation*}
        \frac{f(j)}{f(j\pm \delta)}\le m^{\Theta(1)}\ , \text{ and}
    \end{equation*}
    \item There exists some $\delta=O(1)$ such that the function $f$ is increasing on the interval $[\max\{0,(\gamma-\alpha)m\},\frac{\gamma \beta}{\alpha+\beta}m-\delta]$, and decreasing on the interval $[\frac{\gamma \beta}{\alpha+\beta}m+\delta,\min\{\gamma,\beta\} m]$.
\end{enumerate}
\end{lemma}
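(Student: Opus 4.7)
The plan is to analyze the discrete ratio $r(j) := f(j+1)/f(j)$ and exploit its monotonicity. A direct calculation gives
\[
r(j) \;=\; \frac{(\beta m - j)(\gamma m - j)}{(j+1)(\alpha m - \gamma m + j + 1)}.
\]
Throughout the domain all four linear factors are positive; moreover, the two factors in the numerator are strictly decreasing in $j$ and the two factors in the denominator are strictly increasing. Therefore $r$ is strictly decreasing on the domain, which proves that $f$ is unimodal: there exists a unique integer $j^{\star\star}$ with $r(j^{\star\star}-1)\ge 1 > r(j^{\star\star})$, below which $f$ is non-decreasing and above which $f$ is strictly decreasing. This gives claim 3 up to locating $j^{\star\star}$.

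To pin down $j^{\star\star}$, I would solve $r(j)=1$ asymptotically. Clearing denominators yields $(\beta m - j)(\gamma m - j) = (j+1)(\alpha m - \gamma m + j + 1)$. The $j^2$ terms cancel, and collecting the remaining $m\,j$ and $m^2$ terms leaves $\beta\gamma m^2 = (\alpha+\beta)\,m\,j + O(m + j)$, whose real solution is $j = \tfrac{\gamma\beta}{\alpha+\beta}m + O(1)$. Since $r$ is strictly monotone, the integer peak $j^{\star\star}$ lies within some absolute constant $\delta$ of $\tfrac{\gamma\beta}{\alpha+\beta}m$, completing claim 3.

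For claim 1, rather than estimating $f$ directly at a non-integer point, I would combine Vandermonde's identity $\sum_j f(j) = \binom{(\alpha+\beta)m}{\gamma m}$ with the fact that the sum has $O(m)$ nonzero terms. Since $f(j^{\star\star})=M$, this sandwiches
\[
\frac{1}{m+1}\binom{(\alpha+\beta)m}{\gamma m} \;\le\; M \;\le\; \binom{(\alpha+\beta)m}{\gamma m}.
\]
By Lemma \ref{lem:binomial_coeff}, $\log_2\binom{(\alpha+\beta)m}{\gamma m} = (\alpha+\beta)m\cdot h(\gamma/(\alpha+\beta)) \pm O(\log m)$. Two further applications of the same lemma give $\log_2\bigl[\binom{\beta m}{\gamma\beta m/(\alpha+\beta)}\binom{\alpha m}{\gamma\alpha m/(\alpha+\beta)}\bigr] = (\alpha+\beta)m\cdot h(\gamma/(\alpha+\beta)) \pm O(\log m)$, because the entropy argument $\gamma/(\alpha+\beta)$ is the same in both factors. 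Matching these two estimates yields claim 1 up to the asserted $m^{O(1)}$ factor.

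Finally, for claim 2, the explicit formula for $r(j)$ shows that each of its four factors lies in the range $[1,\,O(m)]$, so $r(j)\in[m^{-O(1)}, m^{O(1)}]$ throughout the domain. Iterating $\delta=O(1)$ times gives $f(j)/f(j\pm\delta) \le m^{O(1)}$ as required. The main subtlety of the whole argument is really just in claim 1: a direct Stirling estimate of $f(j^{\star\star})$ would require resolving $j^{\star\star}$ to additive $o(1)$ precision, which the monotonicity argument alone does not deliver, and the Vandermonde detour is precisely what lets us bypass this and route through a binomial coefficient whose argument is known exactly.
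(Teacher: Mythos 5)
Your proof is correct, and the treatment of claim 1 takes a genuinely different route from the paper's. For claims 2 and 3 you do essentially what the paper does: both proofs compute the ratio $r(j) = f(j+1)/f(j)$ explicitly, both locate the peak by solving $r(j)\ge 1$, and both bound $f(j)/f(j\pm\delta)$ via products of $r$'s or factorial ratios, which are the same thing. The interesting divergence is claim 1. The paper derives it \emph{from} claims 2 and 3: the peak is at some integer within $O(1)$ of $\frac{\gamma\beta}{\alpha+\beta}m$, and claim 2 says moving by $O(1)$ only costs $m^{O(1)}$, so $M$ equals $f$ at the nominal (possibly non-integer) point up to $m^{O(1)}$, interpreted via the paper's standing convention that $\binom{pm}{qm}$ stands for $2^{pm\cdot h(q/p)}$ when the arguments are non-integral. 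Your Vandermonde route avoids all of this: you sandwich $M$ between $\frac{1}{O(m)}\binom{(\alpha+\beta)m}{\gamma m}$ and $\binom{(\alpha+\beta)m}{\gamma m}$ using $\sum_j f(j)=\binom{(\alpha+\beta)m}{\gamma m}$ and the $O(m)$ bound on the number of nonzero summands, then apply Lemma~\ref{lem:binomial_coeff} three times, noting that both binomials in the target expression have entropy argument $\gamma/(\alpha+\beta)$. This is self-contained, does not need claim 2 or 3, and dodges the non-integer issue entirely. It is arguably cleaner. One small caution in your claim 2 step: the assertion that ``all four linear factors lie in $[1,O(m)]$'' is only correct when both $j$ and $j+1$ are in the domain (at the right endpoint a numerator factor is zero); this is exactly the regime in which the ratio is taken, so the argument is fine, but the phrase ``throughout the domain'' is slightly misleading as written.
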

\begin{proof}
    Let us prove the second statement first. Let us consider some non-negative $\delta=\Theta(1)$. We compute 
    \begin{align*}
        \frac{f(j)}{f(j+\delta)}&= \frac{{\beta m \choose j}{\alpha m \choose \gamma m-j}}{{\beta m \choose j+\delta}{\alpha m \choose \gamma m-j-\delta}}\\
        &=\frac{(j+\delta)!}{j!}\cdot \frac{(\beta m - j-\delta)!}{(\beta m-j)!}\cdot \frac{(\gamma m - j -\delta)!}{(\gamma m - j)!}\cdot \frac{(\alpha m - \gamma m +j + \delta )!}{(\alpha m - \gamma m +j)!}\\
        &\le \frac{(j+\delta)!}{j!}\cdot \frac{(\alpha m - \gamma m +j + \delta )!}{(\alpha m - \gamma m +j)!}\\
        &\le (j+\delta)^{\delta}\cdot (\alpha m +j + \delta )^{\delta}\\
        &\le m^{O(1)}.
    \end{align*}
    If $\delta=\Theta(1)$ is negative, we obtain in the same way 
    \begin{align*}
        \frac{f(j)}{f(j+\delta)}&\le  \frac{(\beta m - j-\delta)!}{(\beta m-j)!}\cdot \frac{(\gamma m - j -\delta)!}{(\gamma m - j)!}\\
        &\le (\beta m-\delta)^{-\delta}\cdot (\gamma m - \delta )^{-\delta}\\
        &\le m^{O(1)}.
    \end{align*}
    For the third statement, we compute
    \begin{align*}
        &\frac{f(j+1)}{f(j)} = \frac{(\beta m - j)(\gamma m - j)}{(j+1)(\alpha m -\gamma m + j +1)}\ge 1\\
        &\iff (\beta m - j)(\gamma m - j)\ge (j+1)(\alpha m -\gamma m + j +1)\\
        &\iff j\le \frac{\beta \gamma m^2-\alpha m+\gamma m - 1}{(\alpha + \beta)m+2} = \frac{\beta \gamma m^2 (1+O(1/m))}{(\alpha + \beta)m (1+O(1/m))}=  \frac{\beta \gamma m }{\alpha + \beta}+\Theta(1)\ .
    \end{align*}
    For the first property, we use the other two properties which were proven. By the third property, the maximum is around the point $j^*:=\frac{\beta \gamma m }{\alpha + \beta}$, and we only loose a factor $m^{O(1)}$ by using this approximation using the second property.
\end{proof}

\section{A lower bound for one round of Sherali-Adams}
\label{sec:sherali1round}

In this section, we describe a lower bound for $1$ round of the Sherali-Adams hierarchy on the naive assignment LP. We will use our construction from Section \ref{sec:construction} with $\ell=3$. For clarity, we describe here completely the instance. We only prove in this section that the constructed instance contains a feasible solution to the Sherali-Adams hierarchy (up to some $(\log n)^{O(1)}$ factor). We prove in Section \ref{sec:construction} (specifically Lemma \ref{lem:no_integral_sol}) that there is no integral solution with an approximation rate $n^{o(1)}$.

\subsection{The instance}
$\rho$ will be chosen as a small enough constant. We have a ground set $\mathcal U=[m]$ of size $m$ where $\rho m\in\mathbb N$. Our graph has $4$ layers $L_0,L_1,L_2,L_3$ of vertices. The layer $L_0$ contains only one vertex, the source $s$. $L_3$ is exactly the set of all sinks in this instance.

\paragraph{Vertex set.} In $L_0$, there is only the source $s$. In $L_1$, there is exactly one vertex $u$ for each set of $\rho m$ elements of the ground set $\mathcal U$ (hence ${m\choose \rho m}$ vertices). In $L_2$, there is exactly one vertex $u$ for each set of $2\rho m$ elements of the ground set $\mathcal U$ (hence ${m\choose 2\rho m}$ vertices). In $L_3$, there is exactly one \textit{sink} $u$ for each set of $\rho m$ elements of the ground set $\mathcal U$ (hence ${m\choose \rho m}$ sinks). For any vertex, we denote by $S_u$ the set associated to $u$ (one can think that $S_s=\emptyset$). 

\paragraph{Edge set.} There is a directed edge from $s$ to every vertex in $L_1$. There is a directed edge $(u,v)$ from $u\in L_1$ to $v\in L_2$ if and only if $S_u\subseteq S_v$. Similarly, there is a directed edge $(u,v)$ from $u\in L_2$ to $v\in L_3$ if and only if $S_v\subseteq S_u$.

\paragraph{Required out-degrees.} The source requires 
\begin{equation*}
    k_s:=\frac{{m\choose \rho m}}{{(1-\rho)m\choose \rho m}}
\end{equation*} outgoing edges to be covered in the arborescence. Each vertex $u\in L_1$ needs 
\begin{equation*}
    k_u=k_1:=\frac{{(1-\rho)m\choose \rho m}}{{2\rho m\choose \rho m}}
\end{equation*} outgoing edges to be covered. Finally, each vertex $u\in L_2$ needs out-degree
\begin{equation*}
    k_u=k_2:={2\rho m\choose \rho m}
\end{equation*} to be covered.

Clearly, the total size of the instance is $n=2^{\Theta (m)}$, and we have that $k_u=n^{\Omega(1)}$ for all $u$.

\subsection{Feasibility of the assignment LP}
For convenience, we restate here the naive assignment LP, before defining our subtree solutions. The assignment LP reads as follows.
 \begin{align*}
    x(\delta^+(s)) &\ge k_s \\
    x(\delta^+(v)) &\ge k_v\cdot x(\delta^-(v)) &\forall v\in V\setminus (\{s\}\cup T)\\
    x(\delta^-(v)) &\le 1 &\forall v\in V\\
    0\le x_e&\le 1 &\forall e\in E\ .
\end{align*}

Recall that $L_i$ is the set of edges whose endpoint is in $L_i$. It is easy to verify that $x$ defined by
\begin{equation*}
    x_e := \begin{cases}
        {(1-\rho)m \choose \rho m}^{-1} &\text{ if } e\in L_1, \\
        {(1-\rho)m \choose \rho m}^{-1} \cdot {2\rho m \choose \rho m}^{-1} &\text{ if } e\in L_2, \\
        {(1-\rho) m \choose \rho m}^{-1} &\text{ if } e\in L_3
        \end{cases}
\end{equation*}
is a feasible solution to this LP, with $k_s= {m \choose \rho m}{(1-\rho)m \choose \rho m}^{-1}$, $k_v={(1-\rho)m \choose \rho m}{2\rho m\choose \rho m}^{-1}$ for all $v\in L_1$, $k_v={2\rho m\choose \rho m}$ for all $v\in L_2$.

\subsection{Feasibility of SA(1)}

In this section, we prove that there exists a feasible solution to the first level of the Sherali-Adams hierarchy. We show this by constructing a distribution over possible edge subsets $A$ which survives certain conditioning together with Theorem \ref{thm:sherali_distributions}.

Before proceeding, it is useful to consider two natural distributions which fail, but are useful for intuition.

\subsubsection{Failed attempt 1}

The first idea that comes to mind is simply the independent distribution: each edge is in $A$ independently with probability $x_e$.

Clearly, this distribution satisfy all constraints in expectation if we do not condition by any event. However, there is a fundamental issue if we condition by an event of the form $\mathcal E=\{e_1\in A\}$ for some $e_1=(s,v)\in L_1$ for instance. Indeed, if we consider the covering constraint at vertex $v$, we need that 
\begin{equation*}
    \sum_{e\in \delta^+(v)} \mathbb P[e\in A \mid \mathcal E]\ge k_1 \cdot \mathbb P[e_1\in A \mid \mathcal E]=k_1\ .
\end{equation*}
But since the edges appear in $A$ independently of each other, we obtain that 
\begin{equation*}
    \sum_{e\in \delta^+(v)} \mathbb P[e\in A \mid \mathcal E] =  \sum_{e\in \delta^+(v)} \mathbb P[e\in A]= \sum_{e\in \delta^+(v)} x_e = k_v\cdot x_{e_1}\ .
\end{equation*}
Now remember that $x_{e_1}=n^{-\Omega(1)}$, so this distribution looses a polynomial factor. It highlights that some correlations are necessary to fool the Sherali-Adams hierarchy.

\subsubsection{Failed attempt 2}

Given the rounding algorithms in previous works, a natural idea to fix the issue and introduce correlations would be to sample the edges according to what the rounding algorithm would do. More precisely, we would proceed as follows.
\begin{enumerate}
    \item Sample each edge $e=(s,v)\in L_1$ independently with probability $x_e$.
    \item For each vertex $v\in L_1$ which has an incoming edge from the previous step, sample each edge $e\in \delta^+(v)$ independently with probability ${2\rho m\choose \rho m}^{-1}$.
    \item For each vertex $v\in L_2$ which has an incoming edge from the previous step, select all the edges $e\in \delta^+(v)$.
\end{enumerate}

The sampling probabilities at each step are selected so that even after conditioning by some event $\mathcal E=\{e_1\in A\}$, the covering constraints remain satisfied. This distribution will fix the issue that independent distributions had, but another problem appears. Indeed, let us consider the event $\mathcal E=\{e_1\in A\}$ for some $e_1=(s,v)\in L_1$, and the packing constraints at the sink $t$ such that $S_t=S_v$. In that case, we need to verify that

\begin{equation*}
    \sum_{e\in \delta^-(t)} \mathbb P[e\in A \mid \mathcal E]\le 1\ .
\end{equation*}

We claim that, $\mathbb P[e\in A \mid \mathcal E]\ge {2\rho m\choose \rho m}^{-1}$ for all $e\in \delta^-(t)$. Indeed, since $S_t=S_v$, for all edges $e=(u,t)\in \delta^-(t)$, we have that $(v,u)\in E$. Hence we write
\begin{equation*}
    \mathbb P[e\in A \mid \mathcal E] \ge  \mathbb P[e\in A \mid \{(v,u)\in A\}]\cdot  \mathbb P[\{(v,u)\in A\} \mid \mathcal E] = 1\cdot {2\rho m\choose \rho m}^{-1}\ .
\end{equation*}

Finally, we note that $\delta^-(t)={(1-\rho )m\choose \rho m}$, therefore 
\begin{equation*}
    \sum_{e\in \delta^-(t)} \mathbb P[e\in A \mid \mathcal E]\ge {(1-\rho )m\choose \rho m} \cdot {2\rho m\choose \rho m}^{-1} = n^{\Omega(1)}\ ,
\end{equation*}
and we loose a polynomial factor again. The intuition is that the rounding algorithm creates too much correlations as it has ``2 layers'' of correlation. Intuitively, we need a distribution which has only one layer of correlation. Before introducing our distribution, we define our subtree solutions, which will be crucial to our distribution.

\subsubsection{Subtree solutions}
For any edge $e=(u,v)$, we define $S_e=S_v$ the set corresponding to the endpoint of $e$. We also recall that $D(e)$ is the set of descendant edges of $e$ (i.e. the set of edges $e'$ such that there exists a directed path starting at $e$ and ending at $e'$).

\paragraph{Subtree solution for $e=(u,v)\in L_1$.} For any edge $e\in L_1$, we set 
\begin{equation*}
    x_{e'}^{(e)}=\begin{cases}
        1 \mbox{ if } e'=e\\
        {2\rho m\choose \rho m}^{-1} \mbox{ if } e'\in L_2\cap D(e)\\
        \frac{{(1-\rho )m\choose \rho m}}{{m\choose \rho m}}\cdot {m-2\rho m+|S_{e'}\cap S_e|\choose |S_{e'}\cap S_e|}^{-1}\mbox{ if } e'\in L_3\cap D(e)\\
        0 \mbox{ otherwise.}
    \end{cases}
\end{equation*}

We claim that the above subtree solution is a feasible solution to the assignment LP of the instance given by the same graph, where the source placed at the endpoint $v$ of the edge $e=(u,v)$. Most of the constraints are easy to check, except the packing constraint $ x(\delta^-(v)) \le 1$ for $v\in L_3$, and the covering constraint  $x(\delta^+(v)) = k_v\cdot x(\delta^-(v))$ for $v\in L_2$. 

To see why these constraints are feasible, imagine the following process. For all edges in $\delta^+(v)$, we set the fractional value to ${2\rho m\choose \rho m}^{-1}$. Therefore, every vertex $u'$ in $L_2$ such that $(v,u')\in E$ has in-degree ${2\rho m\choose \rho m}^{-1}$ and needs fractional out-degree equal to $1$ (because $k_{u'}={2\rho m\choose \rho m}$). Then, all the sinks push one unit of flow uniformly towards the vertices in $L_2\cap D(v)$ (recall that $D(v)$ are the vertices reachable from $v$). Formally, if some sink $t$ has $d$ neighbors in $L_2\cap D(v)$, we set $x_{e'}^{(e)}=1/d$ for all edges $e'$ going from one of those neighbors. One can compute this number, as it is exactly equal to the number of sets of size $2\rho m$ containing both $S_t$ and $S_e$, i.e. this is equal to ${m-2\rho m+|S_t\cap S_e|\choose |S_t\cap S_e|}$, hence the expression. Clearly, no vertex has fractional in-degree more than 1, hence the packing constraint is satisfied. We need one last crucial check, which is to verify that the vertices in $L_2\cap D(v)$ indeed have fractional out-degree at least $1$. To see this, note the following two observations:
\begin{enumerate}
    \item By symmetry, all vertices in $L_2\cap D(v)$ receive the same out-degree. This is because all vertices in $L_2\cap D(v)$ are entirely defined by a set of size $2\rho m$ containing $S_e$, which have no special property, except that they all contain $S_e$. Formally, for any vertex $u\in L_2\cap D(v)$, one can count that there are exactly ${\rho m\choose j}{\rho m\choose \rho m -j}$ sinks $t$ connected to $u$ such that $|S_t\cap S_v|=j$. This number does not depend on $u$.
    \item All the sinks in $L_3$ are reachable from $v$. This is because a sink $t$ is reachable from $v$ if, and only if there exists a set of size $2\rho m$ containing both $S_v$ and $S_t$. But $|S_v\cup S_t|\le |S_v|+|S_t|=2\rho m$, so there always exists such a set. 
\end{enumerate}
Combining the above two observations, we conclude that each vertex in $L_2\cap D(v)$ receives a fractional out-degree exactly equal to 
\begin{equation*}
    \frac{|T|}{|L_2\cap D(v)|}=\frac{{m\choose \rho m}}{{(1-\rho)m\choose \rho m}}\ ,
\end{equation*}
hence we can even afford to scale down by the factor $ \frac{{(1-\rho )m\choose \rho m}}{{m\choose \rho m}}$ and still get a feasible solution. This is exactly how we set the variables for the edges in $L_3$.

\paragraph{Subtree solution for $e=(u,v)\in L_2$.} If $e\in L_2$, we set 
\begin{equation*}
    x_{e'}^{(e)}=\begin{cases}
        1 \mbox{ if } e'=e\\
        1 \mbox{ if } e'\in L_3\cap D(e)\\
        0 \mbox{ otherwise.}
    \end{cases}
\end{equation*}
It is easy to check all the constraints.

\paragraph{Subtree solution for $e=(u,v)\in L_3$.} If $e\in L_3$, we set 
\begin{equation*}
    x_{e'}^{(e)}=\begin{cases}
        1 \mbox{ if } e'=e\\
        0 \mbox{ otherwise.}
    \end{cases}
\end{equation*}
The constraints are trivially satisfied.

\subsubsection{The shadow distribution}

Using subtree solutions, we are ready to define our distribution of edges to fool the Sherali-Adams hierarchy via Theorem \ref{thm:sherali_distributions}. We sample a set $A$ of \textit{active} edges in the following way.

\begin{enumerate}
    \item We select a set of \textit{shadow} edges $S$ which contains each edge $e\in E$ independently with probability $x_e$. 

    \item For each shadow edge $e\in S$, we create a set $S_e$ which contains each edge $e'\in E$ independently with probability $x_{e'}^{(e)}$ (recall that $x_e^{(e)}=1$, so that $e\in S_e$ with probability 1). We will say that $e\in S$ ``triggers a subtree''. 

    \item We return the set $A=\cup_{e\in S}S_e$.
\end{enumerate}

In the following, it will also be useful to define $A'$ to be the multiset of active edges (indeed, an edge might be selected several times in the above process). 

The rest of this section is to prove that the shadow distribution with our choice of subtree solutions works. We start by a few useful lemmas.

\subsubsection{Some useful lemmas}
In this part, we prove some useful lemmas which will help later. From now on, we denote by $A'$ the set of active edges counted with their multiplicity (indeed, note that an edge can appear in the subtree of several shadow edges).

We denote by $\mathcal E$ the event by which Sherali-Adams can condition. Note that it must be of the form $\mathcal E=\{e_1\in A\}$ or $\mathcal E=\{e_1\notin A\}$ for some edge $e_1$. In the former case, we say that $\mathcal E$ is a \textit{positive} event, and a \textit{negative} event in the latter case.

We define by $s_e$ the probability of the event that $\{e\in A\}$ in the above probability distribution. We start with the following crucial lemma. Note that if this lemma was not true, it would not even be clear how to verify the constraints without conditioning by any event $\mathcal E$. 
\begin{lemma}
    \label{lem:sherali_lifteddistribution1}
    For any edge $e\in E$, we have that $x_e\le s_e\le 6x_e$.
\end{lemma}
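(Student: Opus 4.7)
My plan is to handle the two inequalities by a straightforward case analysis based on the layer of $e$, combined with a union-bound estimate. The lower bound is immediate: conditional on $e\in S$ we have $e\in S_e$ deterministically (since $x_e^{(e)}=1$), so $e\in A$ and thus $s_e \ge \mathbb P[e\in S] = x_e$.

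For the upper bound, I would union-bound over which shadow edge is responsible for placing $e$ into $A$:
\begin{equation*}
s_e \;\le\; \sum_{e'} \mathbb P[e'\in S]\cdot \mathbb P[e\in S_{e'}\mid e'\in S] \;=\; \sum_{e'\in A(e)\cup\{e\}} x_{e'}\cdot x_e^{(e')},
\end{equation*}
where the sum collapses to $A(e)\cup\{e\}$ because $x_e^{(e')}=0$ otherwise. The plan is then to bound this by $O(x_e)$ layer by layer. If $e\in L_1$, then $A(e)=\emptyset$ and the sum is just $x_e$. If $e=(u,v)\in L_2$, the only non-trivial ancestor is $(s,u)$, and plugging in the stated values yields $x_{(s,u)}\cdot x_e^{((s,u))}={(1-\rho)m\choose \rho m}^{-1}{2\rho m\choose \rho m}^{-1}=x_e$, so $s_e\le 2x_e$.

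The only non-trivial case is $e=(u,t)\in L_3$, whose ancestors split into (i) the ${2\rho m\choose \rho m}$ edges $(w,u)\in L_2$ with $S_w\subseteq S_u$, and (ii) the same number of edges $(s,w)\in L_1$ with $S_w\subseteq S_u$. Part (i) is mechanical: each summand equals ${(1-\rho)m\choose \rho m}^{-1}{2\rho m\choose \rho m}^{-1}\cdot 1$ (using $x_{e'}^{(e'')}=1$ for $e'\in L_3\cap D(e'')$ when $e''\in L_2$), and they sum to exactly $x_e$. Part (ii) is where the work lies: partitioning the relevant $w$'s by $j:=|S_w\cap S_t|$, the contribution rewrites as $T/{m\choose \rho m}$, where
\begin{equation*}
T \;:=\; \sum_{j=0}^{\rho m}{\rho m\choose j}{\rho m\choose \rho m-j}{(1-2\rho)m+j\choose j}^{-1}.
\end{equation*}

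The main obstacle is evaluating $T$, and I do not plan to do this by Stirling. Instead, the key observation I will exploit is that $T$ is nothing but the unscaled fractional out-mass at any $v\in L_2\cap D((s,w))$ in the sink-uniform construction of the subtree solution $x^{((s,w))}$ already given in this section. The feasibility argument for that subtree solution --- which used both that every $v\in L_2\cap D((s,w))$ is reachable from all ${m\choose \rho m}$ sinks and that by symmetry all such $v$ receive the same total out-mass --- precisely pins down $T={m\choose \rho m}/{(1-\rho)m\choose \rho m}$. This yields Part (ii) $=T/{m\choose \rho m}=x_e$, so $s_e\le 3x_e$ in the $L_3$ case. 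Combining all three cases gives $x_e\le s_e\le 3x_e\le 6x_e$ as required; the constant $6$ in the statement is comfortably loose.
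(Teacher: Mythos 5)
Your proof is correct and takes a genuinely different route from the paper's. The paper computes $s_e$ exactly as $1-\prod_{e'}(1-x_{e'}x_e^{(e')})$ and then applies $1-y>e^{-2y}$ termwise to reduce the bound to the single quantity $\sum_{e'\in L_1\cap A(e)} x_{e'}x_e^{(e')}$, incurring a constant of $6$; you instead apply the plain union bound $s_e\le\sum_{e'}x_{e'}x_e^{(e')}$ directly, which is both simpler and gives the tighter constant $3$. More substantively, your evaluation of the $L_1$ ancestor sum differs from the paper's: the paper averages the sum over all $e''\in\delta^-(v)$ and uses the \emph{in-degree} constraint of the subtree solution at $v$ (the exchange of sums makes $\sum_{e''\in\delta^-(v)}x_{e''}^{(e')}=\tbinom{(1-\rho)m}{\rho m}/\tbinom{m}{\rho m}$ appear), whereas you recognize the sum $\sum_{w\subseteq S_u}\tbinom{m-2\rho m+|S_w\cap S_t|}{|S_w\cap S_t|}^{-1}$ directly as the unscaled \emph{out-mass} at a vertex of $L_2\cap D(e')$, which is pinned to $\tbinom{m}{\rho m}/\tbinom{(1-\rho)m}{\rho m}$ by the two observations in the subtree-solution feasibility check. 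This identification is valid: with $S_t,S_w\subseteq S_u$ both of size $\rho m$, the two sums (one ranging over $w$ with $t$ fixed, the other over $t'$ with $w$ fixed) are structurally identical. What the paper's route buys is that it only relies on the in-degree packing constraint, which is perhaps a more robust invariant; what your route buys is a shorter argument, no exponential inequality, and a better constant. One cosmetic point: the symbol $T$ you introduce for the sum collides with the paper's use of $T$ for an integral solution, so it should be renamed in a final write-up.
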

\begin{proof}
First, we note that $s_e\ge \mathbb P\left[\{e\in S\}\cap \{e\in S_e\}\right]=x_e$. For the upper-bound, we compute the probability $s_e$ of any edge $e$ appearing in the set $A$. We clearly have that 
    \begin{align*}
        \mathbb P[e\in A]&=1-\mathbb P[e\notin A]\\
        &=1-\mathbb P\left[\bigcap_{e'\in E} \{e'\notin S\}\cup \{e\notin S_{e'}\}\right]\\
        &= 1 -\prod_{e'\in E} \mathbb P\left[\{e'\notin S\}\cup \{e\notin S_{e'}\}\right]\\
        &= 1-\prod_{e'\in E} \left(1-\mathbb P\left[\{e'\in S\}\cap \{e\in S_{e'}\}\right]\right)\\
        &= 1-\prod_{e'\in E} \left(1-x_{e'}\cdot x_e^{(e')}\right)\ ,
    \end{align*}
    where the third equality uses the fact that each edge $e'$ is selected as a shadow independently of other edges, and that each edge $e'$ selects its set $S_{e'}$ independently of other edges. Hence, we have 
\begin{align*}
    s_e = 1-\prod_{e'\in E}\left(1-x_{e'}\cdot x_e^{(e')}\right)\ .
\end{align*}
If $e\in L_1$, then note that $e$ is given a non-zero value in a subtree solution $x^{(e')}$ if and only if $e'=e$. Therefore, we have
\begin{equation*}
    s_e = 1- \left(1-x_e\cdot x_e^{(e)}\right) = 1- \left(1-x_e\right)=x_e\ .
\end{equation*}
If $e\in L_2$, let us denote by $e_1$ the edge on the unique directed path from $s$ to $e$. Then, clearly, 
\begin{align*}
    s_e &= 1-\left(1-x_{e_1}\cdot x_e^{(e_1)}\right)\cdot  \left(1-x_e\cdot x_e^{(e)}\right)\\
    &= 1-\left(1-x_e\right)^2 \le  2x_e\ .
\end{align*}
If $e\in L_3$, let us denote by $A(e)$ the set of ancestor edges of $e$. In the following, we use the inequality $e^{-2x}<1-x$ for $0\le x<1/2$. Then, clearly, 
\begin{align*}
    s_e &= 1-\prod_{e'\in L_1\cap A(e)}\left(1-x_{e'}\cdot x_e^{(e')}\right)\cdot  \prod_{e'\in L_2\cap A(e)}\left(1-x_{e'}\cdot x_e^{(e')}\right)\cdot (1-x_e\cdot x_e^{(e)})\\
    &= 1-(1-x_e)\cdot \left(1-x_e {2\rho m\choose \rho m}^{-1}\right)^{{2\rho m\choose \rho m}} \cdot \prod_{e'\in L_1\cap A(e)}\left(1-x_{e'}\cdot x_e^{(e')}\right)\\
    &\le  1-\exp\left(-4x_e-2\cdot \sum_{e'\in L_1\cap A(e)} x_{e'}\cdot x_e^{(e')}\right)\ .
\end{align*}
The last sum is slightly more tricky to compute. Let $v$ be the endpoint of $e$. We remark that all edges in $\delta^-(v)$ play a symmetric role so that 
\begin{equation*}
     \sum_{e'\in L_1\cap A(e)} x_{e'}\cdot x_e^{(e')}=\sum_{e'\in L_1\cap A(e'')} x_{e'}\cdot x_{e''}^{(e')}\ ,
\end{equation*}
for any $e''\in \delta^-(v)$. To see this, note that the value $x_{e'}$ is the same for all edges in $L_1$. Hence, one only needs to count how many edges $e'\in L_1$ are ancestors of $e''$ and are such that $x_{e''}^{(e')}$ takes a specific value $\delta$. This value $\delta$ is entirely determined by the size of the intersection $|S_{e''}\cap S_{e'}|$ and when summing over all possible edges in $L_1$, we are in fact summing over all sets of size $\rho m$, which makes the symmetry of the construction apparent.
Hence, we can write (recall that $A(v)$ is the set of ancestors of $v$)
\begin{align*}
    \sum_{e'\in L_1\cap A(e)}x_{e'}\cdot x_e^{(e')} &= \frac{1}{|\delta^-(v)|}\cdot \sum_{e''\in \delta^-(v)}\sum_{e'\in L_1\cap A(e'')}x_{e'}\cdot x_{e''}^{(e')}\\
    &= \frac{1}{|\delta^-(v)|}\cdot \sum_{e'\in L_1\cap A(v)}x_{e'} \sum_{e''\in \delta^-(v)} x_{e''}^{(e')}\\
    &= \frac{1}{|\delta^-(v)|}\cdot \sum_{e'\in L_1\cap A(v)}x_{e'} \frac{{(1-\rho )m \choose \rho m}}{{m\choose \rho m}}\\
    &= \frac{1}{|\delta^-(v)|}\cdot \sum_{e'\in L_1\cap A(v)}{(1-\rho) m\choose \rho m}^{-1} \frac{{(1-\rho )m \choose \rho m}}{{m\choose \rho m}}\\
    &=  \frac{1}{|\delta^-(v)|}\cdot \sum_{e'\in L_1\cap A(v)} \frac{1}{{m\choose \rho m}}\\
    &=  \frac{1}{|\delta^-(v)|} = {(1-\rho)m\choose \rho m}^{-1}=x_e\ .
\end{align*}
Therefore 
\begin{equation*}
    s_e \le  1-\exp\left(-4x_e-2\cdot \sum_{e'\in L_1\cap A(e)} x_{e'}\cdot x_e^{(e')}\right) = 1-\exp\left(-6x_e\right)\le 6x_e\ ,
\end{equation*}
where we used the inequality $\exp(x)\ge 1+x$ for all $x$.
\end{proof}
We continue with the following easy, but useful lemma  which intuitively says that in any subtree $S_e$, no edge dominates the total expected out-degree at some vertex $v$.
\begin{lemma}
\label{lem:no_edge_dominates}
    For any $e\in S$, any vertex $v$, and any $e'\in \delta^+(v)$, we have that 
    \begin{equation*}
        \mathbb E[|(\delta^+(v)\setminus \{e'\}) \cap S_e|]\ge (1-o(1))\cdot \mathbb E[|\delta^+(v) \cap S_e|]
    \end{equation*}
\end{lemma}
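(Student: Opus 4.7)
} Since the edges of $S_e$ are sampled \emph{independently} with probabilities $x_{e''}^{(e)}$, linearity of expectation reduces the claim to the deterministic inequality
\begin{equation*}
    x_{e'}^{(e)} \le o(1) \cdot \sum_{e''\in \delta^+(v)} x_{e''}^{(e)}.
\end{equation*}
We may assume $\delta^+(v)\cap D(e)\ne \emptyset$, as otherwise both sides vanish. This forces $v=\epath(e)$ or $v\in D(\epath(e))$, which in particular rules out $e'=e$. The plan is then to split into cases according to the layer of $e$ and use the explicit formulas for $x^{(e)}$ from the previous subsection.

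For $e\in L_3$ the set $\delta^+(v)$ is empty and the statement is vacuous. For $e\in L_2$ and $v=\epath(e)\in L_2$, all outgoing values satisfy $x_{e''}^{(e)}=1$, so the claim reduces to $1/|\delta^+(v)| = {2\rho m\choose \rho m}^{-1}=n^{-\Omega(1)}$. Exactly the same argument handles $e\in L_1$ with $v=\epath(e)\in L_1$: every $x_{e''}^{(e)}$ equals ${2\rho m\choose \rho m}^{-1}$ on a set of size ${(1-\rho)m\choose \rho m}=n^{\Omega(1)}$.

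The only substantive case is $e\in L_1$ and $v\in L_2\cap D(e)$. The construction of the subtree solution $x^{(e)}$ already guarantees $\sum_{e''\in\delta^+(v)} x_{e''}^{(e)}\ge 1$, while an individual value depends only on $j=|S_{e''}\cap S_e|\in[0,\rho m]$ and equals
\begin{equation*}
    x_{e''}^{(e)} = \frac{\binom{(1-\rho)m}{\rho m}}{\binom{m}{\rho m}} \cdot \binom{(1-2\rho)m + j}{j}^{-1}\ ,
\end{equation*}
which is maximized at $j=0$ with value $\binom{(1-\rho)m}{\rho m}/\binom{m}{\rho m}$. Applying Lemma \ref{lem:binomial_coeff}, the log of this ratio is $-m\,h(\rho)+(1-\rho)m\,h(\rho/(1-\rho))+O(\log m)$.

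The only real obstacle is showing that this last expression is $-\Omega(m)$. This amounts to the entropy inequality $h(\rho) > (1-\rho)\,h(\rho/(1-\rho))$ with a constant gap; expanding both sides yields a positive constant whenever $\rho$ is a small enough absolute constant, which is already how $\rho$ was chosen. The ratio is then $2^{-\Omega(m)}=n^{-\Omega(1)}=o(1)$, which combined with the lower bound $\sum_{e''} x_{e''}^{(e)}\ge 1$ completes the proof.
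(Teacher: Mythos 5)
Your proposal is correct and takes essentially the same approach as the paper: both treat the $e\in L_1$, $v\in L_2$ case as the only substantive one, bound $\max_{e''\in\delta^+(v)} x_{e''}^{(e)}$ by the $j=0$ value $\binom{(1-\rho)m}{\rho m}/\binom{m}{\rho m}$, compare it to the sum $\sum_{e''} x_{e''}^{(e)}=1$, and dispatch the remaining cases by symmetry (all $x_{e''}^{(e)}$ equal within $\delta^+(v)$) plus the fact that every out-degree is $\omega(1)$. You are just slightly more explicit than the paper in verifying that $\binom{(1-\rho)m}{\rho m}/\binom{m}{\rho m}=n^{-\Omega(1)}$ via the entropy gap $h(\rho)>(1-\rho)h(\rho/(1-\rho))$, which the paper leaves implicit.
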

\begin{proof}
    This is easy to show, it suffices to check the three cases $e\in L_1,e\in L_2,e\in L_3$. We start by the non-trivial case, which is when $e\in L_1$ and $v\in L_2$. In that case, we see by definition of our subtree solutions that 
    \begin{equation*}
        \mathbb E[|\delta^+(v) \cap S_e|]=\sum_{e'\in \delta^+(v)}x_{e'}^{(e)}=1\ ,
    \end{equation*}
    and we have that 
    \begin{align*}
        \mathbb E[|(\delta^+(v)\setminus \{e'\}) \cap S_e|]&\ge  \mathbb E[|(\delta^+(v)) \cap S_e|]-\max_{e'\in \delta^+(v)} x_{e'}^{(e)}\\
        &\ge \mathbb E[|\delta^+(v) \cap S_e|]-\frac{{(1-\rho)m\choose \rho m}}{{m\choose \rho m}}=(1-o(1))\cdot \mathbb E[|(\delta^+(v)) \cap S_e|]\ ,
    \end{align*}
    which concludes this case. In all other cases, we note that for any vertex $v$ and edge $e$, for any $e',e''\in \delta^+(v)$, we have the symmetry that $x_{e'}^{(e)}=x_{e''}^{(e)}$. Moreover, for all vertices $v$, we have that $|\delta^+(v)|=\omega(1)$, which proves the other cases.
\end{proof}
We conclude this part with the last lemma.

\begin{lemma}
\label{lem:conditioning_union_bound}
    For $m$ big enough, and for any event $\mathcal E$ that one round of Sherali-Adams can condition on, for any vertex $v$, we have that 
    \begin{equation*}
        \mathbb E[|\delta^+(v)\cap A|\mid \mathcal E]\ge \Omega(1/m^4)\cdot \mathbb E[|\delta^+(v)\cap A'|\mid \mathcal E]\ ,
    \end{equation*}
\end{lemma}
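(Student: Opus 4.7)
The plan is to reduce to a per-edge bound. Let $X_e := |\{e' \in E : e' \in S \text{ and } e \in S_{e'}\}|$ be the multiplicity of $e$ in $A'$, so that $|\delta^+(v) \cap A'| = \sum_{e \in \delta^+(v)} X_e$ and $|\delta^+(v) \cap A| = \sum_{e \in \delta^+(v)} \mathds{1}[X_e \ge 1]$. It then suffices to show, for every $e \in \delta^+(v)$,
$$\mathbb{E}[X_e \mid \mathcal{E}] \le O(m^4)\cdot \mathbb{P}[e \in A \mid \mathcal{E}],$$
and sum. A one-round Sherali-Adams event has the form $\mathcal{E} = \{e_1 \in A\}$ (positive) or $\mathcal{E} = \{e_1 \notin A\}$ (negative) for some edge $e_1$, and I treat these cases separately.

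The negative case is the easy one. Lemma~\ref{lem:sherali_lifteddistribution1} gives $\mathbb{P}[\mathcal{E}] = 1 - s_{e_1} \ge 1 - 6 x_{e_1} \ge 1/2$ for $m$ large enough, since every $x_e$ in the construction is $2^{-\Theta(m)}$. Conditional expectations of nonnegative quantities are then within a factor of $2$ of their unconditional values, and the desired inequality follows from the unconditional bound $\mathbb{E}[X_e] \le 6 x_e$ (already proved while establishing Lemma~\ref{lem:sherali_lifteddistribution1}) together with the lower bound $\mathbb{P}[e \in A \mid e_1 \notin A] \ge \mathbb{P}[e \in A] - \mathbb{P}[e_1 \in A] \ge x_e - 6x_{e_1} = \Omega(x_e)$, obtained either because $x_{e_1} = o(x_e)$ in the relevant cases or by the slightly more refined bound $\mathbb{P}[e \in A, e_1 \in A] = o(x_e)$ that one gets by expanding the joint probability via independence.

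For the positive case $\mathcal{E} = \{e_1 \in A\}$, expand
$$\mathbb{E}\!\bigl[X_e \cdot \mathds{1}[e_1 \in A]\bigr] \;=\; \sum_{e' \in E} \mathbb{P}[e' \in S,\ e \in S_{e'},\ e_1 \in A],$$
and for each shadow $e'$ split $\{e_1 \in A\}$ into a \emph{self} contribution (where the same shadow $e'$ generates $e_1$, worth exactly $x_{e'} x_e^{(e')} x_{e_1}^{(e')}$ by independence of the sampling inside $S_{e'}$) and an \emph{other} contribution (where a shadow $e'' \ne e'$ generates $e_1$, worth at most $x_{e'} x_e^{(e')}\cdot s_{e_1}$). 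Summing the ``other'' terms over $e'$ yields $\mathbb{E}[X_e]\cdot s_{e_1} = O(x_e x_{e_1})$, which is at most a constant times the easy lower bound $\mathbb{P}[e \in A, e_1 \in A] \ge x_e x_{e_1}$ (arising from the independent events $\{e \in S\}$ and $\{e_1 \in S\}$). So the ``other'' part contributes only $O(1)$ to the ratio.

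The main obstacle is the ``self'' contribution $\sum_{e'} x_{e'} x_e^{(e')} x_{e_1}^{(e')}$, which must be bounded against $\mathbb{P}[e \in A, e_1 \in A] \ge \max_{e'} x_{e'} x_e^{(e')} x_{e_1}^{(e')}$. Thanks to the symmetry of the construction, each summand depends on $e'$ only through the pair of intersection sizes $(|S_{e'} \cap S_e|, |S_{e'} \cap S_{e_1}|)$, each of which takes at most $m+1$ values. Grouping the shadows by these intersection sizes and invoking Lemma~\ref{lem:binomial_function}, which says the relevant binomial expressions are unimodal with only a polynomial drop from their peaks, bounds the sum by $O(m^4)$ times its maximum term. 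This polynomial accounting over the two intersection sizes is exactly where the $m^4$ in the statement originates, and organizing it cleanly across the various combinations of layers for $e$, $e_1$, and $e'$ is the technical heart of the proof.
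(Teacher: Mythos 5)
Your approach is genuinely different from the paper's. The paper proves this lemma via a Chernoff-style concentration argument: the multiplicity $n_e$ of each edge in $A'$ is a sum of independent indicators with mean $O(x_e)$, so $\mathbb P[n_e\ge m^4]\le\exp(-m^4/10)$; after conditioning on an event of probability at least $\exp(-\Theta(m))$, the tail $\mathbb P[n_e\ge m^4\mid\mathcal E]$ remains superpolynomially small, and truncating $n_e$ at $m^4$ then gives $\mathbb E[n_e\mid\mathcal E]\le O(m^4)\mathbb P[n_e\ge 1\mid\mathcal E]$ once one also shows (via a small case analysis) that $\mathbb E[n_e\mathds{1}[n_e\le m^4]\mid\mathcal E]\ge n^{-\Theta(1)}$ dominates the tail. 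You instead expand $\mathbb E[X_e\cdot\mathds{1}[e_1\in A]]$ per shadow $e'$ and split into self/other contributions; this is a clean idea and, in fact, with the right lower bound on $\mathbb P[e\in A,e_1\in A]$ it would give the lemma with $O(1)$ in place of $m^4$. But as written there are two genuine gaps.

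The ``self'' bound is false. You claim $\Sigma:=\sum_{e'} x_{e'}x_e^{(e')}x_{e_1}^{(e')}\le O(m^4)\cdot\max_{e'}x_{e'}x_e^{(e')}x_{e_1}^{(e')}$ via bucketing by intersection sizes and Lemma~\ref{lem:binomial_function}. Unimodality of the \emph{per-term value} does not bound the sum: a single bucket can contain exponentially many common ancestors. Take $e=(u,v),e_1=(u,v')\in L_3$ with the same tail $u\in L_2$ and, say, $S_v\cap S_{v'}=\emptyset$. Every $e'=(w,u)\in L_2$ with $S_w\subseteq S_u$ is a common ancestor with $x_e^{(e')}=x_{e_1}^{(e')}=1$, so these ${2\rho m\choose\rho m}$ identical terms contribute $\Sigma_{L_2}={(1-\rho)m\choose\rho m}^{-1}$. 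The maximum term, however, comes from an $e'\in L_1$ and equals ${(1-\rho)m\choose\rho m}^{-1}C^2$ with $C={(1-\rho)m\choose\rho m}/{m\choose\rho m}=2^{-\Theta(m)}$, so $\Sigma/\max=C^{-2}=n^{\Omega(1)}\gg m^4$. The lower bound $\mathbb P[e\in A,e_1\in A]\ge\max_{e'}V_{e'}$ you compare against is simply too weak. The fix is to observe that the events $\{e'\in S,\ e\in S_{e'},\ e_1\in S_{e'}\}$ are mutually independent across $e'$ and have total mass $\Sigma\le\mathbb E[n_e]=O(x_e)=o(1)$, so $\mathbb P[e\in A,e_1\in A]\ge 1-\prod_{e'}(1-p_{e'})\ge\Omega(\Sigma)$; this bounds the self contribution by $O(1)\cdot\mathbb P[e\in A,e_1\in A]$ with no $m$-factor and no bucketing.

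The negative case also has a hole. Your lower bound $\mathbb P[e\in A\mid e_1\notin A]\ge x_e-6x_{e_1}$ is vacuous whenever $x_{e_1}\ge x_e/6$, which happens e.g.\ for $e_1\in L_1$ and $e\in L_2$ a child. The proposed fallback $\mathbb P[e\in A,e_1\in A]=o(x_e)$ is also false in that configuration: $\mathbb P[e\in A,e_1\in A]\ge\mathbb P[e_1\in S,\ e\in S_{e_1}]=x_{e_1}\cdot x_e^{(e_1)}=x_e$. What does work is the observation the paper uses implicitly: whenever $e_1\notin D(e)$, the indicator $\{e\in S\}$ is independent of $\{e_1\in A\}$ (since $x_{e_1}^{(e)}=0$), hence $\mathbb P[e\in A\mid e_1\notin A]\ge\mathbb P[e\in S\mid e_1\notin A]=x_e$; the complementary case $e_1\in D(e)\setminus\{e\}$ then does satisfy a joint-probability bound by a short calculation, and $e_1=e$ is vacuously fine. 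You should state this independence structure explicitly rather than appealing to $x_{e_1}=o(x_e)$.
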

Before proving this lemma, let us comment on it. The left-hand side is the expected out-degree of a vertex $v$ (after conditioning), while $\mathbb E[|\delta^+(v)\cap A'|\mid \mathcal E]$ refers to expected out-degree estimated by the naive union bound over all subtrees, which counts the edges with multiplicity. Essentially, this lemma states some converse inequality of the union bound, i.e. that at the cost of loosing a factor $m^{O(1)}=(\log n)^{O(1)}$, we can replace the exact value by the union bound estimate. The intuition as to why this is true is that the number of times an edge is selected in a sum of independent binary variables of small expectation, so with superpolynomially small probability it is selected more than say $m^4$ times. Next, we note that the events SA is allowed to condition by have only a polynomially small probability, so even after conditioning, an edge will be selected less than $m^4$ times with high probability.
\begin{proof}
    First, we remark that the number of times an edge $e$ appears in $A'$ is a sum of independent binary random variables, let us denote by $n_e$ this number. Moreover, we can compute that 
    \begin{equation*}
        \mathbb E[n_e] = \sum_{e'\in A(e)} x_{e'}\cdot x_{e}^{(e')}<2\ .
    \end{equation*}
    where we recall that $A(e)$ are the ancestor edges of $e$. To see this, note that if $e\in L_1\cup L_2$, this is easy to check since any such edge has at most $2$ ancestors. If $e\in L_3$, we write 
    \begin{align*}
        \sum_{e'\in A(e)} x_{e'}\cdot x_{e}^{(e')}&=\sum_{e'\in A(e)\cap L_1} x_{e'}\cdot x_{e}^{(e')} + \sum_{e'\in A(e)\cap L_2} x_{e'}\cdot x_{e}^{(e')} + x_e\\
        &\le x_e+ {2\rho m\choose \rho m}\cdot x_e\cdot {2\rho m\choose \rho m}^{-1}\cdot 1 + x_e = 3x_e<2\ .
    \end{align*}
    We recall that we already performed the calculation that $\sum_{e'\in A(e)\cap L_1} x_{e'}\cdot x_{e}^{(e')}=x_e$ in the proof of Lemma \ref{lem:sherali_lifteddistribution1}.
    Using standard Chernoff bounds, for any $t>m^4$ and $m$ big enough, we obtain that 
    \begin{equation*}
    \mathbb P[n_e\ge t]\le \exp(-t/10)\ .
    \end{equation*}
    Second, note that the event $\mathcal E$ must have a probability at least $x_e=\exp(-\Theta(m))$ for some edge $e\in G$ (using Lemma \ref{lem:sherali_lifteddistribution1} again). Therefore, we can write for $t>m^4$ and $m$ big enough,
    \begin{equation*}
    \mathbb P[n_e\ge t\mid \mathcal E]=\frac{\mathbb P[\{n_e\ge t\}\cap \mathcal E]}{\mathbb P[\mathcal E]}\le \frac{\exp(-t/10)}{\exp(-\Theta(m))}\le \exp(-t/20)\ .
    \end{equation*}
    Therefore, we can write 
    \begin{align*}
        \mathbb E[|\delta^+(v)\cap A'|\mid \mathcal E]&=\sum_{e\in \delta^+(v)} \mathbb E[n_e\mid \mathcal E]\\
        &\le  \sum_{e\in \delta^+(v)} \left(\mathbb E[n_e\mid \{n_e\le m^4,\mathcal E\}]+\sum_{t=m^4}^{\infty}\mathbb E[n_e\mid \{n_e= t,\mathcal E\}]\cdot \mathbb P[n_e\ge t\mid \mathcal E]\right)\ .
    \end{align*}
    To analyze the above sum, we need to handle several cases.
    \paragraph{Case 1: $\mathcal E=\{e_1\notin A\}$ and $e=e_1$.} In that case, then we have
    $\mathbb E[n_e\mid \{n_e= t,\mathcal E\}]\cdot \mathbb P[n_e\ge t\mid \mathcal E]=0$ for all $t$ and $\mathbb E[n_e\mid \{n_e\le m^4,\mathcal E\}]=0$. 
    \paragraph{Case 2:  $\mathcal E=\{e_1\in A\}$ and $e= e_1$.} Then in that case, we clearly have that $\mathbb E[n_e\mid \{n_e\le m^4,\mathcal E\}]\ge 1$, and that 
    \begin{equation*}
        \sum_{t=m^4}^{\infty}\mathbb E[n_e\mid \{n_e= t,\mathcal E\}]\cdot \mathbb P[n_e\ge t\mid \mathcal E]\le \sum_{t=m^4}^{\infty} t\cdot \exp(-t/20)=O(1)\ .
    \end{equation*}

    \paragraph{Case 3: $e_1\neq e$.} In that case, our strategy is prove that $\mathbb E[n_e\mid \{n_e\le m^4,\mathcal E\}]\ge n^{-\Theta(1)}$, which will show that the sum $\sum_{t=m^4}^{\infty} t\cdot \exp(-t/20)$ is negligible compared to the total expectation. To see this, we have two subcases.
    \begin{enumerate}
        \item If $e_1\notin D(e)$ then we remark that the event $\{e\in S\}\cap \{e\in S_e\}$ is independent of $\mathcal E$ (also recall that $\mathbb P[e\in S_e]=1$). Then we write
        \begin{align*}
            \mathbb P&[\{e\in S\}\cap \{e\in S_{e}\}\cap \{n_e\le m^4\}\cap \mathcal E]\\
            &=\mathbb P[\{n_e\le m^4\}\mid \mathcal E\cap \{e\in S\}\cap \{e\in S_{e}\}]\cdot \mathbb P[\{e\in S\}\cap \{e\in S_{e}\}\mid \mathcal E]\cdot \mathbb P[\mathcal E]\ge n^{-\Theta(1)}\ ,
        \end{align*}
        where the last inequality uses the fact that the three events $\{e\in S\}, \{e\in S_{e}\}, \mathcal E$ are all independent with probability at least $n^{-\Theta(1)}$ each, and that $\mathbb P[\{n_e> m^4\}\mid \mathcal E\cap \{e\in S\}\cap \{e\in S_{e}\}]\le \exp(-m^4/10)/\mathbb P[\{e\in S\}\cap \{e\in S_{e}\}\cap \mathcal E]\le \exp(-m^3)$.

        This shows that $\mathbb E[n_e\mid \{n_e\le m^4,\mathcal E\}]\ge n^{-\Theta(1)}$.
        \item If $e_1\in D(e)\setminus \{e\}$, then let us consider $e'$ an ancestor of $e$ in $L_1$, for which we remark that $x_e^{(e')}\ge n^{-\Theta(1)}$ and $x_{e_1}^{(e')}=n^{-\Theta(1)}$ by construction. Further, we remark that if we condition by the event $\{e_1\notin S_{e'}\}$, then the events $\{e'\in S\}$ and $\mathcal E$ become independent. Therefore, we write
        \begin{align*}
            &\mathbb P[\{e'\in S\}\cap \{e_1\notin S_{e'}\} \cap \{e\in S_{e'}\}\cap \{n_e\le m^4\}\cap \mathcal E]\\
           &= \mathbb P[\{e'\in S\} \cap \{e\in S_{e'}\}\cap \{n_e\le m^4\}\cap \mathcal E\mid  \{e_1\notin S_{e'}\}]\cdot \mathbb P[\{e_1\notin S_{e'}\}]\\
           &\ge n^{-\Theta(1)}\ ,
           \end{align*}
           where the last inequality uses the fact that $\mathbb P[\{e_1\notin S_{e'}\}]=1-x_{e_1}^{(e')}>1-n^{-\Theta(1)}$, the fact that $\mathbb P[\{n_e> m^4\}\mid \mathcal E\cap \{e'\in S\}\cap \{e_1\notin S_{e'}\}\cap \{e\in S_{e'}\}]\le \exp(-m^3)$, and the fact that $\mathbb P[\{e'\in S\} \cap \{e\in S_{e'}\}\cap  \mathcal E\mid  \{e_1\notin S_{e'}\}]\ge n^{-\Theta(1)}$.

           The second fact can be deduced from the third by a similar calculation as in the previous subcase. The last fact is not obvious. Note that after conditioning by $\mathcal \{e_1\notin S_{e'}\}$ the three events $\mathcal E,\{e'\in S\},\{e\in S_{e'}\}$ become independent. Hence it suffices to show that $\mathbb P[ \mathcal E\mid  \{e_1\notin S_{e'}\}]\ge n^{-\Theta(1)}$. To this end, if $\mathcal E$ is a negative event, we clearly have that
           \begin{equation*}
               \mathbb P[ \mathcal E\mid  \{e_1\notin S_{e'}\}]\ge \mathbb P[\mathcal E]\ge 1-6x_{e_1}>1/2\ ,
           \end{equation*}
           where we used Lemma \ref{lem:sherali_lifteddistribution1}. Otherwise, if $\mathcal E$ is a positive event, then we have that 
            \begin{equation*}
               \mathbb P[ \mathcal E\mid  \{e_1\notin S_{e'}\}]= \mathbb P[ \{e_1\in A\}\mid  \{e_1\notin S_{e'}\}]\ge \mathbb P[ \{e_1\in S\}\mid  \{e_1\notin S_{e'}\}]=x_{e_1}=n^{-\Theta(1)}\ .
           \end{equation*}
    \end{enumerate}
     To conclude, we were able to prove that in all cases, we have for any edge $e\in \delta^+(v)$ that, either $\mathcal E=\{e\notin A\}$, or that 
     \begin{align*}
         \sum_{t=m^4}^{\infty}\mathbb E[n_e\mid \{n_e= t,\mathcal E\}]\cdot \mathbb P[n_e\ge t\mid \mathcal E]&\le  \sum_{t=m^4}^{\infty}t\cdot \exp(-t/20)\\
         &\le O(\exp(-m^2))\\
         &\le O(\mathbb E[n_e\mid \{n_e\le m^4,\mathcal E\}])\ .
     \end{align*}
     Therefore, we can conclude 
         \begin{align*}
        \mathbb E[|\delta^+(v)\cap A'|\mid \mathcal E]&=\sum_{e\in \delta^+(v)} \mathbb E[n_e\mid \mathcal E]\\
        &\le  \sum_{e\in \delta^+(v)} \mathbb E[n_e\mid \{n_e\le m^4,\mathcal E\}]+\sum_{t=m^4}^{\infty}\mathbb E[n_e\mid \{n_e= t,\mathcal E\}]\cdot \mathbb P[n_e\ge t\mid \mathcal E]\\
        &\le O\left(\sum_{e\in \delta^+(v)} \mathbb E[n_e\mid \{n_e\le m^4,\mathcal E\}]\right)\le O(m^4) \cdot  \mathbb E[|\delta^+(v)\cap A|\mid \mathcal E]\ .
    \end{align*}
\end{proof}

\subsubsection{Verifying the covering constraints}

For convenience, we recall here the covering constraint.
\begin{equation*}
    \sum_{e\in \delta^+(v)}x_e\ge k_v\cdot \sum_{e\in \delta^-(v)} x_e
\end{equation*} for any vertex $v$ which is not a sink. We use the convention that $\sum_{e\in \delta^-(v)} x_e=1$ if $v=s$. We will verify those constraints up to a factor of $(\log n)^{O(1)}$.

Using Lemma \ref{lem:sherali_lifteddistribution1}, it is clear that any constraint is satisfied up to a multiplicative factor of $6$ if we do not condition by any event. If we perform some conditioning, we use Lemma \ref{lem:conditioning_union_bound} to argue that we can count all edges with their multiplicity, at the cost of loosing some factor $(\log n)^{O(1)}$ (and this is true even after conditioning). 

We have now a simple case analysis.
\paragraph{Case 1: $v=s$.} This is case is easy, as $\mathcal E$ can be dependent of at most ${2\rho m\choose \rho m}$ edges in $\delta^+(s)$ (by the structure of the graph). Since $s$ has ${m \choose \rho m}$ outgoing edges, all playing a symmetric role, the influence of $\mathcal E$ is negligible.

\paragraph{Case 2: $v\neq s$ and $\mathcal E=\{e_1\notin A\}$ is a negative event.} Then we condition by the outcome of the first step of the sampling (i.e. the set selected as the shadow set). After this conditioning, we see that by linearity of expectation, and by the fact that the all the subtree solutions satisfy the covering constraints of the assignment LP, we will have that the covering constraint is satisfied in expectation by our sampling procedure. There is one slight caveat however, which is if $e_1\in \delta^+(v)$, in which case the out-degree of $v$ might decrease in expectation. However, by Lemma \ref{lem:no_edge_dominates}, this loss in negligible in any subtree $S_e$ for any edge $e\in E$. Hence by linearity of the expectation, the loss in negligible for any outcome for the shadow set.

Formally, we write, for any possible outcome $X$ of the shadow set $S$:
\begin{align*}
    \mathbb E\left[|\delta^+(v)\cap A|\mid \mathcal E \right]&\ge \Omega(1/m^4)\cdot  \mathbb E\left[|\delta^+(v)\cap A'|\mid \mathcal E \right] \\
    &=\Omega(1/m^4)\sum_{X}\sum_{e\in X}\mathbb E\left[|(\delta^+(v)\setminus \{e_1\})\cap S_{e}|\right]\cdot \mathbb P[\{S=X\}\mid \mathcal E]\\
    &\ge \Omega((1-o(1))/m^4)\sum_{X}\sum_{e\in X}\mathbb E\left[|\delta^+(v)\cap S_{e}|\right]\cdot \mathbb P[\{S=X\}\mid \mathcal E]\\
    &\ge \Omega(1/m^4)\sum_{X}\sum_{e\in X}k_v\cdot \mathbb E\left[|\delta^-(v)\cap S_{e}|\right]\cdot \mathbb P[\{S=X\}\mid \mathcal E]\\
    &\ge  \Omega(1/m^4)\cdot k_v\cdot \mathbb E\left[|\delta^-(v)\cap A|\mid \mathcal E \right]\ .
\end{align*}
\paragraph{Case 3: $v\neq s$ and  $\mathcal E=\{e_1\in A\}$ is a positive event.} Similarly, we use Lemma \ref{lem:conditioning_union_bound} to argue with union bounds only, at the cost of loosing a factor $(\log n)^{O(1)}$. The proof is very similar to Case 2, using the conditioning by every possible outcome for the shadow set. However, there is one crucial difficulty, which shows up when $e_1\in \delta^-(v)$. In that case, it is crucial that the expected out-degree at vertex $v$ is at least $\Omega(k_v)$, because the event $\mathcal E$ adds an additive $1$ to the in-degree of $v$. Fortunately, using the properties of our distributions, and Lemma \ref{lem:sherali_lifteddistribution1}, we can write
\begin{align*}
     \mathbb P[\{e_1\in S\}\mid \{e_1\in A\}] &= \frac{\mathbb P[\{e_1\in A\}\mid \{e_1\in S\}]\cdot \mathbb P[\{e_1\in S\}]}{\mathbb P[\{e_1\in A\}]}\\
     &\ge (1/6)\cdot \mathbb P[\{e_1\in A\}\mid \{e_1\in S\}] = 1/6\ .
\end{align*}
Hence, with constant probability (after conditioning), the edge $e_1$ is also in the shadow set $S$, which will trigger a subtree rooted at $v$ and ensure an additional expected out-degree $k_v$ at vertex $v$ in expectation (after conditioning).
\subsubsection{Verifying the packing constraints}

For convenience, we recall here the packing constraint.
\begin{equation*}
    \sum_{e\in \delta^-(v)}x_e\le 1
\end{equation*} for any vertex $v$. We will verify those constraints up to a multiplicative factor of $O(1)$.

Using Lemma \ref{lem:sherali_lifteddistribution1}, it is clear that any constraint is satisfied up to a multiplicative factor of $6$ if we do not condition by any event. 

Otherwise, let $e_1$ be the edge considered in the conditioning, and let $\mathcal E$ be the corresponding event. Let us do a case analysis. 

\paragraph{Case 1: $\mathcal E=\{e_1\notin A\}$.} We notice that after conditioning, the expected in-degree on vertex $v$ in the set $A$ can only decrease. To see this formally, note that for any $e$,
  \begin{equation*}
        \mathbb P[\{e\in S\}\mid \{e_1\notin A\}]=\frac{\mathbb P[\{e_1\notin A\}\mid \{e\in S\}]\cdot \mathbb P[\{e\in S\}]}{\mathbb P[\{e_1\notin A\}]}\le \mathbb P[\{e\in S\}]\ ,
    \end{equation*}
where clearly conditioning by $\{e\in S\}$ cannot decrease the probability of the event $\{e_1\in A\}$. Second, note that the in-degree cannot be negatively correlated with any event of the form $\{e\in S\}$. By Lemma \ref{lem:sherali_lifteddistribution1}, the expected in-degree was at most $6$ to start with. Using previous remarks, it implies that the constraint is satisfied up to a factor of $6$ after the conditioning.

\paragraph{Case 2. $\mathcal E=\{e_1\in A\}$.} Let $v$ be the vertex at which the packing constraint is. The strategy is to first understand how the probability of each edge being in the shadow set changes after conditioning by the event $\mathcal E$, then argue about the congestion that each edge in the shadow set will induce on vertex $v$ by triggering a subtree.

If $v\in L_1$, then $\delta^-(v)$ contains only one edge, so the constraint is satisfied deterministically. 

If $v\in L_2$, let us assume in the worst-case that all edges in $e\in L_1\cap A(v)$ have probability 1 of appearing in the shadow set (after conditioning by $\mathcal E$). Using our definitions for the subtree solutions, the expected congestion induced by those triggered subtrees is then equal to at most 
\begin{equation*}
    \sum_{e\in L_1\cap A(v)} {2\rho m\choose \rho m}^{-1} = 1\ .
\end{equation*}
Hence we only need to worry about the probability of the edges $e\in \delta^-(v)$ being selected in the shadow set. Note that if $e_{1}\notin \delta^+(v)\cup \delta^-(v)$, the event $\{e\in S\}$ is independent of $\mathcal E$ for any $e\in \delta^-(v)$ so we are done. Otherwise, if $e_1\in \delta^-(v)$, then only the probability of $e_1$ being in $S$ changes. Even after conditioning, this probability is at most 1 anyway so we are done. If $e_1\in \delta^+(v)$, for any edge $e\in \delta^-(v)$ we write
\begin{align*}
     \mathbb P[\{e\in S\}\mid \{e_1\in A\}] &= \frac{\mathbb P[\{e_1\in A\}\mid \{e\in S\}]\cdot \mathbb P[\{e\in S\}]}{\mathbb P[\{e_1\in A\}]}\\
     &\le \frac{1\cdot x_e}{x_{e_1}}={2\rho m\choose \rho m}^{-1}\ ,
\end{align*}
where the inequality uses Lemma \ref{lem:sherali_lifteddistribution1}. Hence, each edge $e\in \delta^-(v)$ induces an expected congestion of ${2\rho m\choose \rho m}^{-1}$ on vertex $v$, after conditioning on event $\mathcal E$. There are exactly ${2\rho m\choose \rho m}$ such edges, which concludes that the total congestion is $O(1)$ in the case where $v\in L_2$.

Finally, $v\in L_3$ is the most tricky subcase. If $e_1\in L_1$, then the probability of some edge $e$ being in the shadow set is only modified after conditioning if $e=e_1$. This is only one edge, so even if the probability rises to 1 this is not an issue as the subtree $S_e$ has an expected congestion on $v$ of $o(1)$. If $e_1\in L_2$, the probability of at most $2$ edges of being in $S$ can increase after conditioning by $\mathcal E$, and the same argument applies. 

Finally, we are left with the case that $e_1=(u',v')\in L_3$. Note that there is at most one edge $(u',v)$ if it exists, so we can ignore all the edges in $\delta^-(u')$ by loosing a additive $1$ on the congestion. Note that all edges in $\delta^-(v)$ appears in $S$ independently of $\mathcal E$, except for possibly one edge if $e_1\in \delta^-(v)$. Hence we only need to worry about the probability of the edges in $L_1$ being selected in the shadow set, after conditioning by the event $\mathcal E$. For any edge $e\in  L_1$, we write

\begin{align*}
     \mathbb P[\{e\in S\}\mid \{e_1\in A\}] &= \frac{\mathbb P[\{e_1\in A\}\mid \{e\in S\}]\cdot \mathbb P[\{e\in S\}]}{\mathbb P[\{e_1\in A\}]}\\
     &\le \frac{\mathbb P\left[\bigcup_{e'\in E}\{\{e'\in S\}\cap \{e_1\in S_{e'}\}\}\mid \{e\in S\}\right]\cdot \mathbb P[\{e\in S\}]}{\mathbb P[\{e_1\in A\}]}\\
     &\le  \left(\mathbb P\left[\{e_1\in A\}\right]+\mathbb P\left[\{e_1\in S_{e}\}\right]\right)\cdot \frac{\mathbb P[\{e\in S\}]}{\mathbb P[\{e_1\in A\}]}\\
     &\le  \left(6x_{e_1}+x_{e_1}^{(e)}\right)\cdot \frac{\mathbb P[\{e\in S\}]}{\mathbb P[\{e_1\in A\}]}\\
     &\le \left(6x_{e_1}+x_{e_1}^{(e)}\right)\cdot \frac{\mathbb P[\{e\in S\}]}{\mathbb P[\{e_1\in S\}]}\\
     &=6x_{e_1}+x_{e_1}^{(e)}\ .
\end{align*}
where we used a union bound in the third line, Lemma \ref{lem:sherali_lifteddistribution1} in the fifth, and the fact that $x_e=x_{e_1}$ in our instance in the last line. Therefore, the expected congestion induced by edges in $L_1$ conditioned on $\mathcal E$ is equal to at most
\begin{align*}
    &\sum_{e\in L_1}(6x_{e_1}+ x_{e_1}^{(e)})\cdot \sum_{e'\in \delta^-(v)}x_{e'}^{(e)}\\
    &=  \sum_{e\in L_1}6x_{e}\cdot \sum_{e'\in \delta^-(v)}x_{e'}^{(e)} + \sum_{e\in L_1}x_{e_1}^{(e)}\cdot  \sum_{e'\in \delta^-(v)}x_{e'}^{(e)}\\
    &=  \sum_{e\in L_1}6{(1-\rho)m\choose \rho m}^{-1}\cdot \frac{{(1-\rho)m\choose \rho m}}{{m\choose \rho m}} +\sum_{e\in L_1}x_{e_1}^{(e)} \cdot \frac{{(1-\rho)m\choose \rho m}}{{m\choose \rho m}}\\
    &= 6+\sum_{e\in L_1\cap A(u')}\frac{{(1-\rho)m\choose \rho m}}{{m\choose \rho m}\cdot {m-2\rho m+|S_e\cap S_{e_1}|\choose |S_e\cap S_{e_1}|}} \cdot \frac{{(1-\rho)m\choose \rho m}}{{m\choose \rho m}}\\
    &= O(1)+\left(\frac{{(1-\rho)m\choose \rho m}}{{m\choose \rho m}}\right)^2 \cdot \sum_{e\in L_1\cap A(u')}{m-2\rho m+|S_e\cap S_{e_1}|\choose |S_e\cap S_{e_1}|}^{-1}\\
    &= O(1)+\left(\frac{{(1-\rho)m\choose \rho m}}{{m\choose \rho m}}\right)^2 \cdot \sum_{j=0}^{\rho m} {m-2\rho m+j\choose j}^{-1}{\rho m\choose j}^{2}
\end{align*}
To analyze this last sum, we need that
$${m-2\rho m+j\choose j}^{-1}{\rho m\choose j}^{2}\cdot \left(\frac{{(1-\rho)m\choose \rho m}}{{m\choose \rho m}}\right)^2\le m^{O(1)}$$ for all the range of possible values of $j$. Indeed, using Lemma \ref{lem:binomial_coeff}, with $j=x \rho m$ for $0\le x \le 1$, we have
\begin{equation*}
    \log_2\left({m-2\rho m+j\choose j}^{-1}{\rho m\choose j}^{2}\right)/m\le (2\rho)\cdot  h\left(x\right)-(1-2\rho+x \rho)\cdot h\left(\frac{x \rho}{1-2\rho +x \rho}\right)+O(\log (m)/m)\ .
\end{equation*}
To conclude, we study the function 
\begin{equation*}
    f:x\mapsto (2\rho)\cdot  h\left(x\right)-(1-2\rho+x \rho)\cdot h\left(\frac{x \rho}{1-2\rho +x \rho}\right)+2\cdot \left((1-\rho)h\left(\frac{\rho}{1-\rho}\right)-h(\rho)\right)
\end{equation*}
One can verify that 
\begin{equation*}
    f'(x)=\rho \cdot \left(2\log_2(1-x)-2\log_2(x)+\log_2\left(\frac{x \rho}{1-2\rho+x\rho}\right)\right)
\end{equation*}
which is non-negative if and only if $x\le \rho$. Therefore, this function admits a maximum at $x=\rho$, where 
\begin{equation*}
    f(\rho)=2\rho h(\rho)-(1-\rho)^2\cdot h\left(\left(\frac{\rho}{1-\rho}\right)^2\right)+2(1-\rho)h\left(\frac{\rho}{1-\rho}\right)-2h(\rho)\ .
\end{equation*}
Finally, one can expand $f(\rho)$ for $\rho\rightarrow 0^+$ to obtain that 
\begin{equation*}
    f(\rho)=-\frac{\rho^2}{\log(2)}+O(\rho^3)\ .
\end{equation*}
Hence for any constant $\rho>0$ small enough, we obtain that $f(\rho) < 0$.

Hence, we obtain 
\begin{align*}
    \sum_{e\in L_1}(x_{e_1}+ x_{e_1}^{(e)})\cdot \sum_{e'\in \delta^-(v)}x_{e'}^{(e)}\le  1+o(1)=O(1)\ ,
\end{align*}
for any constant $\rho$ small enough. This concludes the last case.

\subsubsection{Connecting the dots}
We were able to verify all constraints with all possible conditioning, up to a multiplicative factor of $(\log n)^{O(1)}$, assuming $\rho$ is a small enough constant, and $m$ is big enough. Hence, by Theorem \ref{thm:sherali_distributions}, we obtain that this instance is feasible for $1$ round of Sherali-Adams with an approximation rate $(\log n)^{O(1)}$. But we prove in Section \ref{sec:construction} that any integral solution must have approximation factor $n^{\Omega(1)}$, which concludes the proof of Theorem \ref{thm:sherali_1round_gap}.
\section{The general construction}
\label{sec:construction}
In this part, we describe our instances (for any depth) and discuss some of their properties. In particular, this section contains a proof that our instances do not contain any $n^{o(1/\ell)}$-approximate integral solutions (Lemma \ref{lem:no_integral_sol}), and a proof of the first three properties of Theorem \ref{thm:locality_gap}. We will denote by $\ell$ the depth of the instance, which will be parameterized by its size $n$, its depth $\ell$, and a constant $\rho$. We will denote by $G^{(\rho)}_{n,\ell}$ the instance obtained in that way. The instance used in Section \ref{sec:sherali1round} is the special case where $\ell=3$.

\subsection{The vertex set}

We have a ground set $\mathcal U=[m]$ of size $m$ for $m$ big enough. We fix a constant $\rho$ which will be taken small enough. We also use the parameter $\epsilon$ such that $\ell=3/\epsilon$. We assume that $\epsilon m=\omega(\log m)$ (or equivalently, $\ell=o(\log n/\log \log n)$), although this is not strictly needed for the construction itself, we will use this assumption later in some proofs. We choose $1/\epsilon$ and $m$ big enough so that $\epsilon \rho m\in \mathbb N$, $1/\epsilon\in \mathbb N$. For $n$ big enough, this will always be possible while modifying the instance size $n$ and the number of layers by some factor $O(1)$ only. This will not affect the asymptotic behavior of our results.

The graph will contain $\ell+1$ layers $L_0,L_1,\ldots ,L_i,\ldots, L_{\ell=3/\epsilon}$. Until $i\le 2/\epsilon$, we will have an \textit{expanding phase} where the sets labelling the vertices get bigger and bigger. Then, from layer $L_{2/\epsilon}$ to the last layer, the labelling sets get smaller and smaller; this is the \textit{collapsing phase}. With this intuition in mind, we are ready to describe the instance.

\paragraph{Source vertex.} In layer $L_0$, there is only the \textit{source} vertex $s$, which we label with the empty subset, i.e. we set $S_s=\emptyset$.

\paragraph{Expanding phase.} In layer $L_i$ for $i< 2/\epsilon$, there is one vertex $v$ for each set $S_v\subseteq \mathcal U$ of size $i\epsilon \rho m$. Hence, we have  
\begin{equation*}
    |L_i|={m\choose i\epsilon \rho m}\ ,
\end{equation*}
for $i< 2/\epsilon$.

\paragraph{Collapsing phase.} In layer $L_i$ for $2/\epsilon \le i\le 3/\epsilon$, there is one vertex $v$ for each set $S_v\subseteq \mathcal U$ of size $4\rho m- i\epsilon \rho m$. Hence, we have that 
\begin{equation*}
    |L_i|={m\choose 4\rho m- i\epsilon \rho m}\ ,
\end{equation*}
for $i\ge 2/\epsilon$.

\paragraph{Sinks.} The \textit{sinks} are all the vertices in the last layer (layer $i=3/\epsilon$).

\subsection{The edge set}

For each $i$, we define the set $E_i$ of edges having their endpoint in $L_i$ as follows:
\begin{equation*}
    E_i:=\begin{cases}
            \left\lbrace(u,v)\ \vert \  u\in L_{i-1},v\in L_{i} \mbox{ and }S_u\subseteq S_v\right\rbrace \mbox{ if } 1\le i\le 2/\epsilon,\\
             \left\lbrace(u,v)\ \vert \  u\in L_{i-1},v\in L_{i} \mbox{ and }S_v\subseteq S_u\right\rbrace \mbox{ if } 2/\epsilon < i\le 3/\epsilon.
        \end{cases}
\end{equation*}
Then, we set the total edge set $E$ as
\begin{equation*}
    E:=\bigcup_{1\le i\le 3/\epsilon} E_i\ .
\end{equation*}
In other words, there can exist edges only in-between consecutive layers. Moreover, any pair of vertices $(u,v)$ belonging to consecutive layers are connected by an edge if and only if the labels $S_u,S_v$ have an inclusion relationship (i.e. either $S_u\subseteq S_v$ or $S_v\subseteq S_u$).

\subsection{Degree requirements}

In this part, we describe the required out-degree for each vertex $v$ in the arborescence to be covered, which we denote by $k_v$. For ease of notation, we will also denote 
\begin{equation*}
    \gamma_v:=k_v/|\delta^+(v)|\ 
\end{equation*} the ratio between required out-degree for vertex $v$, and the out-degree of $v$ in the graph. All these quantities will only depend on the layer that $v$ is in, and not specifically on which vertex $v$ is. Hence, we will have quantities $k_i,\gamma_i,\delta^+_i$ when referring to these quantities for any vertex $v\in L_i$.

We set the $k_i$s such that in an integral solution $T$, the following must hold:
\begin{enumerate}
    \item $|T\cap L_{1/\epsilon}|=\frac{{m\choose \rho m}}{{(1-\rho)m\choose \rho m}}$,
    \item $|T\cap L_{2/\epsilon}|=\frac{{m\choose \rho m}}{{2\rho m\choose \rho m}}$, and
    \item $|T\cap L_{3/\epsilon}|={m\choose \rho m}$.
\end{enumerate}
To achieve this, we set the the $\gamma_i$s as follows:
\begin{equation*}
\gamma_i:=
    \begin{cases}
        \gamma'_1\mbox{ if } 0\le i< 1/\epsilon\ ,\\
        \gamma'_2 \mbox{ if } 1/\epsilon\le i< 2/\epsilon\ ,\\
        \gamma'_3 \mbox{ if } 2/\epsilon\le i< 3/\epsilon\ ,
    \end{cases}
\end{equation*}
where
\begin{equation*}
    \gamma'_1 := \frac{(\epsilon \rho m)!}{\left((\rho m)!\cdot {(1-\rho) m\choose \rho m}\right)^{\epsilon}}\ ,
\end{equation*}
\begin{equation*}
    \gamma'_2 := \frac{(\epsilon \rho m)!}{\left((\rho m)!\cdot {2\rho m\choose \rho m}\right)^{\epsilon}}\ ,\mbox{ and}
\end{equation*}
\begin{equation*}
    \gamma'_3 := \frac{(\epsilon \rho m)!}{\left((\rho m)!\right)^{\epsilon}}\ ,
\end{equation*}

Let us verify that this satisfies our three desiderata. In an integral solution $T$ satisfying the above degree requirements, we must have that 
\begin{align*}
    |T\cap L_{1/\epsilon}|&=\prod_{i=0}^{1/\epsilon-1}(\gamma_i \delta_i^{+})\\
    &=(\gamma'_1)^{1/\epsilon}\cdot \prod_{i=0}^{1/\epsilon-1}\delta_i^{+}\\
    &=(\gamma'_1)^{1/\epsilon}\cdot \prod_{i=0}^{1/\epsilon-1}{m-i\epsilon \rho m\choose \epsilon\rho m}\\
    &= \frac{1}{(\rho m)!\cdot {(1-\rho) m\choose \rho m}}\cdot \prod_{i=0}^{1/\epsilon-1}\left[(\epsilon \rho m)!\cdot {m-i\epsilon \rho m\choose \epsilon\rho m}\right]\\
    &= \frac{1}{(\rho m)!\cdot {(1-\rho) m\choose \rho m}}\cdot \frac{m!}{((1-\rho)m)!} = \frac{{m\choose \rho m}}{{(1-\rho)m\choose \rho m}}\ .
\end{align*}
Similarly, we must have that 

\begin{align*}
    |T\cap L_{2/\epsilon}|&=\prod_{i=0}^{2/\epsilon-1}(\gamma_i \delta_i^{+})\\
    &=\frac{{m\choose \rho m}}{{(1-\rho)m\choose \rho m}} \cdot \frac{1}{(\rho m)!\cdot {2\rho m\choose \rho m}} \cdot \prod_{i=1/\epsilon }^{2/\epsilon-1}\left[(\epsilon \rho m)!\cdot {m-i\epsilon \rho m\choose \epsilon \rho m}\right]\\
    &=\frac{{m\choose \rho m}}{{(1-\rho)m\choose \rho m}} \cdot \frac{1}{(\rho m)!\cdot {2\rho m\choose \rho m}} \cdot \frac{((1-\rho )m)!}{((1-2\rho)m)!}\\
    &= \frac{{m\choose \rho m}}{{(1-\rho)m\choose \rho m}} \cdot \frac{1}{ {2\rho m\choose \rho m}} \cdot {(1-\rho)m\choose \rho m}=\frac{{m\choose \rho m}}{{2\rho m\choose \rho m}}\ .
\end{align*}
By a similar calculation (note that from layers $2/\epsilon$ to $3/\epsilon$ we are in the collapsing phase, so the expression of $\delta_i^+$ changes), we also have that 
\begin{align*}
    |T\cap L_{3/\epsilon}|&=\prod_{i=0}^{3/\epsilon-1}(\gamma_i \delta_i^{+})\\
    &=\frac{{m\choose \rho m}}{{2\rho m\choose \rho m}} \cdot \frac{1}{(\rho m)!} \cdot \prod_{i=2/\epsilon }^{3/\epsilon-1}\left[(\epsilon \rho m)!\cdot {4\rho m-i\epsilon \rho m\choose \epsilon \rho m}\right]\\
    &=\frac{{m\choose \rho m}}{{2\rho m\choose \rho m}} \cdot \frac{1}{(\rho m)!} \cdot \frac{(2\rho m)!}{(\rho m)!}={m\choose \rho m}\ .
\end{align*}

\subsection{Required out-degrees are large}
We prove the following lemma.
\begin{lemma}
\label{lem:large_required_degrees}
    There exists an absolute constant $\xi>0$ such that for any $\rho<\xi$, any $m$ big enough and $\epsilon$ small enough, we have that $k_u\ge n^{\Omega(1/\ell)}=n^{\Omega(\epsilon)} $ for any vertex $u$. Moreover, for $\ell=3$ (which means $\epsilon=1$), then $k_u=n^{\Omega(1)}$ for all $u$.
\end{lemma}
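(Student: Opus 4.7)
The plan is to compute $\log_2 k_u$ directly via Stirling's approximation for the worst-case vertex in each of the three ranges where $\gamma_i$ is constant. Within each of $R_1 = [0, 1/\epsilon)$, $R_2 = [1/\epsilon, 2/\epsilon)$, $R_3 = [2/\epsilon, 3/\epsilon)$, the coefficient $\gamma_i$ takes a single value (respectively $\gamma'_1, \gamma'_2, \gamma'_3$), while $|\delta^+(u)|$ is a binomial coefficient $\binom{\alpha_i m}{\epsilon\rho m}$ whose ``top'' $\alpha_i$ strictly decreases with $i$ in both the expanding and collapsing phases. Hence $k_u = \gamma_i |\delta^+(u)|$ is minimized within each range at its last layer, and the lemma reduces to verifying the bound for $i \in \{1/\epsilon - 1,\, 2/\epsilon - 1,\, 3/\epsilon - 1\}$.

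For each of these three layers, I would substitute the closed-form expressions for $\gamma'_j$ and the relevant binomial, writing $k_u$ as a ratio of factorials. Applying $\ln(k!) = k\ln k - k + O(\log k)$, the coefficients of $m\ln m$ and of $m$ both vanish identically (as is automatic for a dimensionless combinatorial counting quantity), leaving an identity of the form $\ln k_u = m\cdot F(\rho,\epsilon) + O(\log m)$ for a smooth function $F$. A Taylor expansion of $F$ in $\epsilon$ about $0$, keeping $\rho$ fixed, yields to leading order
\[
F \;=\; \begin{cases}
    \dfrac{\epsilon\rho^2}{2(1-\rho)^2} + O(\epsilon\rho^3) & \text{in Range 1,}\\[4pt]
    \epsilon\rho\,\ln\!\left(\dfrac{e(1-2\rho)}{4\rho}\right) + O(\epsilon^2\rho^2) & \text{in Range 2,}\\[4pt]
    \epsilon\rho\,\ln(1+\epsilon) \;=\; \epsilon\rho + O(\epsilon^2\rho) & \text{in Range 3,}
\end{cases}
\]
all of which are strictly positive and of order $\Omega(\epsilon)$ provided $\rho<\xi$ for a small absolute constant $\xi>0$ (the binding constraint comes from Range 2 and amounts to $\rho\lesssim e/(2(2+e))\approx 0.288$).

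Combining with $\log_2 n = \Theta(m\, h(2\rho)) = \Theta(m)$ (with constant depending on $\rho$), one obtains $\log_2 k_u / \log_2 n = \Omega(\epsilon)$ in every range, i.e.\ $k_u \ge n^{\Omega(\epsilon)} = n^{\Omega(1/\ell)}$. For the special case $\ell = 3$ (so $\epsilon = 1$), the three worst cases become exactly $k_s = \binom{m}{\rho m}/\binom{(1-\rho)m}{\rho m}$, $k_u = \binom{(1-\rho)m}{\rho m}/\binom{2\rho m}{\rho m}$ for $u \in L_1$, and $k_u = \binom{2\rho m}{\rho m}$ for $u \in L_2$; Lemma~\ref{lem:binomial_coeff} shows each has logarithm $\Theta(m)$ with a positive constant for $\rho$ small, hence $k_u = n^{\Omega(1)}$.

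The only delicate step is Range 1: the natural $O(\epsilon)$ contribution to $F$ cancels exactly, and one must push the Taylor expansion to order $\rho^2$ to expose the positive $\epsilon\rho^2$ term; the key identity is $(1-2\rho)\ln\tfrac{1-2\rho}{1-\rho}+\rho = \rho^2/(2(1-\rho)^2) + O(\rho^3)$. This reflects the fact that in the inner part of the expanding phase, the required degrees only barely exceed the natural branching coming from the graph's out-degree, which is precisely the delicate scaling built into $\gamma'_1$.
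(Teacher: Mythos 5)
Your proposal follows essentially the same route as the paper: both apply Stirling's approximation, split into the three ranges according to which $\gamma'_j$ is in force, use that the out-degree is monotone decreasing within each range so the worst case sits at its last layer, and then Taylor-expand the resulting exponent in $\epsilon$ about $0$ to exhibit a positive $\Omega(\epsilon)$ term for $\rho$ small; the $\ell=3$ case is handled identically. The leading constants you report are consistent with the paper's up to $O(\rho^3)$ corrections (e.g., your $\epsilon\rho^2/(2(1-\rho)^2)$ vs.\ the paper's $\epsilon\rho^2/(2(1-\rho))$ in natural-log units), though in Range~3 the intermediate expression $\epsilon\rho\ln(1+\epsilon)$ is a slip --- it is $O(\epsilon^2\rho)$, not $\epsilon\rho+O(\epsilon^2\rho)$; the correct leading term is simply $\rho\epsilon$ (the $\epsilon\rho\log(1/\epsilon)$ contributions from $\gamma'_3$ and from $h(\epsilon)$ cancel, leaving $\rho\epsilon$), which is what you ultimately assert.
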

\begin{proof}
The second case is easy to check, remembering that $n=2^{O(m)}$ and that if $\epsilon=1$, we have only $3$ different values of $k$, $k_0=\frac{{m\choose \rho m}}{{(1-\rho)m\choose \rho m}}$, $k_1=\frac{{(1-\rho)m\choose \rho m}}{{2 \rho m\choose \rho m}}$, and $k_2={2 \rho m\choose \rho m}$ (recall that $\rho$ is considered a constant).

    In the first case, we use Stirling's formula, the fact that $\epsilon m=\omega(\log m)$, and that $\epsilon$ is a small enough constant; and we obtain for all $i<1/\epsilon$,
\begin{equation*}
    k_i\ge \gamma'_1 \delta^+_{1/\epsilon} \sim m^{\Theta(1)}\cdot {(1-\rho) m \choose \rho m}^{-\epsilon}\cdot \frac{(\epsilon\rho m)^{\epsilon \rho m}}{(\rho m)^{\epsilon \rho m}} \cdot {m-\rho m \choose \epsilon \rho m}\ ,
\end{equation*}
hence we get
\begin{equation*}
    \log_2(k_i)/m\ge -\epsilon (1-\rho)h\left(\frac{\rho}{1-\rho}\right)-\log_2(1/\epsilon)\epsilon \rho+(1-\rho) h\left(\frac{\epsilon \rho}{(1-\rho)}\right)+O(\log (m)/m)\ ,
\end{equation*}
where we recall that $h$ is the entropy function of the Bernoulli distribution. Then, we use the expansion $h(x)= x\frac{(1-\log(x))}{\log(2)}-\frac{x^2}{\log (4)}+O(x^3)$ at $x=0$ ($\log$ is the natural logarithm) to obtain

\begin{align*}
    \log_2(k_i)/m &\ge -\epsilon (1-\rho)h\left(\frac{\rho}{1-\rho}\right)-\log_2(1/\epsilon)\epsilon \rho+(1-\rho) h\left(\frac{\epsilon \rho}{(1-\rho)}\right)+O(\log (m)/m)\\
    \ge -\epsilon (1-\rho)&h\left(\frac{\rho}{1-\rho}\right)-\log_2(1/\epsilon)\epsilon \rho+(1-\rho)\cdot \frac{\epsilon \rho (1-\log(\epsilon \rho /(1-\rho)))}{(1-\rho)\log(2)} +O(\epsilon^2)+O(\log (m)/m)\\
    \ge -\epsilon \cdot &\left(\rho \frac{1-\log(\rho/(1-\rho))}{\log(2)}-\frac{\rho^2}{(1-\rho)\log(4)}+O(\rho^3)\right) + \epsilon \rho \frac{1-\log(\rho/(1-\rho))}{\log(2)} + o(\epsilon)\\
    &\ge \epsilon \left(\frac{\rho^2}{(1-\rho)\log(4)}+O(\rho^3)\right)+o(\epsilon)\ .
\end{align*}
Therefore, for $\rho$ a small enough constant, we obtain that $\log_2(k_i)/m\ge \Omega(\epsilon)$, hence $k_i\ge n^{\Omega(\epsilon)}=n^{\Omega(1/\ell)}$, for any $i<1/\epsilon$. The other cases are similar: for $1/\epsilon \le i<2/\epsilon$, we write

\begin{equation*}
    k_i\ge \gamma'_2 {m-2\rho m \choose \epsilon \rho m}\sim m^{\Theta(1)}\cdot {2\rho m \choose \rho m}^{-\epsilon}\cdot \frac{(\epsilon\rho m)^{\epsilon \rho m}}{(\rho m)^{\epsilon \rho m}} \cdot {m-2\rho m \choose \epsilon \rho m}\ .
\end{equation*}
Therefore we get
\begin{align*}
    \log_2(k_i)/m&\ge -2\epsilon \rho-\log_2(1/\epsilon)\epsilon \rho+(1-2\rho) h\left(\epsilon \rho/(1-2\rho)\right)+O(\log (m)/m)\\
    & =-2\epsilon \rho  -\log_2(1/\epsilon)\epsilon \rho + \epsilon \rho\frac{1-\log(\epsilon \rho /(1-2\rho))}{\log(2)}+o(\epsilon)\\
    &= -2\epsilon \rho  -\log_2(1/\epsilon)\epsilon \rho + \epsilon \rho\frac{1-\log(\epsilon)-\log( \rho)+\log(1-2\rho)}{\log(2)}+o(\epsilon)\\
    &= \rho \epsilon \left(-2+\frac{1-\log (\rho)+\log(1-2\rho)}{\log(2)}\right)+o(\epsilon)= \rho \epsilon \log_2(1/\rho)+O(\rho \epsilon)+o(\epsilon)\ .
\end{align*}
Similarly, for $\rho$ a small enough constant, we obtain the desired result. Finally, for $i\ge 2/\epsilon$, we obtain
\begin{equation*}
   k_i\ge  m^{\Theta(1)}\cdot (\epsilon)^{\epsilon \rho m}\cdot {\rho m\choose  \epsilon \rho m}
\end{equation*} and taking the log then dividing by $m$ we obtain
\begin{align*}
    \log_2 (k_i)/m &= -\epsilon \rho \log_2(1/\epsilon) + \rho h(\epsilon)+O(\log(m)/m)\\
    &= -\epsilon \rho \log_2(1/\epsilon) + \rho h(\epsilon)+o(\epsilon)\\
    &= -\epsilon \rho \log_2(1/\epsilon) + \rho \epsilon\frac{1-\log(\epsilon)}{\log(2)}+o(\epsilon)\\
    &= \frac{\rho}{\log(2)}\epsilon + o(\epsilon)\ge 1.44 \rho \epsilon +o(\epsilon)\ ,
\end{align*}
which concludes the last case, and the proof.
\end{proof}
\subsection{Number of paths in the instance}

In this part, we verify that every vertex belongs to a superpolynomial number of paths.

\begin{lemma}
\label{lem:number_paths}
    For any vertex $v\in G_{n,\ell}^{(\rho)}$, there are at least $n^{\Omega(\log(\ell))}$ directed paths starting or ending at $v$.
\end{lemma}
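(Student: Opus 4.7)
The plan is to exploit the strong $S_m$-symmetry of $G^{(\rho)}_{n,\ell}$ and reduce the count to a single quantity that we can estimate with Stirling's formula. For any vertex $v\in L_i$, let $P_-(v)$ be the number of directed paths from $s$ to $v$ and $P_+(v)$ the number from $v$ to a sink. Every path from $s$ to $v$ ends at $v$ and every path from $v$ to a sink starts at $v$, so the quantity we want to lower bound is at least $P_-(v)+P_+(v)\geq \max(P_-(v),P_+(v))\geq \sqrt{P_-(v)P_+(v)}$. Since each $s$-to-sink path meets $L_i$ at exactly one vertex,
\begin{equation*}
P_+(s) \;=\; \sum_{v\in L_i} P_-(v)\,P_+(v),
\end{equation*}
and the action of $S_m$ on the ground set $[m]$ induces graph automorphisms that are transitive on each layer; hence $P_-(v)P_+(v)$ is constant across $L_i$, equal to $P_+(s)/|L_i|$.

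Next I would bound $|L_i|$ uniformly. The label size $|S_v|$ takes values in $[0,2\rho m]$, and for $\rho<1/4$ this is contained in $[0,m/2]$ where $k\mapsto\binom{m}{k}$ is increasing; therefore $|L_i|\leq \binom{m}{2\rho m}$ for every $i$. Meanwhile $P_+(s)$ can be counted by decomposing any $s$-to-sink path into (i) a peak set $W\subseteq[m]$ of size $2\rho m$ visited at the middle layer $L_{2/\epsilon}$, (ii) an ordered partition of $W$ into $2/\epsilon$ blocks of size $\epsilon\rho m$ (tracing the expanding phase), (iii) a sink $T\subseteq W$ of size $\rho m$, and (iv) an ordered partition of $W\setminus T$ into $1/\epsilon$ blocks of size $\epsilon\rho m$ (tracing the collapsing phase). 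This gives
\begin{equation*}
\frac{P_+(s)}{\binom{m}{2\rho m}} \;=\; \frac{(2\rho m)!}{((\epsilon\rho m)!)^{2/\epsilon}}\cdot \binom{2\rho m}{\rho m}\cdot \frac{(\rho m)!}{((\epsilon\rho m)!)^{1/\epsilon}}.
\end{equation*}

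The final step applies Stirling to each factor. Using $\log_2(N!/((N/k)!)^k)=N\log_2 k+O(k\log N)$ (which is valid because our assumption $\epsilon m=\omega(\log m)$ forces $\epsilon\rho m=\omega(\log m)$), the two multinomial factors contribute $2\rho m\log_2(2/\epsilon)$ and $\rho m\log_2(1/\epsilon)$ respectively to the logarithm, while $\log_2\binom{2\rho m}{\rho m}=\Theta(\rho m)$. The cumulative Stirling error is $O((1/\epsilon)\log m)=O(\ell\log m)$, which is $o(m)$ under $\ell=o(\log n/\log\log n)$ with $m=\Theta(\log n)$. Hence $P_+(s)/|L_i|\geq 2^{\Omega(m\log\ell)}=n^{\Omega(\log\ell)}$, and taking the square root yields $\max(P_-(v),P_+(v))\geq n^{\Omega(\log\ell)}$ as required. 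The main obstacle is verifying that the multinomial main term $\Omega(m\log\ell)$ really dominates the Stirling error $O(\ell\log m)$; this is precisely where the depth restriction $\ell=o(\log n/\log\log n)$ is used.
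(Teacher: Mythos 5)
Your proof is correct and takes a genuinely different route from the paper's. The paper's argument is local: it lower-bounds the minimum in-degree and out-degree by $\binom{\rho m}{\epsilon\rho m}$ over the relevant ranges of layers and, for each $v$, multiplies this degree bound over $\Omega(\ell)$ consecutive layers, walking backward (counting paths ending at $v$) when $v$ is sufficiently far from the source and forward (counting paths starting at $v$) otherwise; an entropy estimate then gives $n^{\Omega(\log\ell)}$. You instead compute the total number of $s$-to-sink paths $P_+(s)$ exactly via the peak-set decomposition, invoke the $S_m$-symmetry of $G_{n,\ell}^{(\rho)}$ to conclude $P_-(v)P_+(v)=P_+(s)/|L_i|$ for every $v\in L_i$, bound $|L_i|\le\binom{m}{2\rho m}$ uniformly, and finish with $\max(P_-,P_+)\ge\sqrt{P_-P_+}$. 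Both arguments are elementary and both need the same control on the Stirling error $O(\ell\log m)=o(m)$, which is exactly where $\ell=o(\log n/\log\log n)$ enters. Your approach is somewhat cleaner in that it avoids the paper's forward/backward case split, which is a little delicate near layer $L_{1/\epsilon}$ (the crude bound $\delta^-_{\min}=\min_{i\ge 1/\epsilon}\delta^-_i$ cannot be iterated $\Omega(\ell)$ times until $v$ is a further $\Omega(1/\epsilon)$ layers past $L_{1/\epsilon}$, so strictly speaking the threshold for switching cases should be pushed to roughly $L_{2/\epsilon}$), and it additionally yields a closed-form count rather than only a lower bound. The paper's degree-product approach is more robust in a different sense: it would survive local modifications of the construction that destroy global $S_m$-symmetry. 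One small convention to fix in your write-up: adopt $P_+(v)=1$ for sinks and $P_-(s)=1$, so that the counting identity $\sum_{v\in L_i}P_-(v)P_+(v)=P_+(s)$ is valid at the two boundary layers as well.
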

\begin{proof}
    The proof is easy by noticing that the minimum out-degree $\delta^+_{\text{min}}$ of a vertex which is not a sink satisfies
    \begin{equation*}
        \delta^+_{\text{min}}=\min_{0\le i\le \ell-1}\delta_i^+\ge {\rho m\choose \epsilon \rho m}\ .
    \end{equation*}
    Similarly, the minimum in-degree of any vertex at distance at least $1/\epsilon$ from the source is equal to
    \begin{equation*}
        \delta^-_{\text{min}}=\min_{1/\epsilon\le i\le \ell-1}\delta^-_i\ge {\rho m\choose \epsilon \rho m}\ .
    \end{equation*}
    By a standard expansion of the entropy function, we get that 
    \begin{equation*}
        \log_2(\delta^+_{\text{min}})/m \ge h(\epsilon)+O(\log m / m) \ge \Omega(\epsilon \log(1/\epsilon))+O(\epsilon)+O(\log m/ m) = \Omega(\epsilon \log(1/\epsilon))\ , 
    \end{equation*}
    and the same for $\delta^-_{\text{min}}$.
   Hence if $v$ is at distance at least $1/\epsilon$ from the source, there at least $2^{1/\epsilon \Omega(m \epsilon \log(1/\epsilon)}=n^{\Omega( \log(\ell))}$ directed paths of length $\Omega(\ell)$ ending at $v$. Otherwise, $v$ is at distance at least $\Omega(\ell)$ from the sinks and we conclude similarly with our bound on $\delta^+_{\text{min}}$.
\end{proof}

\subsection{Ruling out integral solutions}
This last part is the most tricky of this whole section, and also the most important. In the following lemma, our instance of depth $3$ is the case where $\ell=3$ and $\epsilon=1$.
\begin{lemma}
\label{lem:no_integral_sol}
    There exists some absolute constant $\xi>0$ such that for any $\rho<\xi$, if $T$ is an $\alpha$-approximate integral solution in the constructed instance $G_{n,\ell}^{(\rho)}$, then $\alpha\ge n^{\Omega(1/\ell)}$.
\end{lemma}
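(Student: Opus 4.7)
My plan is to sandwich $|T\cap L_\ell|$ between a degree-based lower bound and a label-based upper bound. Using $n=2^{\Theta(m)}$, the two estimates
\[\binom m{\rho m}/\alpha^\ell\ \le\ |T\cap L_\ell|\ \le\ \binom m{\rho m}\cdot 2^{-\Omega(m)}\]
together force $\alpha^\ell\ge 2^{\Omega(m)}=n^{\Omega(1)}$, i.e., $\alpha\ge n^{\Omega(1/\ell)}$.

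The lower bound is immediate: a straightforward induction from the source using in-degree $\le 1$ and out-degree $\ge k_v/\alpha$ at each non-sink gives $|T\cap L_i|\ge\prod_{j<i}k_j/\alpha^i$, and the $k_j$'s were chosen precisely so that $\prod_{j<\ell}k_j=\binom m{\rho m}$.

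For the upper bound, every covered sink $t$ has $S_t\subseteq S_u$ for its unique middle-layer ancestor $u\in T\cap L_{2/\epsilon}$, and $u$ in turn has a unique $L_{1/\epsilon}$-ancestor $s_u$ with $S_{s_u}\subseteq S_u$. Writing $\mathcal F_{s_u}:=\{S_u\setminus S_{s_u}:u\text{ descendant of }s_u\text{ in }T\cap L_{2/\epsilon}\}$ as a family of $\rho m$-subsets of $[m]\setminus S_{s_u}$, the sinks covered by the subtree of $s_u$ are exactly the sets of the form $X\cup Y$ with $X\subseteq S_{s_u}$ and $Y\in\partial_{\rho m-|X|}\mathcal F_{s_u}$ (the $(\rho m-|X|)$-shadow of $\mathcal F_{s_u}$ inside $[m]\setminus S_{s_u}$). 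Splitting on $|X|=j$ and summing over $s_u$ therefore yields
\[|T\cap L_\ell|\ \le\ \sum_{s_u\in T\cap L_{1/\epsilon}}\sum_{j=0}^{\rho m}\binom{\rho m}{j}\,|\partial_{\rho m-j}\mathcal F_{s_u}|.\]

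The main technical step is to bound this double sum by $\binom m{\rho m}\cdot 2^{-\Omega(m)}$. I would combine the two trivial pointwise shadow estimates $|\partial_{\rho m-j}\mathcal F_{s_u}|\le\min\{|\mathcal F_{s_u}|\binom{\rho m}{\rho m-j},\binom{(1-\rho)m}{\rho m-j}\}$ with the global constraints $\sum_{s_u}|\mathcal F_{s_u}|=|T\cap L_{2/\epsilon}|\le\binom m{\rho m}\alpha^{1/\epsilon}/\binom{2\rho m}{\rho m}$ (from each middle vertex covering at least $\binom{2\rho m}{\rho m}/\alpha^{1/\epsilon}$ disjoint sinks on the collapsing side) and $|T\cap L_{1/\epsilon}|\ge\binom m{\rho m}/(\binom{(1-\rho)m}{\rho m}\alpha^{1/\epsilon})$ (from the expanding side). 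Applying the binomial asymptotics of Lemmas \ref{lem:binomial_coeff} and \ref{lem:binomial_function} should then reduce the double sum to a single entropy-type expression in $\rho$; as in the convexity computation of Lemma \ref{lem:large_required_degrees}, a strictly positive entropy gap of order $\Omega(\rho^2)$ should emerge for $\rho$ smaller than an absolute constant $\xi$, and raising this gap to the $m$-th power in the exponent gives the desired $2^{-\Omega(m)}$ factor. The hard part is extracting this gap: the one-sided union bound $|T\cap L_\ell|\le|T\cap L_{2/\epsilon}|\binom{2\rho m}{\rho m}$ only yields $\alpha\ge 1$, so the polynomial improvement requires genuinely coupling the expanding and collapsing phases through the middle layer, ruling out configurations where a single choice of $j$ and average $|\mathcal F_{s_u}|$ would simultaneously saturate both shadow bounds while respecting feasibility.
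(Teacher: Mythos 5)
Your decomposition is, modulo language, the same partition-by-overlap argument the paper uses: you split each reachable sink $S_t$ as $X\cup Y$ with $X=S_t\cap S_{s_u}$ of size $j$, and propose to use the ``universe'' bound $\binom{(1-\rho)m}{\rho m-j}$ when $j$ is large and the ``size'' bound $|\mathcal F_{s_u}|\binom{\rho m}{j}$ when $j$ is small. The paper's $T_1^{(v)}$ (sinks with $|S_t\cap S_v|\ge\theta m$, $\theta=\rho/3$) and $T_2^{(u)}$ (the rest, as seen from a single middle vertex $u$) are exactly the $j\ge\theta m$ and $j<\theta m$ halves of this, and your shadow-family formulation is a clean global rewriting of the paper's local counting. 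Your sandwich and your observation that the pointwise union bound alone only yields $\alpha\ge 1$ are both correct.

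There is, however, one real gap in the plan as written, and it sits precisely where the paper's pigeonhole step lives. The large-$j$ contribution of your double sum is $|T\cap L_{1/\epsilon}|\cdot\sum_{j\ge\theta m}\binom{\rho m}{j}\binom{(1-\rho)m}{\rho m-j}$, which is \emph{increasing} in $|T\cap L_{1/\epsilon}|$; to make the sandwich bite you therefore need an \emph{upper} bound on $|T\cap L_{1/\epsilon}|$, but you only record the lower bound $|T\cap L_{1/\epsilon}|\ge\binom m{\rho m}/(\binom{(1-\rho)m}{\rho m}\alpha^{1/\epsilon})$ coming from the expanding phase — that inequality is useless here. The missing ingredient is the per-$s_u$ degree constraint $|\mathcal F_{s_u}|\ge\binom{(1-\rho)m}{\rho m}/(\binom{2\rho m}{\rho m}\alpha^{1/\epsilon})$ (each $s_u$ must itself branch enough to reach $L_{2/\epsilon}$), which together with your bound on $\sum_{s_u}|\mathcal F_{s_u}|=|T\cap L_{2/\epsilon}|$ gives $|T\cap L_{1/\epsilon}|\le\binom m{\rho m}\alpha^{2/\epsilon}/\binom{(1-\rho)m}{\rho m}$. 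This is exactly the quantity the paper invokes when it pigeonholes the close sinks over the $\alpha^{1/\epsilon}\binom{(1-\rho)m}{\rho m}/\binom{2\rho m}{\rho m}$ middle descendants of a fixed $v$. Without it your double sum cannot be pushed below $\binom m{\rho m}$. Two smaller remarks on the ``hard part'': the threshold must be chosen with $\rho^2<\theta<\rho/2$ (the paper uses $\theta=\rho/3$) so that Lemma \ref{lem:binomial_function} controls both tails by their boundary terms, and the entropy gap that emerges is not $\Omega(\rho^2)$ but $\Theta(\rho\log(1/\rho))$ on the $T_1$ side (and $\Theta(\rho)$ from $1-h(1/3)>0$ on the $T_2$ side); either way it is positive for $\rho$ below a constant, which is all that matters.
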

\begin{proof}
Consider an integral solution $T$ with out-degree $k_u/\alpha$ at every vertex $u$. In the whole proof, we fix a vertex $v\in L_{1/\epsilon}$, and we define by $T_v$ the subtree of $T$ rooted at a vertex $v$. By construction, if $v\in L_{1/\epsilon}$, then the subtree $T_v$ needs to contain at least $(\alpha)^{-2/\epsilon}\cdot  {(1-\rho)m\choose \rho m}$ sinks. Similarly, for any vertex $u\in L_{2/\epsilon}$ contained in $T$, the subtree $T_u$ must contain $(\alpha)^{-1/\epsilon}\cdot {2\rho m\choose \rho m}$ sinks. Otherwise, the solution $T$ cannot be $\alpha$-approximate.

We remind the reader that here $S_v$ refers to the set corresponding to vertex $v$. We fix some parameter $\theta=\rho/3$, so that $\rho^2<\theta<\rho/2$ (which is true for $\rho$ small enough). Let us partition the set of sinks $L_3$ into two sets. $T_1^{(v)}$ is the set of sinks whose corresponding set $S'$ satisfies $|S'\cap S_v|\ge \theta m$, and $T_2:=L_{3/\epsilon}\setminus T_1^{(v)}$ is the rest of the sinks. 

Let us do some basic counting. We have that 
\begin{align*}
    |T_1^{(v)}| &= \{S'\subseteq \mathcal U\mid |S'|=\rho m\text{ and }|S'\cap S_v|\ge \theta m\}\\
    &=\sum_{j=\theta m}^{\rho m} {|S_v|\choose j}{|\mathcal U\setminus S_v|\choose \rho m-j}\\
    &=\sum_{j=\theta m}^{\rho m} {\rho m\choose j}{m-\rho m\choose \rho m-j} \le m^{O(1)}{\rho m\choose \theta m}{m-\rho m\choose \rho m-\theta m}\ ,
\end{align*} where the inequality uses Lemma \ref{lem:binomial_function} with the fact that $\theta > \rho^2$.

Hence 
\begin{align}
\label{eq:ruling_out_integral_sol1}
    \log_2(|T_1^{(v)}|)/m &\le  \rho h\left(\frac{\theta}{\rho}\right) + (1-\rho)h\left(\frac{\rho-\theta}{1-\rho}\right)+o(1)\\
    &= \rho h(1/3)+(1-\rho)h\left((2/3) \cdot \frac{\rho}{1-\rho}\right)+o(1)\ .
\end{align}

Second, for any vertex $u\in L_{2/\epsilon}$ which is connected to $v$ (i.e. there exists a directed path from $v$ to $u$), we count how many sinks in $T_2$ $u$ can be connected to. Let us denote by $|T^{(u)}_2|$ this number. We recall that $u$ corresponds to a set $S_u$ of size $2\rho m$, and if $u$ is connected to $v$, we must have $S_v\subseteq S_u$. 
\begin{align*}
    |T^{(u)}_2|&=\{S\subseteq \mathcal U\mid |S|=\rho m\text{ and }S\subseteq S_u \text{ and } |S\cap S_v|< \theta m\} \\
    &=\sum_{j=0}^{\theta m} {|S_v|\choose j}{|S_u\setminus S_v| \choose \rho m-j}\\
    &= \sum_{j=0}^{\theta m} {\rho m\choose j}{\rho m \choose \rho m-j}\le m^{O(1)}{\rho m\choose \theta m}{\rho m\choose \rho m-\theta m}=m^{O(1)}{\rho m\choose \theta m}^2\ ,
\end{align*} where the inequality uses the fact that $\theta <\rho / 2 $ with Lemma \ref{lem:binomial_function}.
Hence, for any $u$ connected to $v$, we have
\begin{align}
\label{eq:ruling_out_integral_sol2}
    \log_2(|T^{(u)}_2|)/m &= 2\rho h\left(\frac{\theta}{\rho}\right)+o(1)\\
    &= 2\rho h\left(1/3\right)+o(1)\ .
\end{align}
Note that in the integral solution $T$, there must be some vertex $u\in T_v\cap L_{2/\epsilon}$ which is connected to less than 
\begin{equation*}
    \frac{|T_1^{(v)}|}{(\alpha)^{-1/\epsilon}{(1-\rho)m\choose \rho m}{2\rho m\choose \rho m}^{-1}}
\end{equation*} sinks in $|T_1^{(v)}|$, because there are $(\alpha)^{-1/\epsilon}{(1-\rho)m\choose \rho m}{2\rho m\choose \rho m}^{-1}$ vertices in $T_v\cap L_{2/\epsilon}$ which have to share the set of sinks $T_1^{(v)}$. This vertex $u$ can therefore be connected to at most 
\begin{equation*}
    \frac{|T_1^{(v)}|}{(\alpha)^{-1/\epsilon}{(1-\rho)m\choose \rho m}{2\rho m\choose \rho m}^{-1}}+ |T^{(u)}_2|
\end{equation*} sinks in total in $T$. Therefore, for the solution $T$ to be $\alpha$-approximate, it must be that
\begin{align*}
    &\frac{|T_1^{(v)}|}{(\alpha)^{-1/\epsilon}{(1-\rho)m\choose \rho m}{2\rho m\choose \rho m}^{-1}}+ |T^{(u)}_2| \ge (\alpha)^{-1/\epsilon} \cdot {2\rho m\choose \rho m}\\
    &\implies  \frac{|T_1^{(v)}|}{(\alpha)^{-2/\epsilon} {(1-\rho)m\choose \rho m}}+ \frac{|T^{(u)}_2|}{(\alpha)^{-1/\epsilon} \cdot {2\rho m\choose \rho m}}\ge 1\\
    &\implies \max\left(\frac{|T_1^{(v)}|}{\alpha^{-2/\epsilon} {(1-\rho)m\choose \rho m}}, \frac{|T^{(u)}_2|}{\alpha^{-1/\epsilon} {2\rho m\choose \rho m}}\right)\ge 1/2\ .
\end{align*}
Hence, it must be that either

\begin{align}
    \log_2(|T^{(u)}_2|)/m &\ge -\log_2(\alpha)/(\epsilon m)+ \log_2{2\rho m\choose \rho m}/m -1/m\\
    &=-\log_2(\alpha)/(\epsilon m)+2\rho h(1/2)+O(\log(m)/m)\\
 &=-\log_2(\alpha)/(\epsilon m)+2\rho +O(\log(m)/m)\ ,
\label{eq:ruling_out_integral_sol3}
\end{align} or  
\begin{align}
    \log_2(|T_1^{(v)}|)/m &\ge -2\log_2(\alpha)/(\epsilon m)+\log_2{(1-\rho) m\choose \rho m}/m-1/m\\ 
    &=-2\log_2(\alpha)/(\epsilon m)+(1-\rho)h\left(\frac{\rho}{1-\rho}\right)+O(\log(m)/m)\ .
    \label{eq:ruling_out_integral_sol4}
\end{align}
We remark that $h(1/3)<1$ hence Equation \eqref{eq:ruling_out_integral_sol2} contradicts Equation \eqref{eq:ruling_out_integral_sol3} unless we have that $\log_2(\alpha)\ge \Omega(\epsilon m)$.

Finally, we claim that the function 
\begin{equation*}
    g:\rho \mapsto (1-\rho)h\left(\frac{\rho}{1-\rho}\right)-\rho h(1/3)-(1-\rho)h\left((2/3) \cdot \frac{\rho}{1-\rho}\right)
\end{equation*}
satisfies that $g(\rho)>0$ for $\rho<\xi$ (with $\xi>0$ a small enough constant). This implies that if $\rho<\xi$, we obtain that Equation \eqref{eq:ruling_out_integral_sol1} contradicts Equation  \eqref{eq:ruling_out_integral_sol4}, unless we have that $\log_2(\alpha)\le -\Omega(\epsilon m)$.

Therefore, we reach a contradiction in both cases, unless $\log_2(\alpha)\ge \Omega(\epsilon m)$, which implies that $\alpha\ge 2^{\Omega(\epsilon m)}=n^{\Omega(\epsilon)}=n^{\Omega(1/\ell)}$.

Let us prove the last claim. We use the standard expansion of the entropy function $h(x)= x\frac{(1-\log(x))}{\log(2)}+O(x^2)$ at $x=0$ ($\log$ is the natural logarithm) to obtain that 
\begin{align*}
    g(\rho)&=(1-\rho)\left[\rho\frac{1-\log(\rho/(1-\rho))}{(1-\rho)\log (2)}- (2/3)\rho\frac{1-\log((2/3)\rho/(1-\rho))}{(1-\rho)\log (2)}\right]-\rho h(1/3)+O(\rho^2)\\
    &=\rho\frac{1-\log(\rho/(1-\rho))}{\log (2)}- (2/3)\rho\frac{1-\log((2/3)\rho/(1-\rho))}{\log (2)}-\rho h(1/3)+O(\rho^2)\\
    &= \frac{\rho \log(1/\rho)}{3\log (2)}+O(\rho)\ .
\end{align*}
Clearly, for $\rho>0$ small enough, this expression is strictly positive which concludes the proof.
\end{proof}
\section{The path hierarchy and locally good solutions}
\label{sec:lower_bound_many_rounds}
The goal of this section is to prove Theorem \ref{thm:sherali_many_round} and the last property of Theorem \ref{thm:locality_gap} (the other properties being proven in Section \ref{sec:construction}), in order to give evidence towards our main conjecture. For this we will show that $\Omega(\ell)$ levels of the path hierarchy remain feasible on our instance $G_{n,\ell}^{(\rho)}$, with even some added properties which allow to implement the round-and-condition algorithm in a locally consistent way.

\subsection{Feasible assignment LP solution}
We start by defining a feasible solution to the assignment LP on our instance $G_{n,\ell}^{(\rho)}$. Recall that $\gamma_i=k_i/\delta^+_i$ is the ratio between required out-degrees and out-degree in graph for the vertices in layer $i$. We also recall that $\delta_i^-$ is the in-degree in the graph of vertices in $L_i$. For any edge $e\in L_i$ (i.e. its endpoint is in $L_i$), we set 
\begin{equation}
\label{eq:LP_many_layers}
    x_e := \gamma_{i-1} \cdot \prod_{j=1}^{i-1} (\gamma_{j-1} \delta_j^-)\ ,
\end{equation}
with the convention that $\prod_{j=1}^{0} (\gamma_{j-1} \delta_j^-)=1$.

The covering constraints are easy to check, indeed for any vertex $v\in L_i$, we have that 
\begin{equation*}
    \sum_{e\in \delta^+(v)}x_e = (\gamma_{i} \delta_{i}^+) \cdot \prod_{j=1}^{i} (\gamma_{j-1} \delta_j^-)\ ,
\end{equation*}
and 
\begin{equation*}
    \sum_{e\in \delta^-(v)}x_e =  (\gamma_{i-1}\delta^-_i)\cdot  \prod_{j=1}^{i-1} (\gamma_{j-1} \delta_j^-)=\prod_{j=1}^{i} (\gamma_{j-1} \delta_j^-)\ .
\end{equation*}

The packing constraints are slightly more tricky. We need to compute the quantity
\begin{equation*}
    \sum_{e\in \delta^-(v)}x_e = \prod_{j=1}^{i} (\gamma_{j-1} \delta_j^-)\ ,
\end{equation*}
for any $v\in L_i$. If $i\le 1/\epsilon$, we have 
\begin{equation*}
    \prod_{j=1}^{i} (\gamma_{j-1} \delta_j^-) =\prod_{j=1}^{i}\left[\frac{(\epsilon \rho m)!}{\left((\rho m)!\cdot {(1-\rho) m\choose \rho m}\right)^{\epsilon}}\cdot {j\epsilon \rho m\choose \epsilon\rho m}\right]=\frac{(i\epsilon \rho m)!}{(\rho m)!^{i\epsilon}{{(1-\rho) m\choose \rho m}}^{i\epsilon}}\ .
\end{equation*}
Let us define $x:=i\epsilon \in [0,1]$. We then have 
\begin{equation*}
    \prod_{j=1}^{i} (\gamma_{j-1} \delta_j^-) \le \frac{(x \rho m)!}{(\rho m)!^{x}} \le 1\ .
\end{equation*}

If $1/\epsilon<i\le 2/\epsilon$, then 
\begin{align*}
    \prod_{j=1}^{i} (\gamma_{j-1} \delta_j^-) &= \prod_{j=1}^{1/\epsilon} (\gamma_{j-1} \delta_j^-) \cdot \prod_{j=1/\epsilon+1}^{i} (\gamma_{j-1} \delta_j^-)\\
    &\le {(1-\rho) m\choose \rho m}^{-1}\cdot {2\rho m\choose \rho m}^{1-i\epsilon}\cdot \prod_{j=1/\epsilon+1}^{i}\left[\frac{(\epsilon \rho m)!}{\left((\rho m)!\right)^{\epsilon}}\cdot {j\epsilon \rho m\choose \epsilon\rho m}\right]\\
    &\le {(1-\rho) m\choose \rho m}^{-1}\cdot {2\rho m\choose \rho m}^{1-i\epsilon}\cdot \frac{(i\epsilon \rho m)!}{((\rho m)!)^{i\epsilon}}\ .
\end{align*}
Hence, 
\begin{equation*}
    \log_2\left(\prod_{j=1}^{i} (\gamma_{j-1} \delta_j^-)\right)/m \le -(1-\rho)h\left(\frac{\rho}{1-\rho}\right)+(1-x)\cdot (2\rho)+(\rho x)\cdot \log_2(x)+O(\log m/m)\ .
\end{equation*}
Clearly, this is a convex function of $x$, hence we only need to check the endpoints to obtain the maximum on the whole region $x\in [1,2]$. One can check that at both endpoints the function is strictly negative, hence $\prod_{j=1}^{i} (\gamma_{j-1} \delta_j^-)\le 1$.

Finally, for $i>2/\epsilon$, we have 
\begin{align*}
    \prod_{j=1}^{i} (\gamma_{j-1} \delta_j^-) &= {(1-\rho) m\choose \rho m}^{-1}\cdot \prod_{j=2/\epsilon+1}^{i}\left[\frac{(\epsilon \rho m)!}{\left((\rho m)!\right)^{\epsilon}}\cdot {m-4\rho m+j\epsilon \rho m\choose \epsilon\rho m}\right]\\
    &= {(1-\rho) m\choose \rho m}^{-1}\cdot \frac{(m-4\rho m+i\epsilon \rho m)!}{((\rho m)!)^{i\epsilon-2}\cdot (m-2\rho m)!}\ .
\end{align*}
Clearly, the last expression in increasing in $i$ (recall that $\rho$ is a small constant, which implies that $m-4\rho m>\rho m$). One can conclude by checking the value at $i=3/\epsilon$ to obtain a maximum value of ${(1-\rho) m\choose \rho m}^{-1}\cdot \frac{(m-\rho m)!}{((\rho m)!)\cdot (m-2\rho m)!}=1$.

Hence we can conclude that our choice of $x_e$ variables satisfy the assignment LP.

\subsection{Feasibility of the path hierarchy}

We recall here the path hierarchy. Recall that we keep only variables and constraints for paths of length at most $t+1$, where $t$ is the number of rounds. Recall that $e_0$ is a dummy edge ending at the source vertex (and by convention we assume that $x_{e_0}=1$ where $x$ is the assignment LP solution).
\begin{align}
    \sum_{q\in C(p)} y(q) &= k_p \cdot y(p) &\forall p\in P \label{eqLP:flowLPdemand2}\\
     \sum_{q\in I(v)\cap D(p)} y(q) &\leq y(p) &\forall p\in P,v\in V \label{eqLP:flowLPcapacity2}\\
    \sum_{e\in \delta^-(v)}y(\{e\}) &\le 1 & \forall v\in V \label{eqLP:flowLPpacking_unlifted2}\\
     \sum_{e\in \delta^+(v)}y(\{e\}) &\ge k_v\cdot \sum_{e\in \delta^-(v)}y(\{e\}) & \forall v\in V \label{eqLP:flowLPcovering_unlifted2}\\
    y(q)&\le y(p) & \forall p,q\in P,\  p\subseteq q \label{eqLP:consistency2}\\
    y(\{e_0\}) &= 1 & \label{eqLP:flowLPdemandroot2}\\
    0\le y&\le 1 
\end{align}
To get intuition on how we set the variables, one can imagine that we iterate the shadow distribution $t$ times as explained in introduction. Formally, we imagine that we sample a first set of shadow edges $S^{(1)}$. Each edge $e\in S^{(1)}$ selects a subtree which contains each edge $e'$ independently with probability $x_{e'}^{(e)}$. This gives us a second set of shadows $S^{(2)}$. Now the set $S^{(2)}$ triggers a set $S^{(3)}$ in the same way, and so on until reaching the set $S^{(t+1)}$ which will be our final set.

We do not know how to analyze this procedure precisely, however the probability that a path $p=(e_1,e_2,\ldots ,e_{m})$ of length at most $t+1$ appears in this process seems to be dominated by the probability of the event $\mathcal E$ that $e_1\in S_1$, then $e_2$ is selected in the subtree of $e_1$, then $e_3$ is selected in the subtree of $e_2$, and so on.

One can notice that the most reasonable way to define subtree solutions\footnote{It is possible to prove the existence of subtree solutions for our lifted instances of any depth, however, for better readability, we did not include this result which is not needed in our proof.} $x^{(e)}$ for each edge $e$ in a similar way as in Section \ref{sec:sherali1round} would imply that the probability of this event is given by 
\begin{equation*}
    \mathbb P[\mathcal E] = x_{e_1}\cdot \prod_{j=2}^{m}x_{e_j}^{(e_{j-1})} = x_{e_1}\cdot \prod_{j=2}^{m}\gamma_{e_{j}}\ ,
\end{equation*}
where $\gamma_{e}=\gamma_{i-1}$ for any edge $e\in L_i$. This is exactly how we set our variables. For a directed path $p=(e_1,e_2,\ldots ,e_{m})$ of length at most $t+1$ where $e_1\in L_i$, $e_2\in L_{i+1}$, etc, we set 
\begin{equation}
    \label{eqn:path_solution}
    y(p):=x_{e_1}\cdot \prod_{j=i}^{i+m-1}\gamma_j\ ,
\end{equation}
with the convention that the empty product is equal to 1.

\begin{lemma}
    \label{lem:path_hierarchy_solution}
    There exists some absolute constant $\xi>0$, such that for $t=\xi\cdot \ell$, the solution $(y(p))_{|p|\le t}$ defined in Equation \eqref{eqn:path_solution} is feasible for the $(t-1)^{th}$ level of the path hierarchy on instance $G_{n,\ell}^{(\rho)}$.
\end{lemma}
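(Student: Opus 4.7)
The plan is to verify each of the constraints \eqref{eqLP:flowLPdemand2}--\eqref{eqLP:flowLPdemandroot2} at level $t-1$ directly, leveraging the multiplicative structure of $y(p)$. The singleton values $y(\{e\}) = x_e$ reduce constraints \eqref{eqLP:flowLPpacking_unlifted2} and \eqref{eqLP:flowLPcovering_unlifted2} to the assignment LP inequalities, which were established at the start of this section; the root demand \eqref{eqLP:flowLPdemandroot2} follows from the convention $x_{e_0} = 1$. For the path covering constraint \eqref{eqLP:flowLPdemand2} at $p$ ending at $w \in L_k$, each extension $q = p \cup \{e\}$ satisfies $y(q) = \gamma_k \cdot y(p)$, and hence $\sum_{q \in C(p)} y(q) = |\delta^+(w)| \cdot \gamma_k \cdot y(p) = k_w \cdot y(p)$. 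The consistency constraint \eqref{eqLP:consistency2} holds because extending $p$ by one edge multiplies $y$ by some $\gamma_k \leq 1$ (which can be read off from the formulas for $\gamma_1',\gamma_2',\gamma_3'$, since $k_v \leq |\delta^+(v)|$ throughout).

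The substantive work is the path packing constraint \eqref{eqLP:flowLPcapacity2}: for every path $p$ of length at most $t$ ending at some $u \in L_i$ and every vertex $v \in L_j$, we need $\sum_{q \in I(v) \cap D(p)} y(q) \leq y(p)$. By the multiplicative structure, every $q \in I(v) \cap D(p)$ has the same weight $y(q) = y(p) \cdot \prod_{k=i}^{j-1} \gamma_k$, depending only on the layer indices. Writing $N(u,v)$ for the number of directed paths from $u$ to $v$ in the graph, the constraint reduces to
\begin{equation*}
    N(u,v) \cdot \prod_{k=i}^{j-1} \gamma_k \leq 1 \qquad \text{whenever } j - i \leq t - 1.
\end{equation*}
By the symmetric labelling, a directed path from $u$ to $v$ corresponds to a chain of subsets obeying the per-layer size constraints. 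When both $u$ and $v$ lie in the expanding phase (resp.\ both in the collapsing phase), $N(u,v) = \frac{((j-i)\epsilon \rho m)!}{(\epsilon \rho m)!^{j-i}}$ if $S_u \subseteq S_v$ (resp.\ $S_v \subseteq S_u$) and $0$ otherwise; when the extension crosses the phase boundary $L_{2/\epsilon}$, $N(u,v)$ decomposes as a sum over intermediate sets of size $2\rho m$ containing $S_u \cup S_v$.

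Substituting the explicit values of $\gamma_1', \gamma_2', \gamma_3'$, the inequality reduces in every case to an estimate of the form
\begin{equation*}
    \frac{(x \rho m)!}{((\rho m)!)^x} \leq {(1-\rho)m \choose \rho m}^a \cdot {2\rho m \choose \rho m}^b \cdot m^{O(1)},
\end{equation*}
where $x := (j-i)\epsilon$ is the relative depth of the extension and $a,b \geq 0$ are determined by which phase segment(s) the extension traverses (with $a + b \geq x$). Stirling's approximation bounds the left-hand side by $2^{x \rho m \log_2 x + O(\log m)}$, while Lemma \ref{lem:binomial_coeff} lower-bounds the right-hand side by $2^{\Omega(x \rho m \log_2(1/\rho))}$ for $\rho$ below an absolute constant. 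Since the path length constraint $|q| \leq t = \xi \ell = 3\xi/\epsilon$ forces $x \leq 3\xi$, choosing the absolute constant $\xi$ small enough makes the $\log_2(1/\rho)$ factor dominate $\log_2 x$ uniformly, so the inequality holds. The main obstacle is the mixed case in which the extension straddles $L_{2/\epsilon}$: here $N(u,v)$ depends on $|S_u \cap S_v|$ through a sum of products of binomial coefficients, which must be controlled via Lemma \ref{lem:binomial_function} by its dominant term. This term still obeys the above estimate with $a + b$ equal to $x$ up to lower-order corrections, so the analysis goes through uniformly and the lemma follows.
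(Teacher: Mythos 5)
Your overall strategy is the right one and matches the paper's: verify the "easy" constraints directly from the multiplicative form of $y(p)$, and reduce the lifted packing constraint \eqref{eqLP:flowLPcapacity2} to the combinatorial bound $N(u,v)\cdot \prod_{k=i}^{j-1}\gamma_k \le 1$ on the number of $u\to v$ paths (this is exactly Lemma \ref{lem:helper_lemma_path_hierarchy} in the paper). Your Case 1 calculation (same phase) is correct and agrees with the paper's. However, there are two gaps.

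First, the consistency argument is incomplete. You only treat forward extensions ($q = p\cup\{e\}$), arguing $\gamma_k \le 1$. But \eqref{eqLP:consistency2} requires $y(q)\le y(p)$ for arbitrary $p\subseteq q$, in particular when $q$ extends $p$ \emph{backwards} towards the source. In that case the leading factor changes from $x_{e_1}$ to $x_{e'_1}$ (for some earlier edge $e'_1$), and it is not a priori obvious the ratio drops. The paper handles this by the explicit computation
\begin{equation*}
\frac{y(q)}{y(p)}\le \frac{x_{e'_1}\cdot (\prod_{j=2}^{m'}\gamma_{e'_j})\cdot\gamma_{e_1}}{x_{e_1}} = \frac{1}{\prod_{j=i'}^{i-1}\delta^-_j}\le 1\ ,
\end{equation*}
which uses the specific structure of $x_e$ from Equation \eqref{eq:LP_many_layers}, not just $\gamma_k\le 1$.

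Second, and more seriously, the phase-crossing case (your "main obstacle") is mischaracterized and underestimated. The path count across $L_{2/\epsilon}$ is not a sum that must be controlled by its dominant term via Lemma \ref{lem:binomial_function}; because every intermediate vertex $w\in L_{2/\epsilon}$ with $S_u\cup S_v\subseteq S_w$ contributes the \emph{same} number of $v\to w$ and $w\to u$ paths, $N(u,v)$ factorizes cleanly as
\begin{equation*}
N(u,v) = {m-|S_u\cup S_v|\choose 2\rho m-|S_u\cup S_v|}\cdot \frac{((y-2)\rho m)!\,((2-x)\rho m)!}{(\epsilon\rho m)!^{\,j-i}}
\end{equation*}
with $x=i\epsilon$, $y=j\epsilon$. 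Crucially, this does \emph{not} fit your template ``$\frac{((j-i)\epsilon\rho m)!}{((\rho m)!)^{(j-i)\epsilon}}\le {(1-\rho)m\choose\rho m}^a {2\rho m\choose\rho m}^b$ with $a+b=x$''. The split product $((y-2)\rho m)!\,((2-x)\rho m)!$ is exponentially smaller than the single factorial $((y-x)\rho m)!$, and you gain an extra binomial factor depending on $|S_u\cup S_v|$ which is exponentially large. After multiplying by $\prod\gamma_k$, the quantity to bound becomes
\begin{equation*}
\frac{{m-|S_u\cup S_v|\choose 2\rho m-|S_u\cup S_v|}}{{2\rho m\choose\rho m}^{2-x}}\cdot \frac{((y-2)\rho m)!}{((\rho m)!)^{y-2}}\cdot\frac{((2-x)\rho m)!}{((\rho m)!)^{2-x}}\ ,
\end{equation*}
where all three factors tend to $1$ as $(x,y)\to(2,2)$, so the argument cannot be ``uniformly the same estimate'': the rate at which each factor leaves $1$ matters. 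The paper resolves this by first showing the expression is monotone decreasing in $|S_u\cup S_v|$, then plugging in the extremal value $\max(|S_u|,|S_v|)$, splitting into two subcases, and analyzing one-variable functions $f_1(y)$ and $f_2(x)$ via their derivatives near the endpoint $(x,y)=(2,2)$. That derivative analysis is the crux of the hard case, and your proposal does not contain a substitute for it.
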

To prove this lemma, most of the constraints are trivial to check except for the lifted packing constraint (Constraint \eqref{eqLP:flowLPcapacity2}), which requires some involved calculations. For clarity, we defer the proof of the following helper lemma to Appendix \ref{app:number_paths}. Then we prove Lemma \ref{lem:path_hierarchy_solution}.

\begin{lemma}
    \label{lem:helper_lemma_path_hierarchy}
    There exists some absolute constant $\xi>0$, such that for $t=\xi\cdot \ell$, and any vertices $v\in L_i$, and $u\in L_{i\le j\le i+t}$, the number of directed paths from $v$ to $u$ in $G_{n,\ell}^{(\rho)}$ is at most 
    \begin{equation*}
        \frac{1}{\prod_{k=i}^{j-1}\gamma_k}\ .
    \end{equation*}
\end{lemma}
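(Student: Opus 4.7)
The proof is a case analysis on how the layer interval $[i, j-1]$ intersects the three ``regimes'' of the construction: the two expanding subphases $[0, 1/\epsilon)$ and $[1/\epsilon, 2/\epsilon)$, and the collapsing phase $[2/\epsilon, 3/\epsilon)$. In each case the number of paths from $v$ to $u$ admits an explicit combinatorial formula in terms of the labels $S_v, S_u$, so the desired bound becomes a factorial inequality verifiable via Stirling's approximation.

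The clean building block is the single-regime case. When every edge of every $v \to u$ path lies in a single phase, a path is determined by an ordering of the $(j-i)\epsilon\rho m$ elements of the set difference of $S_v$ and $S_u$ into $j-i$ batches of size $\epsilon\rho m$, giving the multinomial count $\frac{((j-i)\epsilon\rho m)!}{((\epsilon\rho m)!)^{j-i}}$. A key observation is that $\gamma'_3 \ge \gamma'_1, \gamma'_2$, because the denominators of $\gamma'_1, \gamma'_2$ carry extra factors $\binom{(1-\rho)m}{\rho m}^\epsilon, \binom{2\rho m}{\rho m}^\epsilon \ge 1$; hence the most restrictive single-regime case is the collapsing one, where the desired inequality reduces to $(x\rho m)! \le ((\rho m)!)^x$ for $x := (j-i)\epsilon \in [0, 1]$. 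I would prove this by setting $f(x) := x\log((\rho m)!) - \log((x\rho m)!)$, noting $f(0) = f(1) = 0$, and computing with Stirling that $f'(x) \approx -\rho m(1 + \log x)$; this is positive on $(0, 1/e)$ and negative on $(1/e, 1)$, so $f \ge 0$ throughout $[0, 1]$. This forces $(j-i)\epsilon \le 1$, giving the constraint $\xi \le 1/3$.

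The genuinely delicate case is when the path crosses the pivot layer $L_{2/\epsilon}$. Any $v \to u$ path must then factor through some $w \in L_{2/\epsilon}$ with $S_w \supseteq S_v \cup S_u$, and summing over valid $w$ yields the count
\begin{equation*}
\binom{m - |S_v \cup S_u|}{2\rho m - |S_v \cup S_u|}\cdot \frac{(a\rho m)!\,(b\rho m)!}{((\epsilon\rho m)!)^{j-i}},
\end{equation*}
where $a = 2 - i\epsilon$ is the ``expanding amount'' and $b = j\epsilon - 2$ the ``collapsing amount'' (the analogous formula applies when $i < 1/\epsilon$, adding a stretch of $\gamma'_1$-factors to the bookkeeping). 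The count is maximised when $|S_v \cup S_u| = \max(|S_v|, |S_u|)$. On the other side, $1/\prod_{k=i}^{j-1}\gamma_k$ picks up an extra factor $\binom{2\rho m}{\rho m}^a$ from the $\gamma'_2$ contributions, which is exactly the counter-weight that must absorb the pivot binomial. Normalising the log of the ratio by $m$, taking the worst-case union size, and applying Stirling together with the first-order expansion $h(x) = x\log_2(1/x) + O(x)$ reduces the needed inequality (in the case $a \le b$) to something of the form $a\log(e/\rho) \le b\log(1/b) + O(\rho)$, and a symmetric variant when $a > b$. Since $a, b \le (j-i)\epsilon \le 3\xi$, choosing $\xi$ to be a small enough constant multiple of $\rho$ (e.g.\ $\xi \le \rho/(3e)$) makes all such inequalities valid.

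The main obstacle is exactly this bookkeeping in the mixed case: the loose bound $\prod \gamma_k \le (\gamma'_3)^{j-i}$ is too weak, and one must retain the $\binom{2\rho m}{\rho m}^a$ factor contributed by $\gamma'_2$ on the right-hand side in order to dominate the combinatorial overhead coming from the binomial over the pivot vertex $w$. Once that matching is in place, the rest is standard Stirling bookkeeping, and the lemma follows with the absolute constant $\xi$ chosen just slightly smaller than $\rho$.
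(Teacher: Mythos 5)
Your proposal is correct and follows essentially the same route as the paper's Appendix C proof: a case split on whether the $v \to u$ path stays in one phase or crosses the pivot layer $L_{2/\epsilon}$, an explicit multinomial count in each case (with the pivot vertex $w$ summed over the $\binom{m-|S_u\cup S_v|}{2\rho m-|S_u\cup S_v|}$ eligible sets), and a Stirling/entropy comparison against $1/\prod\gamma_k$ in which the $\binom{2\rho m}{\rho m}^{2-i\epsilon}$ factor from $\gamma'_2$ absorbs the pivot binomial. The only cosmetic differences are that you supply an explicit $\xi$ (the paper just argues continuity of $f_1, f_2$ near the boundary), that you dispatch the single-phase inequality $(x\rho m)! \le ((\rho m)!)^x$ via a derivative argument where the paper reads it off Stirling directly, and that the paper reduces WLOG to $v \in L_{\ge 1/\epsilon}$ rather than carrying along a possible stretch of $\gamma'_1$-factors.
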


\begin{proof}[Proof of Lemma \ref{lem:path_hierarchy_solution}]
    Most of the constraints are trivial to check. First, note that for a single edge (i.e. a path of length 1), we have $y(\{e\})=x_e$ so Constraints \eqref{eqLP:flowLPcovering_unlifted2} and \eqref{eqLP:flowLPpacking_unlifted2} are clearly satisfied. For the consistency constraint (Constraint \eqref{eqLP:consistency2}) for some $p=(e_1,e_2,\ldots ,e_m)\subseteq q=(e'_1,\ldots , e'_{m'},p,e''_1,\ldots , e''_{m''})$, where we denote by $i'$ the layer to which $e'_1$ belongs, and $i$ the layer to which $e_1$ belongs, we have 
\begin{align*}
    \frac{y(q)}{y(p)} &\le \frac{x_{e'_1}\cdot (\prod_{j=2}^{m'}\gamma_{e'_j}) \cdot \gamma_{e_1}}{x_{e_1}} = \frac{(\prod_{j=1}^{i'-1} \gamma_{j-1} \delta_j^-) (\prod_{j=i'}^{i}\gamma_{j-1})}{\gamma_{i-1}\prod_{j=1}^{i-1} (\gamma_{j-1} \delta_j^-)}\\
    &=\frac{1}{\prod_{j=i'}^{i-1}\delta_j^-}\le 1\ .
\end{align*}
The lifted covering constraints (Constraint \eqref{eqLP:flowLPdemand2}) are also easy to check. Denote by $v\in L_i$ the endpoint of $p$. Then, we have

\begin{equation*}
     \sum_{q\in C(p)} y(q) = y(p)\cdot \sum_{e\in \delta^+(v)}\gamma_i = (\gamma_i\delta^+_i)\cdot y(p)=k_p\cdot y(p)\ .
\end{equation*}
Finally, Constraint \eqref{eqLP:flowLPcapacity2} is implied by Lemma \ref{lem:helper_lemma_path_hierarchy} in a straightforward way. Indeed, note that for any path $p=(s,\ldots ,u)$ and any vertex $v$, the variable corresponding to a path $q\in D(p)\cap I(v)$ is equal to the variable $y(p)$, multiplied by one $\gamma_k$ for each additional laye $k$ from $u\in L_i$ to $v\in L_j$ (hence an additional multiplicative factor of $\prod_{k=i}^{j-1}\gamma_k$). But by Lemma \ref{lem:helper_lemma_path_hierarchy}, the number of such paths is less than 
\begin{equation*}
    \frac{1}{\prod_{k=i}^{j-1}\gamma_k}\ ,
\end{equation*}
which cancels out exactly with the multiplicative factor. Note that we used the assumption that our paths have length at most $t$ to be able to apply Lemma \ref{lem:helper_lemma_path_hierarchy}.
\end{proof}

\subsection{Locally good solutions}

Here, we describe the algorithm that we use to obtain our locally good solutions on our instance. The arguments here are rather standard, see e.g. \cite{bamas2023better,chakrabarty2009allocating, bateni2009maxmin}. In the following, it is useful to think of the output as a set $P'$ of directed paths, where each path $P'$ starts at the source.

The algorithm proceeds layer by layer.
\begin{enumerate}
    \item Sample each edge in $e\in \delta^+(s)$ independently with probability $\gamma_0$. For each edge $e$ sampled, add the path $\{e\}$ to $P'$.
    \item For $j$ from $1$ to $\ell$, for each path $p=(s,\ldots, v)$ from the source to $v\in L_j$ which was selected in $P'$, select each edge $e\in \delta^+(v)$ independently with probability equal to $\gamma_i$ and add the path $p\circ \{e\}$ to $P'$ (where $\circ$ is the concatenation operator).
\end{enumerate}

We claim the following lemma regarding this algorithm. Let $P'$ the set of paths sampled by the algorithm. Then we have the following. 
\begin{lemma}\label{locally_good_lemma}
With high probability, the set $P'$ is a $(\xi \ell)$-locally good solutions, i.e. the following constraints hold
\begin{align}
    \sum_{q\in C(p)} y(q) &\ge  \Omega(k_p) \cdot y(p) &\forall p\in P' \label{eqLP:flowLPdemand3}\\
     \sum_{q\in I(v)\cap D(p)} y(q) &\leq (\log n)^{O(1)} \cdot y(p) &\forall p\in P,v\in D(p,\xi\ell-|p|) \label{eqLP:flowLPcapacity3}
\end{align}
where $D(p,t)$ is the set of vertices for which there exists a directed path of length at most $t$ from the endpoint of $p$ to $v$, and $k_p=k_v$ where $v$ is the last vertex of $p$ (with the convention that $k_v=0$ if $v$ is a sink).
\end{lemma}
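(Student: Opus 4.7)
The plan is to establish both inequalities \eqref{eqLP:flowLPdemand3} and \eqref{eqLP:flowLPcapacity3} with high probability over the algorithm's randomness, and then union-bound over the at most $n^{O(\ell)}$ relevant tuples $(p, v)$.

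For the covering inequality \eqref{eqLP:flowLPdemand3}, we fix any directed path $p$ in $G_{n,\ell}^{(\rho)}$ of length at most $\ell$ ending at a non-sink vertex $v \in L_i$. Conditional on $p \in P'$, the sampling step that extends $p$ independently samples each edge of $\delta^+(v)$ with probability $\gamma_i$, so the number of children of $p$ in $P'$ is distributed as $\mathrm{Binomial}(\delta_i^+, \gamma_i)$ with mean $\gamma_i \delta_i^+ = k_p$. By Lemma \ref{lem:large_required_degrees} we have $k_p \ge n^{\Omega(1/\ell)}$, so a multiplicative Chernoff bound gives that the count is at least $k_p/2$ except with probability $\exp(-\Omega(k_p)) = \exp(-n^{\Omega(1/\ell)})$. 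Since $\ell = o(\log n / \log \log n)$ we have $n^{\Omega(1/\ell)} \gg \ell \log n$, so a union bound over the $n^{O(\ell)}$ candidate paths leaves overall failure probability $n^{-\omega(1)}$.

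For the packing inequality \eqref{eqLP:flowLPcapacity3}, fix $p$ of length at most $\xi \ell$ ending at $u \in L_i$ and $v \in D(p, \xi\ell - |p|)$ at distance $d$ from $u$. Note that if $p \notin P'$ then $y(p) = 0$ and by construction of the algorithm no $q \in P'$ extends $p$, so the inequality holds trivially. Assuming $p \in P'$, set $Z := |\{q \in P' : p \subseteq q,\ q \in I(v)\}|$ and let $N$ be the number of directed paths from $u$ to $v$. By linearity of expectation and Lemma \ref{lem:helper_lemma_path_hierarchy},
\begin{align*}
    \mathbb{E}[Z \mid p \in P'] = N \cdot \prod_{k=i}^{i+d-1} \gamma_k \le 1.
\end{align*}
To upgrade this mean bound to a $(\log n)^{O(1)}$ bound with high probability, we plan to estimate the moments of $Z$. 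The key structural observation is that because the algorithm independently samples each outgoing edge of each active path, conditioning on $p \in P'$ yields
\begin{align*}
    \mathbb{P}[q_1, \ldots, q_c \in P' \mid p \in P'] = \prod_{e \in E(T(q_1, \ldots, q_c))} \gamma_e
\end{align*}
for any $c$-tuple of paths $q_1,\ldots,q_c$ from $u$ to $v$, where $T(q_1,\ldots,q_c)$ denotes the union subgraph. Grouping $c$-tuples according to their shared-prefix pattern and applying Lemma \ref{lem:helper_lemma_path_hierarchy} layer by layer to count tuples with a given pattern, one obtains a moment bound of the form $\mathbb{E}[Z^c \mid p \in P'] \le (Cc)^c$ for some absolute constant $C$. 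Taking $c = \Theta(\log^2 n)$ and applying Markov's inequality then gives $Z \le (\log n)^2$ except with probability $n^{-\omega(1)}$, which comfortably survives the union bound over all $n^{O(\ell)}$ pairs $(p, v)$.

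The hard part will be the concentration step for \eqref{eqLP:flowLPcapacity3}: the indicators $\mathbb{1}[q \in P']$ are positively correlated whenever two paths share a common prefix, which rules out a direct Chernoff bound. The moment method sketched above circumvents this by encoding the shared-prefix correlations as explicit $\gamma$-factors in the joint probability, and relying on the path-counting bound of Lemma \ref{lem:helper_lemma_path_hierarchy} to show that the extra $\gamma^{-s}$ correction from paths sharing a prefix of length $s$ is precisely offset by an equally strong reduction in the number of such pairs, keeping the moments subexponential in $c$. The proof thereby reduces to enumerating trees of controlled size, a combinatorial task for which a clean bound can be obtained.
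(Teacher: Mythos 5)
Your argument for the covering inequality~\eqref{eqLP:flowLPdemand3} is essentially identical to the paper's: conditional independence of the children edges gives a $\mathrm{Binomial}(\delta_i^+,\gamma_i)$ with mean $k_p$, Chernoff gives concentration since $k_p\ge n^{\Omega(1/\ell)}$, and a union bound over $n^{O(\ell)}$ paths finishes.

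For the packing inequality~\eqref{eqLP:flowLPcapacity3} your route is genuinely different from the paper's. The paper does not compute moments at all; it runs a layer-by-layer martingale-type induction on the filtration $\mathcal P_t$ (the set of paths selected up to layer $i+t$), showing that the conditional expectation $\mathbb E_{[t+1,\ldots]}[X\mid\mathcal P_t]$ grows by at most a factor $(1+1/\log n)$ per layer. The two ingredients are (a) that each individual path $q''$ alive at layer $i+t+1$ contributes at most $1$ in conditional expectation, which is exactly Constraint~\eqref{eqLP:flowLPcapacity2} applied through Lemma~\ref{lem:path_hierarchy_solution}, and (b) a Chernoff bound on the resulting sum of independent $[0,1]$-bounded terms given $\mathcal P_t$. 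This decouples the concentration step into $\ell$ elementary applications of Chernoff with no need to enumerate tree shapes.

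The gap in your proposal is the moment bound itself: you assert $\mathbb E[Z^c\mid p\in P']\le(Cc)^c$, but this is precisely the non-trivial combinatorial content, and as stated it is likely not quite right. When you group $c$-tuples by the union subtree $T(q_1,\ldots,q_c)$, each of the up to $c-1$ internal branch points can sit at any of the $\le\ell$ layers between $u$ and $v$; Lemma~\ref{lem:helper_lemma_path_hierarchy} lets you absorb the vertex choices at each branch point, but the layer choices still produce a factor on the order of $\ell^{c-1}$, on top of the $c^c$ from mapping tuple coordinates to leaves and a Catalan-type factor for the tree shape. The moment is therefore more plausibly bounded by $(Cc\ell)^c$ than $(Cc)^c$. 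This would still yield $Z\le(\log n)^{O(1)}$ after taking $c=\Theta(\log^2 n)$ and applying Markov, so the strategy survives, but you would need to actually carry out the grouping-by-tree-shape count and track the $\ell$ dependence rather than assert it; as written the claim is unsupported and the constant-free form is suspect. The paper's induction sidesteps all of this counting, which is why it is the cleaner route.
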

\begin{proof}
    First for Constraint \eqref{eqLP:flowLPdemand3}, we note that clearly whenever a path $p=(s,\ldots, v)$ with $v\in L_i$ is selected in our solution, the algorithm selects in expectation $(\gamma_i\delta^+_i)$ paths $q\in C(p)$ in expectation, and in an independent manner. Hence in expectation, the path $p$ receives at least $k_p$ children. Now recall that by Lemma \ref{lem:large_required_degrees}, $k_p=\Omega(\polylog)$ for all $p$, hence we can apply Chernoff bound, and with high probability, each path $p$ receives say $k_p/2$ children.

    For Constraint \eqref{eqLP:flowLPcapacity3}, this is more tricky. We proceed by induction layer by layer. We fix a path $p=(s,\ldots, u)$ with $u\in L_i$ and some $v\in L_{j\ge i}$. Clearly, conditioned on selecting $p$, each path $q\in D(p)\cap I(v)$ is selected with probability 
    \begin{equation*}
        \frac{1}{\prod_{k=i}^{j-1}\gamma_k}\ .
    \end{equation*}
    By Lemma \ref{lem:helper_lemma_path_hierarchy}, this implies that in expectation the constraint is satisfied. However, the paths do not appear independently of each other so we need to be more careful. By induction we prove the following. 
    
    Let $X$ be the number of paths $q\in D(p)\cap I(v)$ selected in the solution. Then for any $t\le j-i$, we have with high probability that 
    \begin{equation*}
        \mathbb E_{[t+1,\ldots ,j-i]}[X\mid \mathcal P_t]\le \log^{10} (n)\cdot \left(1+\frac{1}{\log n}\right)^t\ ,
    \end{equation*}
    where $\mathcal P_t$ is the outcome of the sampling algorithm at layer $i+t$ and the randomness in the expectation is the outcome of the rouding in layers $i+1,\ldots t$. We argued that this was true for $t=0$. Assume this is true for some $t$, let us now prove it for $t+1$. For each path $q'$ selected at layer $i+t$, we will sample children paths in layer $i+t+1$ with the adequate probability so clearly we have that 
    \begin{equation*}
        \mathbb E_{t+1}[\mathbb E_{[t+2,\ldots, j-i]}[X\mid \mathcal P_{t+1}]] = \mathbb E_{[t+1,\ldots ,j-i]}[X\mid \mathcal P_t] = \log^{10} (n)\cdot \left(1+\frac{1}{\log n}\right)^t\ ,
    \end{equation*}
    where $\mathbb E_{t+1}$ means that we take the expectation on the random choices made in layer $i+t+1$. Second, by Constraint \eqref{eqLP:flowLPcapacity2} and Lemma \ref{lem:path_hierarchy_solution}, we have that, for each path $q'$ selected in layer $i+t+1$, the expected number of paths $q''\in D(q')\cap I(v)$ that will be selected in subsequent layers is at most $1$. Hence the random variable 
    $\mathbb E_{[t+2,\ldots, j-i]}[X\mid \mathcal P_{t+1}]$ can be written as a sum of independent random variables of absolute value at most $1$, and of expectation at most $\log^{10} (n)\cdot \left(1+\frac{1}{\log n}\right)^t$. By Chernoff bounds, with high probability this random variable is no more than $\log^{10} (n)\cdot \left(1+\frac{1}{\log n}\right)^{t+1}$, which proves the inductive step.

    We can conclude that, with high probability, the number of paths $q\in D(p)\cap I(v)$ selected is at most $\log^{10}(n)\cdot (1+1/\log n)^{\ell}\le \log^{11}(n)$.
\end{proof}

\section{Future directions}
We conclude by listing a few interesting directions which, in our opinion, are of interest. We start by our main conjecture on the Sherali-Adams hierarchy, but some questions have a more distant relation to our work.
\begin{enumerate}
    \item Does our instance of depth $\ell$ survive $\Omega(\ell)$ rounds of the Sherali-Adams hierarchy? Another interesting direction would be to simplify our construction.
    \item Does our instance of depth $\ell$ survive $\Omega(\ell)$ rounds of stronger hierarchies such as the Lasserre hierarchy? A good starting point could be to understand if our instance of depth $3$ survives $1$ round of the Lasserre hierarchy.
    \item Is there a formal relationship between the Directed Steiner Tree problem and the MMDA problem? It has been known for a long time that the DST problem contains the set cover problem as a special case, and recently noticed by the authors of \cite{bamas2023better} that MMDA contains the max-$k$-cover problem as a special case. It is also well-known that there is a reduction from set cover to max-$k$-cover, which shows that a constant-factor approximation for max-$k$-cover implies an $O(\log n)$-approximation for set cover. It is tempting to believe that there is also such a reduction between the MMDA problem and the DST problem. One can also ask a similar question regarding other related problems such as the densest-$k$-subgraph problem \cite{bhaskara2010detecting}, or orienteering problems \cite{chekuri2005recursive}.
    \item Is there an Exponential-Time-Hypothesis-based hardness for the MMDA problem? The reason to ask this question is that this kind of labeling scheme based on a set system has been used in the past for ETH-based reductions for related problems such as Densest-$k$-subgraph (see e.g. \cite{manurangsi2017birthday,manurangsi2017almost}). This is a vague connection, but the intuitive way of transforming our instance into an ETH-based hardness would create an exponential size reduction, which would explain the separation that seems to exist between polynomial-time and quasi-polynomial-time algorithms for the Santa Claus problem.
    \item Is there a constant-factor approximation in quasi-polynomial time for the MMDA problem? The closest that we have until now is a polyloglog$(n)$-approximation in time $n^{O(\log n)}$ for the MMDA problem, and only in the case where $k_u=k$ for all $u$ \cite{bamas2023better}. Even in layered graphs we do not know any better than this.
\end{enumerate}



\ifanonymous{\section*{Acknowledgments}
I am very thankful to Lars Rohwedder for many discussions on this problem, as well as for allowing me to include Theorem \ref{thm:sherali_restricted_lowerbound} and Theorem \ref{thm:sherali_restricted_upperbound}.}{}
\appendix
\section{Restricted Assignment}
\label{sec:restricted}

Restricted assignment is a special case of the
Santa Claus problem where each resource $j$ has a fixed
value $v_j$ independent of the players. On the other hand,
each player $i$ has a set
of eligible resources $R(i)\subseteq R$ and only these
can be assigned to $i$.
This is equivalent to requiring that $v_{ij}\in\{0, v_j\}$
for all $i,j$.
The assignment LP for restricted assignment is defined
as
\begin{align*}
    \sum_{j\in R(i)} v_j x_{ij} &\ge T &\forall i\in P \\
    \sum_{i\in P : j\in R(i)} x_{ij} &\le 1 &\forall j\in R \\
    x_{ij} &\in [0, 1] &\forall i\in P, j\in R(i)
\end{align*}
The restricted assignment case has been studied extensively
and is fairly well understood. Regarding Sherali-Adams,
it is interesting in two ways:
we show that even after a linear number of rounds of Sherali-Adams the integrality gap remains unbounded.
Fortunately, there is a simple way to resolve this issue.
We consider canonical instances that
we will introduce later, which arise from an approximation 
preserving simplification to the structure of the problem,
and have been used in previous works as well.
We show that on canonical instances already one round
of Sherali-Adams yields a constant integrality gap.
This is by proving that it is at least as strong as another
known linear programming relaxation.

\subsection{Lower bound for linearly many rounds}
Let $0 < \epsilon \le 1/3$ and $k\in\mathbb N$ with $\epsilon k \in \mathbb N$.
Consider the following instance of restricted assignment.
There are $k+1$ players indexed by $1,2,\dotsc,k+1$
and $k+1$ small resources $s_1,s_2,\dotsc,s_{k+1}$ with
$v_{s_j} = 3\epsilon$ for all $j$.
Further, there are $k$ big resources $b_1,b_2,\dotsc,b_k$
with $v_{b_j} = 1$ for each $j$.
Each player has access to one small resource and all
big resources, that is, $R(i) = \{s_i, b_1,b_2,\dotsc,b_k\}$
for each $i$.

The integral optimum of this instance is clearly $3\epsilon$,
since not every player can receive a big resource and
the one that does not can only get a value of $3\epsilon$.
We will now show that there is a solution for SA($r$) with value $1$ even for $r = \epsilon k$.

We will use Theorem \ref{thm:sherali_distributions} to argue about the feasibility of the Sherali-Adams hierarchy. To this end, consider the following probability distribution.
We select a matching of the $k$ big resources to $k$ of the players uniformly at random.
The small resources are deterministically assigned to their corresponding players.

It is clear that this is a distribution over valid assignments of resources.
Hence, we only need to verify that for each player the expected value
of resources assigned to it is at least $1$, 
even after conditioning on $\epsilon k$
variables being either $0$ or $1$.
We can assume without loss of generality that all these variables
are conditioned to be $1$:
in every realization every resource is assigned
to exactly one player.
Thus, if we condition on a resource not to be assigned to a
particular player, then we get a probability distribution
over solutions where this resource is assigned to another
player. Conditioning on this will not increase the number
of variables we condition on, but refine the distribution further.

This means that there are $\epsilon k$ players that receive a specific big resource. The remaining $(1 - \epsilon)k$ big resources are still uniformly
assigned to the remaining $1 + (1 - \epsilon)k$ players.
Let $i$ be a player.
If $i$ is one of the players that are guaranteed to receive a big
resource, then clearly the expected value for $i$ is at least $1$.
Otherwise, there is a probability of $(1 - \epsilon)k / (1 + (1 - \epsilon) k)$
that $i$ receives a big resource. Additionally, the player is
guaranteed to receive a small job of value $3\epsilon$.
Thus, in expectation the value received is at least
\begin{equation*}
    \frac{(1 - \epsilon)k}{1 + (1 - \epsilon)k} + 3 \epsilon 
    \ge \frac{(1 - \epsilon)k + 3\epsilon (1 - \epsilon)k}{1 + (1 - \epsilon)k}
    \ge 1 \ .
\end{equation*}
\subsection{Canonical restricted assignment}
Suppose we want to solve the following variant, which is 
equivalent to an $\alpha$-approximation algorithm:
given some $T \ge 0$ either determine that $\OPT < T$ or find a
solution of value $\alpha T$.
Let $B\subseteq R$ be the resources $j$ with $v_j \ge \alpha T$
and let $S = R\setminus B$.
Via a simple transformation we can assume that each player either
only has access to big resources, or to at most one big resource and otherwise
to small resources. Further, $v_j = T$ for all $j\in B$.
We call such an instance an $\alpha$-canonical instance.
Similar simplifications are standard for the problem,
see e.g.~\cite{chakrabarty2009allocating,bansal2006santa}.

For the transformation we introduce two players and a coupling resource of value $T$ for each original player. We give both players access to the coupling resource and we connect one
player to the small resources the original player had access to and one to the
big resources. Further, we increase the value of each big resource to $T$.
The transformation satisfies that 
if there is a solution of value $T$ for the original instance,
there is also one solution for the canonical instance.
Further, if there is a solution of value $\alpha T$ for the
canonical instance, there is one for the original instance.
In that sense, the transformation is approximation preserving.

We will now show that for $\alpha = O(1)$, an $\alpha$-normalized
instance has an integrality gap of $\alpha$ already after a single round of Sherali-Adams.
Our proof is by showing that SA(1) is at least as strong
as an LP described by Davies et al. \cite{davies2020tale}, who prove that the following linear program has an
integrality gap of at most $4$:
\begin{align*}
    \sum_{j\in R(i)} v_j x_{ij} &\ge T &\forall i\in P \\
    \sum_{i\in P : j\in R(i)} x_{ij} &\le 1 &\forall j\in R \\
    x_{is} &\le 1 - \sum_{j\in B \cap R(i)} x_{ij} &\forall i\in P\ \forall s\in S \cap R(i) \\
    x_{ij} &\ge 0 &\forall i\in P, j\in R(i)
\end{align*}
Consider now a canonical instance with a solution $y\in \mathrm{SA}(1)$.
We will define a solution $x$ for the above linear program from this.
Let $i$ be a player. We set 
$x_{ij} = y_{ij}$ for all $i\in P$ and big resources $j\in R(i) \cap B$.
Let $i$ be a player that has access to only a single big resource $b$, then we set
$x_{ij} = (1 - y_{i b}) * y_{ij} = y_{ij} - y_{\{ij, ib\}}$
for all $j\in R(i) \cap S$.
To verify that this is a feasible solution to the linear
program above, we start with the first constraints.
Let $i\in P$. If $i$ has only access to big resources,
then $\sum_{j\in R(i)} v_j x_{ij} = \sum_{j\in R(i)} v_j y_{ij} \ge T$.
If $i$ has access to small resources and a single big resource $b$, then use that
\begin{align*}
\sum_{j\in S \cap R(i)} v_j (y_{ij} - y_{\{ij, i b\}})
&= (1 - y_{i b}) * \sum_{j\in R(i)} v_j y_{ij} \\
&\ge (1 - y_{i b}) * T = T - T \cdot y_{i b} \ .
\end{align*}
Thus,
\begin{equation*}
    \sum_{j\in R(i)} v_j x_{ij} = T \cdot y_{ib} + \sum_{j\in S \cap R(i)} v_j (y_{ij} - y_{\{ij, ib\}}) \ge T \ .
\end{equation*}
If $i$ does not have access to small resources, then it follows immediately that
\begin{equation*}
    \sum_{j\in R(i)} v_j x_{ij} = \sum_{j\in R(i)} v_j y_{ij} \ge T \ .
\end{equation*}
Since $x_{ij} \le y_{ij}$ for all $i,j$ it also holds that $\sum_{i\in P : j\in R(i)} x_{ij} \le \sum_{i\in P : j\in R(i)} y_{ij} \le 1$ for all $j\in R$.
Finally, for some player $i$ that has only access to single big resource $b$ we have $(1 - x_{i b}) * x_{is} \le (1 - x_{i b}) * 1$ for all $s\in S\cap R(i)$. Hence,
\begin{equation*}
    x_{is} = y_{is} - y_{\{is, ib\}} \le 1 - y_{i b} = 1 - x_{i b} = 1 - \sum_{j\in B\cap R(i)} x_{ij} \ .
\end{equation*}
\section{Previous constructions}
\label{app:previous_construction}

A long line of works on the restricted assignment case used a popular relaxation called the \textit{configuration LP}, which has exponential size but is still solvable in polynomial time (by loosing some small $1+\epsilon$ factor). Since the introduction of this relaxation in \cite{bansal2006santa}, it was already known that it has a polynomial integrality gap on general instances, while it is known to have a constant integrality gap for the restricted assignment \cite{asadpour2012santa}. We can construct an MMDA instance of depth $3$ which has a very similar structure to the integrality gap instance of Bansal and Srividenko (see Figure \ref{fig:bansal_instance}). 

To see the correspondence with Santa Claus, consider the following simple transformation: replace the sink by a player, each sink by a resource, and all other vertices by a pair player-resource, where the resource is the \textit{private} resource of the player. Now the valuation functions are given as follows: the source player values each private resource of a player in $L_1$ by a value $1/k$. Each other player values its own private resource by a value of $1$, and the private resources of players in the next layer he is connected to by a value $1/k$. Finally, every player in $L_2$ values the resources in $L_3$ he is connected to by a value $1/k$. All other pairs resource-player have value $0$. See Figure \ref{fig:bansal_instance_Santa} for the description of this instance.

After this transformation, our instance in Figure \ref{fig:bansal_instance} becomes very similar to the gap instance of \cite{bansal2006santa}, and it is easy to see that configuration LP is feasible for the objective value of $1$: the source player will simply take $k$ disjoint configurations of size $k$ with value $1/k$ each, and below this, all the players take their private resource with value $1-1/k$, and the other possible configuration with value $1/k$.

\begin{figure}[h]
    \centering
    \includegraphics{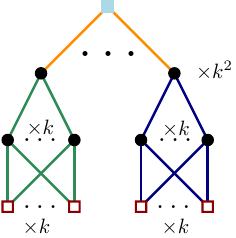}
    \caption{An instance similar to \cite{bansal2006santa} translated as an MMDA instance. The source is connected to $k^2$ vertices in $L_1$, each vertex in $L_1$ is connected to a ``private'' subgraph containing $k$ vertices in $L_2$ and $k$ sinks in $L_3$. In every private subgraph, each vertex in $L_2$ is connected to all sinks in $L_3$. Finally, we set $k_u=k$ for all non-sink vertices.}
    \label{fig:bansal_instance}
\end{figure}

\begin{figure}[h]
    \centering
    \includegraphics{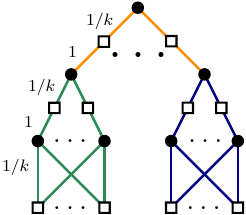}
    \caption{The Santa Claus instance corresponding to the MMDA instance in Figure \ref{fig:bansal_instance}. Circles represent players and squares represent resources. The valuation for edges is given on the left. If there is no edge between a player and a resource, that player values this resource to $0$.}
    \label{fig:bansal_instance_Santa}
\end{figure}

Clearly, this instance has a size polynomial in $k$, and does not contain a feasible solution with out-degree more than $\sqrt{k}$, because every vertex in $L_1$ can reach at most $k$ sinks. In the Santa Claus instance, this means that the optimum integral solution has value at most $1/\sqrt{k}$. An implication of this is that there is no subtree solution rooted at a vertex $v\in L_1$. With this condition in mind, it is clear that $1$ round of Sherali-Adams does not get fooled, since this hierarchy will be able to condition by the event that we take an edge in the first layer with value $1$.

As such, this construction fools the configuration LP by a polynomial factor, but not one round of Sherali-Adams.

\section{The existence of subtree solutions is not enough to fool Sherali-Adams}
\label{app:subtree}
Here, we show that there exists an instance of the MMDA problem which has a polynomial gap with the naive LP, and contains a subtree solution for every edge, yet one round of Sherali-Adams has a gap of at most $O(\polylog)$. For this, consider the simple example given in Figure \ref{fig:gap_subtree}. This instance has $O(k^3)$ edges and $O(k^2)$ vertices. One can check that there exists a feasible assignment LP solution: simply set $x_e=1/k$ on orange edges leaving the source, and $x_e=1$ on all green edges between a vertex in $L_1$ and its private sink. Second, there exists a subtree solution for any edge $e'$: for any edge $e'\in L_1$ ending at vertex $v$, we set $x_e^{(e')}=1$ on all blue edges connecting $v$ to the $k$ public sinks.

Clearly, no integral solution can give more than $\sqrt{k}+1$ out-degree to all its vertices, since each vertex in $L_1$ has only one private sink, and has to share only $k$ public sinks with all other vertices. Hence the gap with the naive LP is $k^{\Omega(1)}=n^{\Omega(1)}$. However, one round of Sherali-Adams is not fooled by this since this instance has only depth $2$.

Also note that if we were to consider the shadow distribution (one round) induced by these subtree solutions, then for any blue edge $e=(u,v)$ between a vertex in $L_1$ and a public sink, we would have that
\begin{equation*}
    s_e = x_{(s,u)}\cdot x_{e}^{(u,v)} = 1/k\ ,
\end{equation*}
but $x_e=0$. Hence the key property that $x_e\le s_e\le O(x_e)$ is lost.

\begin{figure}[h]
    \centering
    \includegraphics[scale=1.2]{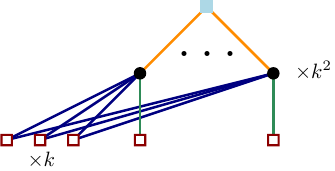}
    \caption{An example of depth $2$ which contains a subtree solution $x^{(e')}$ for any $e'$ and fools the naive LP, but does not fool the Sherali-Adams hierarchy. In this example, all non-sink vertices require out-degree $k$. There are $k^2$ vertices connected to the source, each of them connected to a single private sink, and an additional set of $k$ public sinks which are shared with all other vertices.}
    \label{fig:gap_subtree}
\end{figure}

\section{Proof of Lemma \ref{lem:helper_lemma_path_hierarchy}}
\label{app:number_paths}
Recall that we want to prove the following statement. There exists some absolute constant $\xi>0$, such that for $t=\xi\cdot \ell$, and any vertices $v\in L_i$, and $u\in L_{i\le j\le i+t}$, the number of directed paths from $v$ to $u$ in $G_{n,\ell}^{(\rho)}$ is at most 
    \begin{equation*}
        \frac{1}{\prod_{k=i}^{j-1}\gamma_k}\ .
    \end{equation*}
We proceed by a case analysis. Let $v\in L_i,u\in L_j$, with the property that there exists at least one directed path from $v$ to $u$. 
\paragraph{Case 1: If $u,v\in L_{\le 2/\epsilon}$.} This case is in fact quite easy. It must be that $S_u\subseteq S_v$, and the number of paths is exactly equal to the number of ordering $|S_v|-|S_u|$ elements into $(|S_v|-|S_u|)/(\epsilon \rho m)$ buckets of size $\epsilon \rho m$ (the order inside each bucket is not counted). Hence the number of paths is equal to
\begin{equation*}
    \frac{(|S_v|-|S_u|)!}{((\epsilon \rho m)!)^{(|S_v|-|S_u|)/(\epsilon \rho m)}}=\frac{((j-i)\epsilon \rho m)!}{(\epsilon \rho m)!^{j-i}}\ .
\end{equation*}
Also note that for all $0\le k< 3/\epsilon$
\begin{equation*}
    \frac{1}{\gamma_k}\ge \frac{(\rho m)!^{\epsilon}}{(\epsilon \rho m)!}\ ,
\end{equation*}
hence
\begin{equation*}
    \frac{1}{\prod_{k=i}^{j-1}\gamma_k}\ge \left(\frac{((\rho m)!)^{\epsilon}}{(\epsilon \rho m)!}\right)^{j-i}\ .
\end{equation*}
Hence the number of paths, divided by $\frac{1}{\prod_{k=i}^{j-1}\gamma_k}$ is at most 
\begin{equation*}
    \frac{((j-i)\epsilon \rho m)!}{((\rho m)!)^{(j-i)\epsilon}}\ ,
\end{equation*}
which is less than $1$ if $j-i\le 1/\epsilon=\Omega(\ell)$.

\paragraph{Case 2: If $u,v\in L_{\ge 2/\epsilon}$.} The calculations from the previous case also apply here. The roles of $u$ and $v$ are simply reversed.

\paragraph{Case 3: If $v\in L_{\le 2/\epsilon},u\in L_{\ge 2/\epsilon}$.} This is the tricky case. We assume that $v\in L_{\ge 1/\epsilon}$ for simplicity (note that we only need the lemma to hold for any $|j-i|\ge \xi\ell$ for $\xi>0$ some small constant, so this is wlog). To compute $p_{v,u}$ (the number of paths from $v$ to $u$), we first see that there are exactly 
\begin{equation*}
    {m-|S_u\cup S_v|\choose 2\rho m-|S_u\cup S_v|}
\end{equation*}
vertices $w\in L_{2/\epsilon}$ which are reachable from $v$, and from which we can reach $u$ (here $S_u$ is the set corresponding to vertex $u$, and $S_v$ the set corresponding to $v$). Indeed, there is one such vertex for each set of size $2\rho m$ containing both $S_u$ and $S_v$ as a subset.

Hence, to obtain the value of $p_{v,u}$, it suffices to fix one such vertex $w$, and then count how many ways there are to go from $v$ to $w$, and from $w$ to $u$. Using the calculations from the previous cases, we obtain that 
\begin{equation*}
    p_{v,u} =   {m-|S_u\cup S_v|\choose 2\rho m-|S_u\cup S_v|} \cdot \frac{((j-2/\epsilon)\epsilon \rho m)!((2/\epsilon-i)\epsilon \rho m)!}{(\epsilon \rho m)!^{(j-i)}}\ .
\end{equation*}
For fixed $i$ and $j$, we argue that this expression is decreasing as $|S_u\cup S_v|$ increases. Formally, if we define the function
\begin{equation*}
    \theta(x):={m-x\choose 2\rho m-x}\ ,
\end{equation*}
then 
\begin{equation*}
    \frac{\theta(x+1)}{\theta(x)}=\frac{(2\rho m-x)}{m-x}\le 1\ ,
\end{equation*}
for all $0\le x\le 2\rho m$ (recall that $\rho$ is a small constant).

Hence the worst-case is when $|S_u\cup S_v|$ is as small as possible. We have then 2 subcases.
\paragraph{Subcase 3.1: $|S_u|\ge |S_v|$.} This is the case when $j-2/\epsilon\le 2/\epsilon - i$. In that case, we use that 
\begin{equation*}
    |S_u\cup S_v|\ge |S_u| = 4\rho m-j\epsilon \rho m\ ,
\end{equation*}
and we write 
\begin{align*}
    p_{v,u} &\le    {m-4\rho m+j\epsilon \rho m\choose j\epsilon \rho m-2\rho m} \cdot \frac{((j-2/\epsilon)\epsilon \rho m)!((2/\epsilon-i)\epsilon \rho m)!}{(\epsilon \rho m)!^{(j-i)}}\ .
\end{align*}
Also note that 
\begin{equation*}
    \prod_{k=i}^{j-1}\gamma_k = \left(\prod_{k=i}^{2/\epsilon-1} \frac{(\epsilon \rho m)!}{(\rho m)!^{\epsilon}\cdot {2\rho m\choose \rho m}^{\epsilon}} \right)\cdot \left(\prod_{k=2/\epsilon}^{j-1} \frac{(\epsilon \rho m)!}{(\rho m)!^{\epsilon}}\right)\ . 
\end{equation*}
Therefore, using Stirling's formula, we get
     \begin{align*}
        \frac{p_{v,u}}{\prod_{k=i}^{j-1}1/\gamma_k}&\le  {m-4\rho m+j\epsilon \rho m\choose j\epsilon \rho m-2\rho m} \cdot \frac{((j-2/\epsilon)\epsilon \rho m)!((2/\epsilon-i)\epsilon \rho m)!}{(\epsilon \rho m)!^{(j-i)}} \\
        & \hspace{5cm}\cdot \left(\frac{((\epsilon \rho m)!)^{2/\epsilon-i}}{\left(((\rho m)!){2\rho m\choose \rho m}\right)^{2-i\epsilon}}\right) \cdot \left(\frac{((\epsilon \rho m)!)^{j-2/\epsilon}}{\left(((\rho m)!)\right)^{j\epsilon -2}}\right)\\
        &= {m-4\rho m+j\epsilon \rho m\choose j\epsilon \rho m-2\rho m} \cdot \frac{((j-2/\epsilon)\epsilon \rho m)!((2/\epsilon-i)\epsilon \rho m)!}{((\rho m)!)^{j\epsilon -i\epsilon}{2\rho m\choose \rho m}^{2-i\epsilon}} \\
        &\le  m^{O(1)}\cdot \frac{{m-4\rho m+j\epsilon \rho m\choose j\epsilon \rho m-2\rho m}}{{2\rho m\choose \rho m}^{2-i\epsilon}} \cdot (j\epsilon - 2)^{(j\epsilon -2)\rho m}\cdot (2-i\epsilon)^{(2-i\epsilon)\rho m}
\end{align*}
Let us define $x:=i\epsilon$ and $y:=j\epsilon$. Recall that in the subcase we are working in, we have $1\le x\le 2$, $2\le y\le 3$, and $y\le 4-x$. We also recall that $h(x):=-x\log_2(x)-(1-x)\log_2(1-x)$ is the entropy function of the Bernoulli distribution with parameter $x$. Then we get that 
\begin{align*}
    \log_2 &\left( \frac{p_{v,u}}{\prod_{k=i}^{j-1}1/\gamma_k}\right)/m -O(\log (m)/m)\\
    &\le (1-4\rho +y\rho)h\left(\frac{\rho (y-2)}{1-4\rho +y\rho}\right)-(2-x)2\rho h(1/2)+\rho((y-2)\log_2(y-2)+(2-x)\log_2(2-x))\\
    &= (1-4\rho +y\rho)h\left(\frac{\rho (y-2)}{1-4\rho +y\rho}\right)-(2-x)2\rho+\rho((y-2)\log_2(y-2)+(2-x)\log_2(2-x))\\
    &= (2-x)\rho(\log_2(2-x)-2)+(1-4\rho +y\rho)h\left(\frac{\rho (y-2)}{1-4\rho +y\rho}\right)+\rho(y-2)\log_2(y-2)\ .
\end{align*}
On can compute the partial derivative of the above function in variable $x$ and obtain the following derivative
\begin{equation*}
    \rho \left(2-\frac{1}{\log(2)}\left(1+\log(2-x)\right)\right)\ ,
\end{equation*}
where $\log$ is the natural logarithm. This expression is positive as long as $x>2-\exp\left(2\log(2)-1\right)\approx 0.53$. Hence the above expression is maximized for $x$ as big as possible. Let us further impose the constraint that $x>2-\delta_1$ for some small $\delta_1>0$ which will be fixed later. Then our constraints become that $2-\delta_1 \le x\le 2$, $2\le y\le 4-x <2+\delta_1$. Hence we replace $x=4-y$ (which is the biggest value possible for $x$ given our constraints) and obtain that  

\begin{align*}
     \log_2 &\left( \frac{p_{v,u}}{\prod_{k=i}^{j-1}1/\gamma_k}\right)/m-O(\log(m)/m)\\
    &\le \max_{2\le y\le 2+\delta_1} (y-2)\rho(\log_2(y-2)-2)+(1-4\rho +y\rho)h\left(\frac{\rho (y-2)}{1-4\rho +y\rho}\right)+\rho(y-2)\log_2(y-2)\\
    &= \max_{2\le y\le 2+\delta_1} 2(y-2)\rho(\log_2(y-2)-1)+(1-4\rho +y\rho)h\left(\frac{\rho (y-2)}{1-4\rho +y\rho}\right) =: \max_{2\le y\le 2+\delta_1} f_1(y)\ .
\end{align*}
One can compute $\lim_{y\rightarrow 2^+}f_1(y)=0$ and that
\begin{align*}
    f_1'(y)&=-\frac{\rho}{\log(2)} \left(\log(4)-2+\log\left(\frac{\rho(y-2)}{1+\rho(y-4)}\right)-2\log(y-2)\right)\\
    &= -\frac{\rho}{\log(2)} \left[\log(4)-2+\log\left(\frac{\rho}{(1+\rho(y-4))(y-2)}\right)\right]\ ,
\end{align*}
which is negative for $y$ close to $2$ (we even have $\lim_{y\rightarrow 2^+}f_1'(y)=-\infty$). Hence by continuity, there exists a small $\delta_1>0$ ($\delta_1$ depends only on $\rho$) such that $\max_{2\le y\le 2+\delta_1}f_1(y)=f_1(2)=0$. 

The conclusion of Subcase 3.1 is that as long as $2-\delta_1 \le x\le 2$, $2\le y\le 4-x$, we obtain that 

\begin{equation*}
   \frac{p_{v,u}}{\prod_{k=i}^{j-1}1/\gamma_k} \le 1\ .
\end{equation*}
    
\paragraph{Subcase 3.2: $|S_u|\le |S_v|$.} This is the case when $j-2/\epsilon\ge 2/\epsilon - i$. In that case, we use that 
\begin{equation*}
    |S_u\cup S_v|\ge |S_v| = i\epsilon \rho m\ .
\end{equation*}
Here, the same calculations as in Subcase 3.1 apply, except that we replace the binomial coefficient 
\begin{equation*}
    {m-4\rho m+j\epsilon \rho m\choose j\epsilon \rho m-2\rho m}
\end{equation*} by the binomial coefficient

\begin{equation*}
    {m-i\epsilon \rho m\choose 2\rho m-i\epsilon \rho m}\ .
\end{equation*}
    
We obtain similarly that
\begin{align*}
         \frac{p_{v,u}}{\prod_{k=i}^{j-1}1/\gamma_k}  &\le  m^{O(1)}\cdot \frac{ {m-i\epsilon \rho m\choose 2\rho m-i\epsilon \rho m}}{{2\rho m\choose \rho m}^{2-i\epsilon}} \cdot (j\epsilon - 2)^{(j\epsilon -2)\rho m}\cdot (2-i\epsilon)^{(2-i\epsilon)\rho m}\ ,
\end{align*}
and we write 
    \begin{align*}
          \log_2 &\left( \frac{p_{v,u}}{\prod_{k=i}^{j-1}1/\gamma_k}\right)/m -O(\log (m)/m)\\
          &\le -2\rho (2-x)+(1-\rho x)h\left(\frac{\rho(2-x)}{1-\rho x}\right) + \rho(y-2)\log_2(y-2) + \rho (2-x)\log_2(2-x)\\
          &=(2-x)\rho (\log_2(2-x)-2)+(1-\rho x)h\left(\frac{\rho(2-x)}{1-\rho x}\right) + \rho(y-2)\log_2(y-2):=g(x,y)\ .
    \end{align*}
We compute
\begin{equation*}
    \frac{\partial g}{\partial y} = (\rho/\log(2))\cdot (1+\log(y-2)) 
\end{equation*}
hence 
\begin{equation*}
    \frac{\partial g}{\partial y} \ge 0 \iff y\ge 2+1/e\ .
\end{equation*}
Recall that in this case, we have the constraints that $1\le  x\le 2$, $2\le y\le 3$, and $y\ge 4-x$. Let us assume that $y\le 2+\delta_2$ for some small $\delta_2<1/e$ (this is wlog to prove our lemma, which needs to hold only for a small $\xi>0$). Hence, we need to evaluate the function $g(x,y)$ for $y=4-x$ to obtain an upperbound.

We have  
\begin{align*}
    f_2(x):=g(x,4-x) &= (2-x)(2\rho) (\log_2(2-x)-1)+(1-\rho x)h\left(\frac{\rho(2-x)}{1-\rho x}\right) 
\end{align*}
One can compute $\lim_{x\rightarrow 2^-}f_2(x)=0$ and
\begin{equation*}
    f_2'(x)=\frac{\rho}{\log (2)}\cdot \left[\log(4)-2+ \log\left(\frac{\rho}{(1-\rho x )(2-x)}\right)\right]\ 
\end{equation*}
which is non-negative in some small interval $[\delta_2,2)$ (we even have $\lim_{x\rightarrow 2^-}f_2'(x)=\infty$). By continuity, there exists a small $\delta_2>0$ such that $\max_{2-\delta_2<x\le 2}f_2(x)=0$.

\paragraph{Wrapping-up Case 3.} If we select $\xi=\min(\delta_1,\delta_2)$, we obtain that our upper-bound on the number of paths between $v$ and $u$ holds for any $v\in L_{i\le 2/\epsilon}$, any $u\in L_{j\ge 2/\epsilon}$ such that $|i-j|\le \xi \ell/3$. This concludes the proof of Lemma \ref{lem:helper_lemma_path_hierarchy}.

\bibliographystyle{alpha}
\bibliography{references}

\end{document}